\title[Sorted Top-$k$ in Rounds]{Sorted Top-$k$ in Rounds}
\DeclareMathOperator*{\argmin}{argmin}
\newtheorem{fact}{Fact}[section]
\newtheorem{openproblem}[theorem]{Open Problem}
\def\D{\mathbb{D}}
\def\E{\mathbb{E}}
\def\1{\mathbbm{1}}
\newlength\myindent
\begin{document}



\maketitle

\begin{abstract}
We consider the sorted top-$k$ problem whose goal is to recover the top-$k$ items with the correct order out of $n$ items using pairwise comparisons. In many applications, multiple rounds of interaction can be costly. We restrict our attention to algorithms with a constant number of rounds $r$ and try to minimize the sample complexity, i.e. the number of comparisons.

When the comparisons are noiseless, we characterize how the optimal sample complexity depends on the number of rounds (up to a polylogarithmic factor for general $r$ and up to a constant factor for $r=1$ or 2). In particular, the sample complexity is $\Theta(n^2)$ for $r=1$, $\Theta(n\sqrt{k} + n^{4/3})$ for $r=2$ and $\tilde{\Theta}\left(n^{2/r} k^{(r-1)/r} + n\right)$ for $r \geq 3$. 

We extend our results of sorted top-$k$ to the noisy case where each comparison is correct with probability $2/3$. When $r=1$ or 2, we show that the sample complexity gets an extra $\Theta(\log(k))$ factor when we transition from the noiseless case to the noisy case.

We also prove new results for top-$k$ and sorting in the noisy case. We believe our techniques can be generally useful for understanding the trade-off between round complexities and sample complexities of rank aggregation problems.
\end{abstract}

\begin{keywords}
rank aggregation, sorting, top-$k$ ranking, round complexity, noisy comparisons 
\end{keywords}

\newpage

\section{Introduction}

Rank aggregation is a fundamental problem which finds numerous applications in recommendation systems, web search, social choice, peer grading and crowdsourcing. The most studied problem in rank aggregation is sorting. It aims to find the total ordering of all items. People also consider the top-$k$ problem when it is only necessary to recover the set of top-$k$ items. However, for some applications, the rank aggregation task required is neither sorting nor top-$k$. For example, when a recommendation system shows the user a list of items, it might want to display these items in the order of recommendation. As another example, in a tournament, people usually care about the exact rankings of top players but not others.

These examples motivate us to study the sorted top-$k$ problem which lies between sorting and top-$k$. In this problem, we have $n$ items with an underlying order and the goal is to recover the top-$k$ items with the correct order using pairwise comparisons.  

In many applications, multiple rounds of interaction are costly. For example, if we collect comparison data via crowdsourcing, the comparisons can be done in parallel by different crowd workers and the total amount of time spent is mainly decided by the number of rounds. Therefore we consider the sorted top-$k$ problem in the parallel comparison model introduced by \cite{Valiant75}. In this model, an algorithm performs a set of comparisons in each round and the actual set can depend on comparison results of previous rounds. The goal is to solve the task in bounded number of rounds while minimizing the sample complexity, i.e. the total number of comparisons. 

Parallel comparison algorithms have been intensively studied around 30 years ago. Much of the attention has been put into two problems: sorting \cite{HaggkvistH81, AjtaiKS83,BollobasT83,Kruskal83,Leighton84,BollobasH85,Alon85,AlonAV86,AzarV87,Pippenger87,AlonA88b,AlonA88a,Akl90} and top-$k$ \cite{Valiant75,Reischuk81,AjtaiKSS86,Pippenger87,AlonA88b,AlonA88a,AzarP90,BollobasB90}. While being very related to sorting and top-$k$, sorted top-$k$ has not been studied in the parallel comparison model.

Without the constraint on the number of rounds, sorted top-$k$ can be easily solved by combining sorting and top-$k$ algorithms: we can first use a top-$k$ algorithm to find the set of top-$k$ items and then use a sorting algorithm to sort these $k$ items. If we are using sorting and top-$k$ algorithms with optimal sample complexities, one can easily show that the combination gives a sorted top-$k$ algorithm with optimal sample complexity (up to a constant factor). 

However, if we only have bounded number of rounds, such combining algorithm might not give the optimal sample complexity. The adaptiveness of such combining procedure splits the rounds into rounds used by the top-$k$ algorithm and the rounds used by the sorting algorithm. As we will see later, this is not the optimal way to solve sorted top-$k$ in bounded number of rounds.

In this paper, we show optimal sample complexity bounds (up to poly logarithmic factors) of sorted top-$k$ in $r$ rounds for any constant $r$, as shown in Table \ref{tab:stopk_noiseless}. Our bounds are tight up to constant factors when $r= 1$ or $2$.  

\begin{table}[H]
\label{tab:stopk_noiseless}
\begin{center}
  \begin{tabular}{| c  | c | c| }
    \hline
     Number of Rounds  & Upper Bound & Lower Bound \\ \hline
    1-round & $O(n^2)$ & $\Omega(n^2)$ \\ \hline
    2-round & $O(n \sqrt{k}+ n^{4/3} )$ & $\Omega( n \sqrt{k} +n^{4/3} )$ \\ \hline
    ($r\geq 3$)-round &$O((n^{2/r} k^{(r-1)/r}+n )polylog(n))$ & $\Omega(n^{2/r} k^{(r-1)/r}+n)$ \\ \hline
  \end{tabular}
\end{center}
\caption{Sorted top-$k$ with noiseless comparisons}
\end{table}

We further extend our results to sorted top-$k$ in the noisy case where each comparison is correct with probability $2/3$. This is a very simple and basic noise model. As shown in Table \ref{tab:stopk_noisy}, we get tight bounds (up to a constant factor) when $r=1$ or 2. Compared with the sample complexity in the noiseless case, the sample complexity in the noisy case just has an extra $\log(k)$ factor when $r=1$ or 2. 

\begin{table}[H]
\label{tab:stopk_noisy}
\begin{center}
  \begin{tabular}{| c  | c | c| }
    \hline
     Number of Rounds  & Upper Bound & Lower Bound \\ \hline
    1-round & $O(n^2 \log(k))$ & $\Omega(n^2\log(k))$ \\ \hline
    2-round &$O((n \sqrt{k}+ n^{4/3})\log(k))$ & $\Omega((n \sqrt{k}+ n^{4/3})\log(k))$ \\ \hline
    ($r\geq 3$)-round &$O((n^{2/r} k^{(r-1)/r}+n)polylog(n))$ &$\Omega(n^{2/r} k^{(r-1)/r}+n)$ \\ \hline
  \end{tabular}
\end{center}
\caption{Sorted top-$k$ with noisy comparisons}
\end{table}

Our techniques also give new results for top-$k$ and sorting in the noisy case. Our sorted top-$k$ algorithms are based on our top-$k$ algorithms. For top-$k$ in the noisy case, we show the tight sample complexity bounds are $\Theta(n^2)$ for $r=1$ and $\Theta(n^{4/3})$ for $r=2$. 
On the other hand, sorting is a sub-case of sorted top-$k$ by picking $k=n$.  Our sorted top-$k$ results imply tight bounds for sorting in the noisy case: $\Theta(n^2 \log(n))$ for $r=1$ and $\Theta(n^{3/2} \log(n))$ for $r=2$.

\subsection{Related Work}
Sorted top-$k$ is first discussed in \cite{Chambers71} and is referred as ``partial sorting''.

The parallel comparison model is introduced by \cite{Valiant75}. Two problems which are related to sorted top-$k$ have been widely studied in this model: sorting \cite{HaggkvistH81, AjtaiKS83,BollobasT83,Kruskal83,Leighton84,BollobasH85,Alon85,AlonAV86,AzarV87,Pippenger87,AlonA88b,AlonA88a,Akl90} and selection \cite{Valiant75,Reischuk81,AjtaiKSS86,Pippenger87,AlonA88b,AlonA88a,AzarP90,BollobasB90}.
Selection (i.e. finding the item of rank exactly $k$) has been shown to be very similar to top-$k$ in \cite{BMW16}.

The noisy comparison model was introduced by \cite{FeigeRPU94}. Recently, there are several work studying top-$k$ with noisy comparisons in bounded number of rounds \cite{BMW16,AgarwalAAK17,CohenMM18}. \cite{BMW16} shows that the sample complexity of top-$k$ in the noisy case is $\tilde{\Theta}(n)$ when $r=3$ and $\Theta(n\log(n))$ when $r\geq 4$ and $\max(k,n-k) = \Omega(n)$. \cite{CohenMM18} gives tight sample complexity bound for general $k$ and $r>4$. The sample complexity of top-$k$ in the noisy case for $r=1$ and 2 is not addressed in previous work. 

Rank aggregation with noisy comparisons is also widely studied without the round constraint and with various noise models: \cite{Mathieu07,AilonCN08,BravermanM08,BravermanM09,Ailon11,Jamieson11,LuB11,MakarychevM13,WauthierJJ13,RajkumarA14,ChenS15,ShahW15,MohajerS16,NegahbanOS17,CGMS17,ShahBGW17,SuhTZ17,CLM18}.

\section{Model and Preliminaries}

We consider the sorted top-$k$ problem together with two related problems: sorting and top-$k$. In these problems, there is a set of $n$ items $N$ with an underlying order and the goals are different:
\begin{itemize}
\item \textbf{Sorted top-$k$:} output the sorted list of $k$ items with highest ranks.
\item \textbf{Sorting:} output the ranks of all items.
\item \textbf{Top-$k$:} output the set of $k$ items with highest ranks.
\end{itemize}

Algorithms are allowed to make pairwise comparisons. And we want the algorithm to minimize the sample complexity, i.e. the total number of comparisons. We have two cases, with respect to comparisons: the noiseless case and the noisy case. In the noiseless case, comparisons results are always consistent with the underlying order. In the noisy case, each pairwise comparison is correct (consistent with the underlying order) with some constant probability $>1/2$ independently. Without loss of generality, we assume each comparison is correct with probability $2/3$ independently. 

We consider algorithms with bounded number of rounds. In each round, an algorithm needs to perform all comparisons simultaneously. We use $r$ to denote the number of rounds and we only consider cases when $r$ is a fixed constant. 

We allow algorithms to use randomness. In the noiseless case, the algorithm needs to be always correct and the sample complexity is the expected total number of comparisons. In the noisy case, because of the noise, no algorithms can always be correct. The algorithm needs to be correct with probability $\geq 2/3$ and the sample complexity is the worst-case total number of comparisons. Notice that the requirement in the noiseless case is stronger as the always correct algorithm with expected number of comparisons $s$ can be easily made into an algorithm which succeeds with probability $2/3$ and worst-case number of comparisons $O(s)$ by halting the algorithm when making too many comparisons.

\section{Main Results and Proof Overviews}
In this section, we show our main results and give overviews of our proof techniques.

\subsection{Sorted Top-$k$ in the Noiseless Case}
In this sub-section, we show our results for sorted top-$k$ in the noiseless case. All the detailed discussions and proofs can be found in Section \ref{sec:stopk}.

When we only have 1 round (i.e. $r=1$), it is not hard to show that comparing all pairs of items using $\Theta(n^2)$ comparisons gives optimal sample complexity (up to a constant factor). We formally discuss this in the beginning of Section \ref{sec:stopk}. For $r\geq 2$, we have the following two main theorems for upper and lower bounds.

\begin{theorem}
\label{thm:stopk_ub}
For $r \geq 3$, there exists an $r$-round algorithm that solves sorted top-$k$ with $\tilde{O}(n^{2/r}k^{(r-1)/r}+ n))$ comparisons in expectation. There exists a $2$-round algorithm that solves sorted top-$k$ with $O(n\sqrt{k}+ n^{4/3})$ comparisons in expectation.
\end{theorem}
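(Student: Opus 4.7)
The plan is to design the algorithm by induction on $r$ using a pivoting-and-bucketing template. In round 1 we sample a uniformly random pivot set $P \subseteq N$ of a carefully chosen size $m$ and compare each of the $n$ items against each pivot; this partitions the items into $m+1$ rank-buckets. We then identify the \emph{interior} buckets lying entirely above the top-$k$ threshold, and the single \emph{straddling} bucket containing the $k$-th largest item. The remaining $r-1$ rounds are used to fully sort each interior bucket and to recursively solve a smaller sorted top-$\tilde{k}$ problem on the straddling bucket. All bounds are argued to hold with high probability; the low-probability failure case is handled by running an $O(n^2)$-comparison fallback, contributing negligibly to expected cost.

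For $r = 2$, set $m = \Theta(\sqrt{k} + n^{1/3})$, so round 1 uses $O(nm) = O(n\sqrt{k} + n^{4/3})$ comparisons. A Chernoff bound on pivot ranks shows that each bucket has size $O((n/m)\log n)$ with high probability. In round 2 I exhaustively compare all pairs inside each relevant bucket: there are $O(km/n)$ interior buckets of size $O(n/m)$, contributing $O(kn/m)$ comparisons, and the straddling bucket contributes $O((n/m)^2)$. Plugging in $m$ and checking both the $k \geq n^{2/3}$ and $k \leq n^{2/3}$ regimes shows both sums are $O(n\sqrt{k} + n^{4/3})$.

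For $r \geq 3$, take $m = \Theta(k^{(r-1)/r} n^{-(r-2)/r})$, so that $nm = \Theta(n^{2/r} k^{(r-1)/r})$ and $n/m = \Theta((n^2/k)^{(r-1)/r})$. By the induction hypothesis, sorting an interior bucket of size $n/m$ in $r-1$ rounds costs $\tilde{O}((n/m)^{r/(r-1)})$; summing over the $O(km/n)$ interior buckets yields $\tilde{O}(k(n/m)^{1/(r-1)}) = \tilde{O}(n^{2/r} k^{(r-1)/r})$. The straddling bucket invokes sorted top-$\tilde{k}$ on $n/m$ items in $r-1$ rounds, costing $\tilde{O}((n/m)^{2/(r-1)} \tilde{k}^{(r-2)/(r-1)} + n/m)$ where $\tilde{k} \leq n/m$.

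The main obstacle is bounding the worst-case straddling contribution $\tilde{O}((n/m)^{r/(r-1)}) = \tilde{O}(n^2/k)$, which exceeds the main term exactly when $k < n^{(2r-2)/(2r-1)}$. In this low-$k$ regime the target $\tilde{O}(n^{2/r} k^{(r-1)/r} + n)$ collapses to essentially $\tilde{O}(n)$, and I would handle it by running in parallel a second branch that uses $r-1$ rounds of multi-round top-$k$ selection (yielding a candidate set $C$ of size $O(k\,\text{polylog}(n))$ using $\tilde{O}(n)$ comparisons, via known results) followed by one final round that sorts $C$ with $\tilde{O}(k^2)$ comparisons. Merging the two branches through a single first-round budget split between a coarse pivot set (for the sorting branch) and a finer one (for the top-$k$-selection branch), together with Chernoff concentration of bucket sizes across all recursion layers, closes the induction and yields the claimed $\tilde{O}(n^{2/r} k^{(r-1)/r} + n)$.
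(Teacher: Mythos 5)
Your $r=2$ argument and your $r\geq 3$ argument in the large-$k$ regime match the paper's Algorithm \ref{alg:rsorted1} (random pivots, then bounded-round sorting of the relevant chunks), and you correctly identify the point where the naive recursion breaks: the straddling chunk costs $\tilde{O}((n/m)^{r/(r-1)})=\tilde{O}(n^2/k)$, which exceeds the target exactly when $k<n^{(2r-2)/(2r-1)}$. One repairable issue in the $r=2$ case: to get the clean $O(n\sqrt{k}+n^{4/3})$ bound with no logarithmic factors you cannot afford the ``bucket size $O((n/m)\log n)$ w.h.p.'' route; you need to bound $\E[|N_i|^{1+1/(r-1)}]$ directly via moments of the chunk sizes, as in Lemma \ref{lem:rsorted1}.

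The genuine gap is your fix for the regime $k<n^{(2r-2)/(2r-1)}$. The target does \emph{not} collapse to $\tilde{O}(n)$ there; it collapses only when $k\leq n^{(r-2)/(r-1)}$, leaving an intermediate window (for $r=3$, roughly $n^{1/2}\lesssim k\lesssim n^{4/5}$) where the target $n^{2/r}k^{(r-1)/r}$ sits strictly between $n$ and $k^2$. Your parallel branch --- $(r-1)$ rounds of top-$k$ selection producing a candidate set $C$ of size $\tilde{O}(k)$, then one round of all-pairs sorting of $C$ --- costs $\tilde{O}(n+k^2)$, and $k^2\leq n^{2/r}k^{(r-1)/r}$ only when $k\leq n^{2/(r+1)}$, which does not cover the window; spending $r-2$ rounds on selection and $2$ on sorting fails similarly. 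Moreover, a ``finer random pivot set'' cannot bridge this: random pivots land inside the top-$k$ at rate $k/n$, so subdividing the top-$k$ with uniformly random pivots would require $\omega(n/k)$ of them and $\omega(n^2/k)$ first-round comparisons. The paper's missing ingredient (Algorithm \ref{alg:rsorted2}) is to construct, already in round~1 and in parallel with the coarse random pivots, $\alpha^2$ pivots at prescribed ranks $\approx i k/\alpha^2$ accurate to $\pm\frac{1}{3}\sqrt{n/\alpha}$ using an extension of the approximate-selection routine of \cite{BMW16} with only $\tilde{O}(\alpha n)$ comparisons, and then in round~2 to compare these fine pivots only against the $O(n/\alpha)$ items of the first coarse chunk, yielding chunks of size $O(k/\alpha^2)$ that are sorted in the remaining $r-2$ rounds. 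Without some such one-shot construction of accurate in-top-$k$ pivots, your induction does not close in the intermediate regime.
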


The main idea of the algorithm in Theorem \ref{thm:stopk_ub} is to use ``pivot items''. 
These pivot items are compared to all items. From these comparisons, we learn not only their ranks but also which items rank between two pivot items. After that, items are partitioned into chunks and we just need to solve sub problems inside chunks. See Figure \ref{fig:pivot} for a graphical view of pivot items.
\begin{figure}[H]
\centering
\begin{tikzpicture} [scale = 0.9]
\def\x{15}
\draw [line width = 1mm] (0,0) -- (\x,0);
\foreach \y in {0,1,2,3,4,5,6,7,8,9,10,11,12,13,14,15}
{
\filldraw [fill=red!20!white] (\y,0) circle (0.3cm);
} 

\foreach \y in {3,6,10,14}
{
\filldraw [fill=red!100!white] (\y,0) circle (0.4cm);
}


\def\e{0.1}
\node (a) at (8,-1) {chunk};
\draw[|-,line width = 0.5mm] (6+\e,-1) -- (a);
\draw[-|,line width =0.5mm] (a)  -- (10-\e,-1);


\filldraw [fill=red!100!white] (1,1.5) circle (0.4cm);
\node at (2.6,1.5) {pivot items};
\filldraw [fill=red!20!white] (5,1.5) circle (0.3cm);
\node at (6.6,1.5) {other items};

\node at (14.5,1) {rank order};
\draw[->] (13.5,0.7)--(15.5,0.7);

\end{tikzpicture}
\caption{Pivot items.}
\label{fig:pivot}
\end{figure}
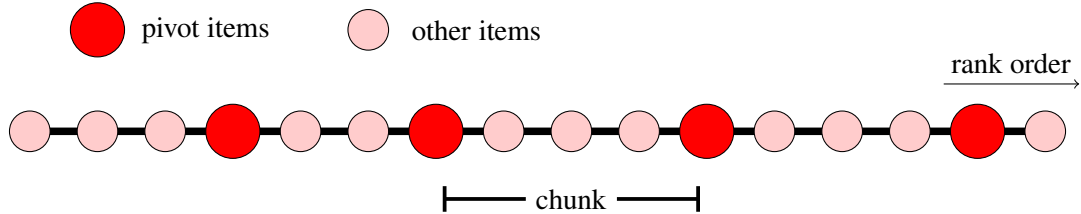

Suppose we plan to use $\Theta(\alpha n)$ comparisons. The most naive way of using pivot items is to pick $\alpha$ pivot items at random in the first round and compare them to all items in the same round. After this round, we will be left with chunks of items partitioned by pivot items. Since now we know the ranks of pivot items, we know which chunks have top-$k$ items and we only need to care about these chunks. We use the remaining $r-1$ round to run the $(r-1)$-round sorting algorithm of \cite{AlonAV86} in each such chunk in parallel. This approach with proper setting of $\alpha$ matches the optimal sample complexity bound (up to a constant factor) if $r=2$ or $k$ is larger than the expected chunk size (i.e. $k = \Omega(n/\alpha)$). It is formally described in Algorithm \ref{alg:rsorted1} in Section \ref{sec:stopk}. See also Figure \ref{fig:pivot1} for a graphical view of the algorithm.

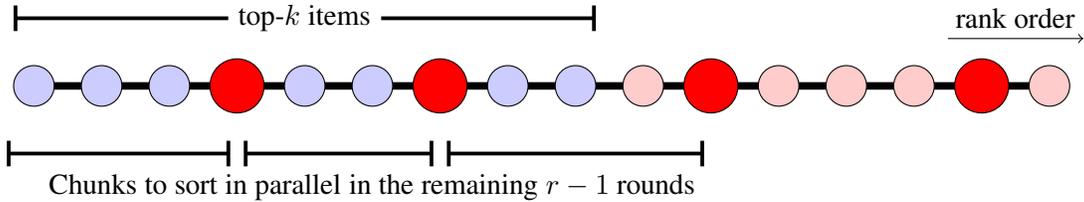
\begin{figure}[H]
\centering
\begin{tikzpicture} [scale = 0.9]
\def\x{15}
\draw [line width = 1mm] (0,0) -- (\x,0);
\foreach \y in {9,10,11,12,13,14,15}
{
\filldraw [fill=red!20!white] (\y,0) circle (0.3cm);
} \foreach \y in {0,1,2,3,4,5,6,7,8}
{
\filldraw [fill=blue!20!white] (\y,0) circle (0.3cm);
} 

\foreach \y in {3,6,10,14}
{
\filldraw [fill=red!100!white] (\y,0) circle (0.4cm);
}

\def\e{0.1}
\node (a) at (4,1) {top-$k$ items};
\draw[|-,line width = 0.5mm] (-0.3,1) -- (a);
\draw[-|,line width =0.5mm] (a)  -- (8.3,1);

\node at (14.5,1) {rank order};
\draw[->] (13.5,0.7)--(15.5,0.7);

\draw[|-|,line width = 0.5mm] (-0.5+\e,-1)--(3-\e,-1);
\draw[|-|,line width = 0.5mm] (3+\e,-1)--(6-\e,-1);
\draw[|-|,line width = 0.5mm] (6+\e,-1)--(10-\e,-1);
\node at (5,-1.5) {Chunks to sort in parallel in the remaining $r-1$ rounds};
\end{tikzpicture}
\caption{Noiseless sorted top-$k$ algorithm when $k$ is large.}
\label{fig:pivot1}
\end{figure}

However, when $r>3$ and $k$ is small enough so that the first chunk is likely to have size much larger than $k$ (see Figure \ref{fig:pivot2}), the above approach becomes sub-optimal. At a high level,  the reason is that the random pivot items chosen in the first round are not good enough to partition the top-$k$ items into small chunks. Therefore, instead of running the sorting algorithm on the first chunk in the remaining $r-1$ rounds, we spend one more round (round 2) to compare items to more ``accurate'' pivot items, partition them into smaller chunks and sort each chunk in the remaining $r-2$ rounds. These new pivot items are better than the random pivot items for two reasons: (i) We can spend some comparisons in the first round to choose these pivot items. So they have better structural guarantees than the random pivot items. In particular, we extend the algorithmic technique of \cite{BMW16} to pick pivot items which are roughly $\Theta\left(\sqrt{\frac{n}{\alpha}}\right)$ apart. (ii) Since these pivot items are compared to other items in the second round, we can use comparison results of the random pivot items. Knowing the fact that all top-$k$ items are in the first chunk partitioned by the random pivot items, we just need to compare new pivot items to items in that chunk. This is important for getting good sample complexity. 
The whole process of this paragraph is formally described in Algorithm \ref{alg:rsorted2} in Section \ref{sec:stopk}. See also Figure \ref{fig:pivot2} for a graphical view of the algorithm.
\begin{figure}[H]
\centering
\begin{tikzpicture}  [scale = 0.9]
\def\x{15}
\draw [line width = 1mm] (0,0) -- (\x,0);
\foreach \y in {2,3,4,5,6,7,8,9,10,11,12,13,14,15}
{
\filldraw [fill=red!20!white] (\y,0) circle (0.3cm);
} \foreach \y in {0,1}
{
\filldraw [fill=blue!20!white] (\y,0) circle (0.3cm);
} 

\foreach \y in {3,6,10,14}
{
\filldraw [fill=red!100!white] (\y,0) circle (0.4cm);
}

\def\e{0.1}
\node (a) at (0.5,1) {top-$k$};
\draw[|-,line width = 0.5mm] (-0.3,1) -- (a);
\draw[-|,line width =0.5mm] (a)  -- (1.3,1);

\node at (14.5,1) {rank order};
\draw[->] (13.5,0.7)--(15.5,0.7);

\draw[|-|,line width = 0.5mm] (-0.5+\e,-1)--(3-\e,-1);
\node at (1.25,-1.5) {First chunk};

\draw [line width = 1mm] (0,-3) -- (\x+0.7,-3);
\foreach \y in {9,11,13,14,15}
{
\filldraw [fill=red!20!white] (\y,-3) circle (0.3cm);
} \foreach \y in {0,1,3,4,6,8}
{
\filldraw [fill=blue!20!white] (\y,-3) circle (0.3cm);
} 
\filldraw [fill=red!100!white] (15,-3) circle (0.4cm);
\foreach \y in {2,5,7,10,12}
{
\filldraw [fill= white] (\y,-3) circle (0.4cm);
\filldraw [pattern = crosshatch, pattern color=red!50!black] (\y,-3) circle (0.4cm);
}

\draw  [dashed, line width = 0.5mm] (-0.5,-0.5)--(3.5,-0.5)--(3.5,0.5)--(-0.5,0.5)--cycle;

\draw  [dashed, line width = 0.5mm] (-0.5,-3.5)--(15.5,-3.5)--(15.5,-2.5)--(-0.5,-2.5)--cycle;

\draw [->, line width = 0.5mm] (2.5,-0.5)--(7.5,-2.5);
\node at (6.5,-1.5) {zoom in};

\filldraw [pattern = crosshatch, pattern color=red!50!black] (9.7,-1.5) circle (0.4cm);
\node at (12.8,-1.5) {: more ``accurate'' pivot items};
\draw[|-|,line width = 0.5mm] (-0.5+\e,-4)--(2-\e,-4);
\draw[|-|,line width = 0.5mm] (2+\e,-4)--(5-\e,-4);
\draw[|-|,line width = 0.5mm] (5+\e,-4)--(7-\e,-4);
\draw[|-|,line width = 0.5mm] (7+\e,-4)--(10-\e,-4);
\node at (5,-4.5) {Chunks to sort in parallel in the remaining $r-2$ rounds};
\end{tikzpicture}

\caption{Noiseless sorted top-$k$ when $k$ is small.}
\label{fig:pivot2}
\end{figure}
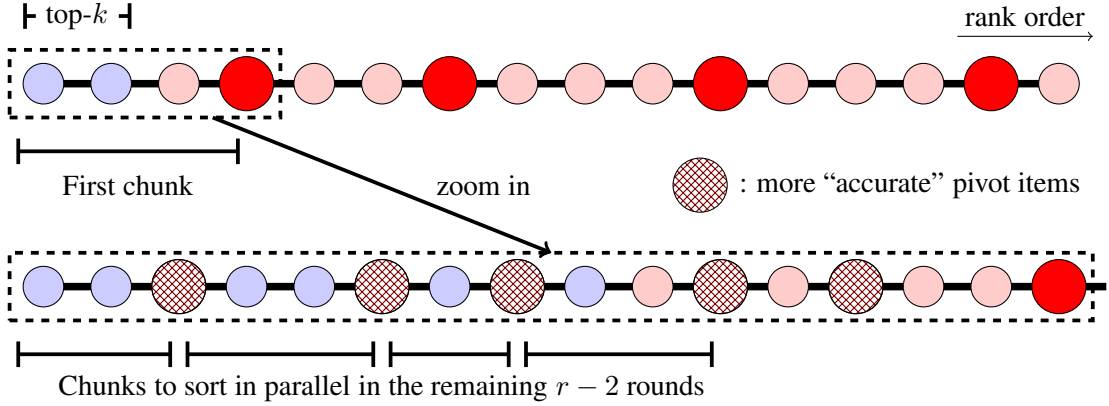

In Section \ref{sec:stopk}, we combine the above two approaches to prove Theorem \ref{thm:stopk_ub}. Both approaches use pivot items and then bounded-round sorting. Although the sample complexity keeps decreasing when we increase the number of rounds, we have at most 2 rounds that are different from sorting no matter how large the total number of rounds is. One may wonder why we don't use more rounds before we apply bounded-round sorting. At a high level, the reason is that what we do before sorting is more similar to a top-$k$ algorithm and more than 3 rounds of interaction do not help much with the sample complexity for top-$k$, e.g. \cite{BMW16} shows a 3-round noiseless top-$k$ algorithm with nearly optimal sample complexity $O(n \cdot polylog(n))$.

\begin{theorem}
\label{thm:stopk_lb}
  For $r \geq 3$, $r$-round algorithm needs $\Omega(n^{2/r} k^{(r-1)/r} +n)$ comparisons in expectation to solve sorted top-$k$. Any $2$-round algorithm needs $\Omega(n \sqrt{k}+ n^{4/3})$ comparisons in expectation to solve sorted top-$k$.
\end{theorem}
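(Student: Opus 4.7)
The plan is to establish the lower bounds via adversary arguments that extend the classical bounded-round sorting and top-$k$ lower bounds. The additive $\Omega(n)$ term is immediate: each item must appear in at least one comparison, else its membership in the top-$k$ cannot be determined.

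For $r=2$, I would prove $\Omega(n^{4/3})$ and $\Omega(n\sqrt{k})$ separately. The $\Omega(n^{4/3})$ piece follows because sorted top-$k$ solves unsorted top-$k$, and the latter has a $2$-round lower bound of $\Omega(n^{4/3})$: given $m_1$ round-$1$ comparisons, Tur\'an's theorem furnishes an independent set $I$ of size $\Omega(n^2/m_1)$ in the round-$1$ graph, into which the adversary places the top-$k$/non-top-$k$ boundary, forcing round~$2$ to query essentially all pairs inside $I$; balancing $m_1$ against $|I|^2$ yields $n^{4/3}$. The $\Omega(n\sqrt{k})$ piece I would prove by a ``sorting inside a hidden $k$-block'' adversary: we plant the top-$k$ in a random $k$-subset $S\subseteq[n]$ and observe that any correct algorithm must in effect sort $S$. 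An averaging argument over the choice of $S$, paired with the classical $2$-round sorting lower bound of $\Omega(k^{3/2})$ on $k$ items, gives that the comparisons usable for distinguishing $S$-items (both those internal to $S$ and those producing transitive chains through items outside $S$) must meet a budget of $\Omega(n\sqrt{k})$.

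For $r\ge 3$ the plan is to iterate the adversary: after round~$1$ with $m_1$ edges, an independent set $I$ of size $s_1=\Omega(n^2/m_1)$ remains, and the adversary places the top-$\min(s_1,k)$ items inside $I$, reducing the problem to a sorted top-$\min(s_1,k)$ instance on the $s_1$-universe $I$ over $r-1$ rounds. Invoking the induction hypothesis on the remaining rounds and optimizing $m_1$ against $\Omega(s_1^{2/(r-1)}\min(s_1,k)^{(r-2)/(r-1)})$ gives a recurrence whose solution is $\Omega(n^{2/r} k^{(r-1)/r})$.

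The principal obstacle is the $\Omega(n\sqrt{k})$ piece: naive averaging overcounts because the algorithm can resolve the $S$-order via transitive chains through items outside $S$. The careful argument must charge these external ``chain'' comparisons against the same $n\sqrt{k}$ budget, which requires simultaneously analyzing both internal $S$-edges and external $S$-adjacent edges. For the $r\ge 3$ case, the induction requires a universe-parametrized form of the lower bound so that each recursive step correctly rebalances $n$ and $k$; verifying that the Tur\'an-based independent set can really serve as the recursive universe, and that the adversary can commit consistently to the non-$S$ portion of the order without leaking information about $S$, is the delicate bookkeeping step.
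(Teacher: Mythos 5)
Your high-level slogan --- ``sorted top-$k$ is at least as hard as sorting $k$ items after a first round that mostly wastes its comparisons'' --- is the right one, and your $\Omega(n^{4/3})$ and $\Omega(n)$ pieces are fine. But two of your main steps have genuine gaps. First, for the $\Omega(n\sqrt{k})$ piece you invoke the \emph{balanced} classical $2$-round sorting bound $\Omega(k^{3/2})$ on $k$ items; that primitive can only ever yield $\Omega(k^{3/2})$, which is far below $n\sqrt{k}$ when $k\ll n$. What is actually needed is the \emph{comparison-parametrized (unbalanced-rounds)} sorting lower bound of Alon--Azar (Theorem 2.1 there, Theorem~\ref{thm:AA88} here): sorting $y$ items in $r$ further rounds after $f$ comparisons already touch them costs $\Omega\bigl(y^{1+1/r}/(f/y)^{1/r}\bigr)$, with cross comparisons to known items counted at weight $1/2$ --- this is also exactly what absorbs the ``transitive chains through items outside $S$'' that you flag as the principal obstacle. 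The paper's argument is: a random top-$k$ set receives only an $O(k^2/n^2)$ fraction of round-one comparisons, so $f=O(m\,k^2/n^2)$, and the parametrized bound then forces $m\cdot (m k^2/n^2)\gtrsim k^3$, i.e.\ $m=\Omega(n\sqrt k)$. Without the $f$-dependence you cannot trade a cheap first round for an expensive second one.

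Second, the Tur\'an-based round-by-round recursion for $r\ge 3$ does not work, for two reasons. Quantitatively, solving your recurrence $m\ge L_{r-1}(n^2/m,\min(n^2/m,k))$ with $L_{r}(n,k)=n^{2/r}k^{(r-1)/r}$ gives $m=\Theta\bigl(n^{4/(r+1)}k^{(r-2)/(r+1)}\bigr)$ in one regime and $m=\Theta\bigl(n^{2r/(2r-1)}\bigr)$ at $k=n$; the latter is $n^{6/5}$ for $r=3$, strictly weaker than the target $n^{4/3}$, and the former exceeds the paper's \emph{upper} bound for, e.g., $r=3$, $k=\sqrt n$ (it gives $n^{9/8}$ versus an $\tilde O(n)$ algorithm), so the reduction step itself must be unsound. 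The unsoundness is real: an independent set $I$ of the round-one graph is not an information-free universe (edges between $I$ and its complement order $I$ transitively), and conditioning all of top-$k$ into $I$ happens only with probability $\approx(|I|/n)^k$, so the sub-instance is neither uniform nor fresh. The paper avoids recursion entirely: it performs at most two rounds of ``dilution'' (round one hits top-$k$, or in the middle range top-$w$ with $w=n^{(2r-2)/r}/k^{(r-1)/r}$, rarely; then, after revealing everything outside the untouched set $W$, round two hits the hidden $K\subseteq W$ only with probability $O(k/w)$), and then applies the parametrized Alon--Azar sorting bound once to the remaining $r-1$ or $r-2$ rounds. That two-level dilution for the middle range of $k$ has no analogue in your plan and is essential to reach the exponent $n^{2/r}k^{(r-1)/r}$.
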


Theorem \ref{thm:stopk_lb} gives matching (up to constant or polylog factors) lower bounds compared to upper bounds in Theorem \ref{thm:stopk_ub}. The start point of the proof is to reduce from top-$k$ or sorting to sorted top-$k$. Indeed, sorted top-$k$ is no easier than sorting $k$ items or finding top-$k$ items over $n$ items. However, this reduction is not good enough to give us tight lower bounds. 

Let us we go back to our sorted top-$k$ algorithm in Theorem \ref{thm:stopk_ub} and compare how it is different from an algorithm which is given the set of top-$k$ items and just sorts these $k$ items. The main difference is that our sorted top-$k$ algorithm spends a big fraction of comparisons in the first one or two rounds on items which are not top-$k$ items. These comparisons are not useful for sorting the top-$k$ items. Moreover, we can show that, without knowing the set of top-$k$ items, not only our algorithm but also any other algorithms will make a good amount of comparisons outside top-$k$ in the first one or two rounds. For example, it is not hard show that in expectation at most $O(k^2/n^2)$ fraction of first-round comparisons are between two items in top-$k$. The argument for the second round is more complicated. 

This is the critical point of our proof. Now we know that sorted top-$k$ is no easier than sorting $k$ items with unbalanced number of comparisons in rounds (fewer comparisons in the first one or two rounds). In the rest of proof, we adapt the lower bound of bounded-round sorting (Theorem 2.1 of \cite{AlonA88b}) to our unbalanced setting. For details, see Section \ref{sec:stopk_lb}.

\subsection{Warm-up: Top-1 in the Noisy Case}
Now we proceed to the noisy case. First of all, one could easily adapt a noiseless algorithm into the noisy case by repeating each comparison $\Theta(\log(n))$ times and use union bound in the analysis. So the interesting question here is whether the sample complexity gets an extra $\Theta(\log(n))$ factor or not or something in-between, when we transition from the noiseless case to the noisy case.


In this sub-section, we show an 1-round algorithm for finding top-1 in the noisy case with $O(n^2)$ comparisons. The sample complexity does not get a $\Theta(\log(n))$ blow-up in the noisy case. This algorithm is simpler than and different from our 1-round algorithms for top-$k$ and sorted top-$k$ in the noisy case. We offer it here as a warm-up for the noisy case. 

Without loss of generality, we assume $n$ is a power of 2. If $n$ is not a power of 2, we could add fewer than $n$ dummy items to make the total number of items a power of 2. This only increase the number of comparisons by a constant factor.

In Algorithm \ref{alg:1top1}, we show our main recursive procedure of finding the top-1 item in some set $S$ of size $s$. We will show by induction in Lemma \ref{lem:top1} that it uses $O(s^2\log(1/\delta))$ comparisons and succeeds with probability at least $1-\delta$. If we run $FindMax(N, n, 1/9)$, we will get an algorithm for finding top-1 within $n$ items with probability at least $8/9$ in the noisy case using $O(n^2)$ comparisons. Notice that although the procedure is defined recursively, no pair of items in an comparison depends on other comparisons' results. So all the comparisons can be done in 1-round.

This recursive procedure basically partitions set $S$ into two set $S_1$ and $S_2$ of equal sizes and then find the max in each set recursively: item $i^*$ and item $j^*$. After that it compares $i^*$ and $j^*$ some times to find the max of this two. In order to make all comparisons in 1 round, we actually compare all pairs of items $(i,j)$ for $i\in S_1$ and $j\in S_2$.

In order to make this recursive procedure to succeed with probability $1-\delta$, we want that the following three steps all succeed with probability $1-\delta/3$ and we take a union bound: (1) finding the max of $S_1$: item $i^*$ (2) finding the max of $S_2$: item $j^*$ (3) finding the max of $i^*$ and $j^*$. As you will see in the proof, the critical point of the argument is to show that the growth in the success probability (from $1-\delta$ to $1-\delta/3$) has much less effect on the sample complexity compared with the decrease of the set size (from $|S|$ to $|S_1|=|S_2|=|S|/2$).

\begin{algorithm}[t]
    \caption{FindMax($S$, $s$, $\delta$)}
    	\label{alg:1top1}
    \begin{algorithmic}[1]
	\IF {$s = 1$}
		\STATE Return the single item in $S$.
	\ELSE
    	\STATE Partition $S$ arbitrarily into set $S_1$ and $S_2$ of equal sizes, i.e. $|S_1|=|S_2| = s/2$, $S_1 \cap S_2 =\emptyset$ and $S_1 \cup S_2 = S$.  
		\STATE For each item $i \in S_1$ and $j \in S_2$, compare them $100 \log(1/\delta)$ times.
		\STATE $i^* \leftarrow \text{FindMax}(S_1,s/2,\delta/3)$.
		\STATE $j^* \leftarrow \text{FindMax}(S_2, s/2, \delta/3)$.
		\STATE Return $i^*$ if item $i^*$ wins the majority of comparisons between item $i^*$ and item $j^*$. Return $j^*$ otherwise.
	\ENDIF
    \end{algorithmic}
\end{algorithm}

\begin{lemma}
\label{lem:top1}
Let $\delta \leq 1/9$. FindMax($S$, $s$, $\delta$) and its descendants use at most $100n^2\log(1/\delta))$ comparisons. FindMax($S$, $s$, $\delta$) succeeds to output the top-1 in $S$ with probability at least $1-\delta$.
\end{lemma}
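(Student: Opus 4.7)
The plan is to prove both parts of the lemma simultaneously by induction on $s$. The base case $s=1$ is immediate: no comparisons are made, and the unique element of $S$ is returned, which is trivially the top-1.

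For the inductive step, let $C(s,\delta)$ denote the number of comparisons used by FindMax$(S,s,\delta)$ together with all of its recursive descendants. At the top level, step 3 makes $|S_1|\cdot|S_2| = s^2/4$ pairwise comparisons, each repeated $100\log(1/\delta)$ times, then steps 6 and 7 recurse with failure parameter $\delta/3$ on sets of size $s/2$, yielding
\[ C(s,\delta) \;\leq\; 25\,s^2\log(1/\delta) \;+\; 2\,C(s/2,\delta/3). \]
Plugging in the inductive hypothesis $C(s/2,\delta/3)\leq 100(s/2)^2\log(3/\delta) = 25\,s^2\log(3/\delta)$ and using $\log(3/\delta)=\log(1/\delta)+\log 3$, the target bound $C(s,\delta)\leq 100\,s^2\log(1/\delta)$ follows precisely when $\log(1/\delta)\geq 2\log 3$, i.e.\ $\delta\leq 1/9$, which is exactly the hypothesis. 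Note that the recursive failure parameter $\delta/3\leq 1/27$ also satisfies the hypothesis, so the induction goes through.

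For correctness, let $i^\dagger$ be the true maximum of $S_1$ and $j^\dagger$ the true maximum of $S_2$. The output can fail only via (a) the recursive call on $S_1$ returning some $i^*\neq i^\dagger$, (b) the recursive call on $S_2$ returning some $j^*\neq j^\dagger$, or (c) both recursive calls succeed (so $\{i^*,j^*\}=\{i^\dagger,j^\dagger\}$ and the overall max is one of them) but the majority of the $100\log(1/\delta)$ comparisons between $i^\dagger$ and $j^\dagger$ points to the wrong one. Events (a) and (b) each have probability at most $\delta/3$ by the induction hypothesis. For (c), Hoeffding's inequality gives majority-vote error at most $\exp\!\bigl(-2\cdot 100\log(1/\delta)\cdot(1/6)^2\bigr) = \delta^{50/9}$, which is at most $\delta/3$ for every $\delta\leq 1/9$. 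A union bound yields overall failure probability at most $\delta$.

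A subtle point worth flagging is that step 3 uses only pairs in $S_1\times S_2$, while the two recursive calls use pairs inside $S_1$ and inside $S_2$ respectively; these three pair sets are disjoint, so the corresponding noisy outcomes are independent. In particular, the Hoeffding bound for (c) remains valid conditionally on (a)$^c\cap$ (b)$^c$, even though $i^*$ and $j^*$ are produced by the recursive calls. The main obstacle is really the constant calibration: the condition $\delta\leq 1/9$ has to simultaneously absorb (i) the additive $\log 3$ blow-up each time $\delta$ is replaced by $\delta/3$ in the recursion, keeping the sample complexity at $O(s^2\log(1/\delta))$ rather than $O(s^2\log^2(1/\delta))$, and (ii) the Hoeffding tail on (c) coming down to $\delta/3$ with only $100\log(1/\delta)$ repetitions. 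Both requirements collapse to $\log(1/\delta)\geq 2\log 3$, which is why the threshold $1/9$ appears in the hypothesis.
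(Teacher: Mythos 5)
Your proof is correct and follows essentially the same route as the paper's: induction on the (power-of-two) size $s$, the same recurrence $75\,s^2\log(1/\delta)+50\,s^2\log 3\leq 100\,s^2\log(1/\delta)$ pinned down by $\delta\leq 1/9$, and a union bound over the two recursive calls and the majority vote (you use Hoeffding where the paper invokes Chernoff, but both give a tail well below $\delta/3$). Your explicit remark on the disjointness/independence of the three comparison sets is a point the paper leaves implicit, but it does not change the argument.
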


\begin{proof}
We know that $|S| = s$ is always a power of 2, i.e. $s= 2^t$. We prove the lemma by induction on $t$. The base case $t = 0$ is trivial. 

Let's assume the lemma is true for $t-1$, let's consider the case for $t$. By induction hypothesis, we know the number of comparisons in $\text{FindMax}(S_1,s/2,\delta/3)$ and its descendants is at most $100 (s/2)^2 \log(3/\delta)$. Same for  $\text{FindMax}(S_2,s/2,\delta/3)$. Therefore, the total number of comparisons used by FindMax($S$, $s$, $\delta$) and its descendants is 
\begin{align*}
&2 \cdot 100 (s/2)^2 \log(3/\delta) + 100 \log(1/\delta) \cdot (s/2)^2 \\
= &100 \cdot \frac{s^2}{4} \cdot \log(1/\delta) \left(2 +\frac{ 2\log(3)}{\log(1/\delta)} +1\right)\\
 \leq& 100 s^2 \log(1/\delta). 
\end{align*}

In the case that $i^*$ is the top-1 of $S_1$, $j^*$ is the top-1 of $S_2$ and the majority of comparisons between $i^*$ and $j^*$ is consistent with their true ordering, FindMax($S$, $s$, $\delta$) succeeds to output the top-1 in $S$. By induction hypothesis, each of the first two events happens with probability at least $1-\delta/3$. By Chernoff bound, the third event happens with probability at least $1- \exp\left((1/4)^2 \cdot (1/2) \cdot (2/3) \cdot 100 \log(1/\delta) \right) \geq 1-\delta /3$. Therefore, by union bound,  FindMax($S$, $s$, $\delta$) succeeds to output the top-1 in $S$ with probability at least $1-\delta$.
\end{proof}

\subsection{Top-$k$ in the Noisy Case}
In this sub-section, we discuss top-$k$ in the noisy case. All the detailed discussions and proofs about this sub-section can be found in Section \ref{sec:topk_n}.

As discussed in the related work, top-$k$ in the noisy case has been studied in prior work when $r \geq 3$. Nothing is known when $r =1$ or $2$. On the other hand, if we go back to the noiseless case, it has been shown in prior work that the sample complexity of top-$k$ is $\Theta(n^2)$ for $r=1$ and $\Theta(n^{4/3})$ for $r=2$.

We show top-$k$ algorithms in the noisy case in Theorem \ref{thm:topk_n} for $r=1$ or $2$. These upper bounds are tight up to a constant factor as they even match the lower bounds in the noiseless case. In other words, for top-$k$ in 1 round or 2 rounds, the sample complexity does not get an extra $\Theta(\log(n))$ factor when we go from the noiseless case to the noisy case. 

\begin{theorem}
\label{thm:topk_n}
For top-$k$ in the noisy case, there is an 1-round algorithm with sample complexity $O(n^2)$ and a 2-round algorithm with sample complexity $O(n^{4/3})$. 
\end{theorem}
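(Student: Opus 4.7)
My plan is to handle the two round counts separately. For $r=1$, the target is to push the FindMax idea of Lemma~\ref{lem:top1} beyond $k=1$, using a non-adaptive recursive comparison schedule whose total size stays at $O(n^2)$. For $r=2$, the target is a two-phase pivot-based algorithm in which the first round localizes ranks roughly and the second round resolves only a small borderline set.

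\emph{1-round case.} I plan to define a non-adaptive recursive procedure FindTopK$(S, s, k, \delta)$ generalizing FindMax. When $0 < k < s$, it splits $S$ into halves $S_1, S_2$, compares every cross pair $(i,j) \in S_1 \times S_2$ using $\Theta(\log(1/\delta))$ preloaded comparisons, and conceptually recurses on each half with target $\min(k, s/2)$ and confidence $\delta/3$. Because the algorithm is non-adaptive, the recursion is just a convenient description of a single-round schedule: every pair in $S$ gets compared at the unique depth where it first lands in different halves, using $\Theta(\log(3^\ell/\delta))$ repetitions at depth $\ell$. The same recurrence as in Lemma~\ref{lem:top1} then bounds the total sample complexity by $O(s^2 \log(1/\delta))$, which for the fixed confidence $\delta = 1/9$ is $O(n^2)$.

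For correctness, I would argue inductively that the candidate sets $A_1 \subseteq S_1$ and $A_2 \subseteq S_2$ produced by the conceptual recursive calls are the correct top-$\min(k, s/2)$ sets, each with failure probability at most $\delta/3$. Then the true top-$k$ of $S$ is contained in $A_1 \cup A_2$, a set of size at most $2k$. Using the within-half cross comparisons scheduled at deeper recursion levels, together with the current-level cross comparisons between $A_1$ and $A_2$, every pair in $A_1 \cup A_2$ is compared at its own split depth. A depth-weighted Chernoff union bound then shows that with probability at least $1 - \delta/3$ every needed majority is correct, from which the top-$k$ follows deterministically.

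\emph{2-round case.} In round 1, I would partition $N$ into $\Theta(n^{2/3})$ disjoint blocks of size $\Theta(n^{1/3})$ and run the 1-round top-$k$ algorithm above inside each block, using $\Theta((n^{1/3})^2)$ comparisons per block and $\Theta(n^{4/3})$ overall. In parallel, I would compare $\Theta(n^{1/3})$ random pivots against the remaining items via a FindMax-style non-adaptive schedule so that the total stays at $\Theta(n^{4/3})$ while each pivot's position relative to every other item is correctly determined with high probability. The pivot outcomes then localize every item's rank to an interval of length $O(n^{2/3})$, so with constant probability the borderline set $B$ of items whose interval contains rank $k$ has size $O(n^{2/3})$. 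Round 2 runs the 1-round algorithm on $B$, using $O(|B|^2) = O(n^{4/3})$ comparisons, and outputs the true top-$k$.

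\emph{Main obstacle.} The central difficulty is the 1-round merge step, which has no analog in the top-$1$ proof: because $|A_1 \cup A_2|$ can grow with $k$, the merge requires many pairwise majorities to be correct simultaneously, and a naive analysis would cost an extra $\log k$ factor. The proof must verify that this factor is absorbed into the geometry of the recursion --- the per-pair repetition count $\Theta(\log(3^\ell/\delta))$ grows linearly in depth $\ell$ while the relevant pair count at depth $\ell$ is at most $\Theta(\min(k, s/2^\ell)^2)$, and these balance so that each level's union-bound failure remains below $\delta/3^\ell$. A secondary obstacle is showing that round-1 rank localization in the 2-round algorithm actually delivers $|B| = O(n^{2/3})$ with constant probability, which I expect to follow from a Chernoff argument on the pivot comparisons together with a guarantee that the pivots are well distributed.
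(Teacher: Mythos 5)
There is a genuine gap in the $1$-round plan, and it is exactly at the merge step you flag as the main obstacle. The balance you claim does not hold quantitatively. At depth $\ell$ your budget gives each cross pair $\Theta(\log(3^\ell/\delta)) = \Theta(\ell + \log(1/\delta))$ repetitions, so each majority is wrong with probability $\exp(-\Theta(\ell+\log(1/\delta)))$; but the merge at that depth must get $\Theta(\min(k, n/2^{\ell})^2)$ majorities simultaneously right, and a union bound over that many events each failing with probability $\mathrm{poly}(\delta)\cdot 3^{-\Theta(\ell)}$ is vacuous at shallow depths: at the root with $k=\Theta(n)$ you have $\Theta(n^2)$ pairs each compared $O(\log(1/\delta))=O(1)$ times, so a constant \emph{fraction} of the majorities are wrong. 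Forcing them all to be correct would require $\Theta(\log k)$ repetitions per pair, i.e.\ $\Theta(n^2\log k)$ comparisons --- and indeed the paper's Lemma \ref{lem:1stopk_noisy_lb} shows $\Omega(n^2\log k)$ is the genuine cost of the stronger ``all adjacent top-$k$ majorities correct'' guarantee (that is the sorted top-$k$ bound). No reweighting of the recursion can absorb this, because the $\ell=0$ term alone already contributes $n^2\log k$. The paper's Algorithm \ref{alg:1topk} avoids the problem by never trying to resolve all pairs near the top: it uses $\log^*(n)$ scales of random pivots, with $n/l_i^2$ pivots at scale $i$ each compared $\Theta(l_i)$ times ($l_i$ the $i$-fold iterated logarithm), so precision is spent only on the shrinking window of $O(l_{i-1}^3)$ items straddling rank $k$, and $\sum_i 1/l_i = O(1)$ keeps the total at $O(n^2)$.

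The $2$-round plan has a second gap of the same flavor: you assert that $\Theta(n^{1/3})$ random pivots can have ``each pivot's position relative to every other item correctly determined with high probability'' within $\Theta(n^{4/3})$ comparisons. That is $n^{4/3}$ pivot--item pairs, and getting all of them right with high probability requires $\Omega(\log n)$ repetitions each, i.e.\ $\Omega(n^{4/3}\log n)$ comparisons; a ``FindMax-style schedule'' does not help because FindMax never certifies individual pairwise relations. Without this, your borderline set $B$ is not $O(n^{2/3})$: with $O(1)$ repetitions a constant fraction of items are misordered relative to any single pivot. The paper's fix (Algorithm \ref{alg:2topk1}) is to accept $O(1)$ noisy comparisons per pivot--item pair, sort only the pivots with $\Theta(\log n)$ repetitions, and place each item by the argmax of a signed counter over the sorted pivots; a biased-random-walk argument (Lemma \ref{lem:brw}) shows an item lands $t$ chunks from its true chunk with probability $e^{-\Omega(t)}$, and the second round is a version of the $1$-round algorithm made robust to this approximate placement. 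You would need an idea of this kind (or another way to tolerate constant per-pair error) before either round of your construction goes through.
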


Our 1-round algorithm starts by the simple idea of comparing two items $\Theta(\log(n))$ times. The majority of these comparison is consistent with the true ordering with probability $1 - 1/poly(n)$. By taking a union bound later in the analysis, $\Theta(\log(n))$ noisy comparisons between the same pair of two items can be considered as one noiseless comparison between them.

Since we plan to repeat each comparison $\Theta(\log(n))$ times and we have only $O(n^2)$ comparisons, we cannot compare all pairs of items. So we use pivot items again. We pick $\Theta(n/\log(n))$ pivot items at random and compare them to all items 
$\Theta(\log(n))$ times. We can partition items into chunks. For items rank before or after the chunk which contains the $k$-th item, we can easily classify them as in the top-$k$ or outside top-$k$. For items inside this chunk, we don't know which ones are in top-$k$. Since each chunk has $\Theta(\log(n))$ items in expectation, the number of such items is $\Theta(\log(n))$ in expectation.

How do we deal with these $\Theta(\log(n))$ items? We use more random pivot items. We pick $\Theta(n/\log\log(n))$ random pivot items and further partition items into chunks of size $\Theta(\log\log(n))$ in expectation. We call these new pivot items as second-level pivot items and previous pivot items as first-level pivot items (see Figure \ref{fig:topk1}). Here comes to the critical point of the argument: since second-level pivot items are only used to partition $\Theta(\log(n))$ items and in the analysis we are taking union bound over $polylog(n)$ events, we don't need to repeat the comparison between each pair $\Theta(\log(n))$ times. Instead, we just need to repeat each comparison $\Theta(\log\log(n))$ times. And our total number comparisons will still be $O(n^2)$. 

Finally we generalize this idea to have $\log^*(n)$ levels of pivot items and we can classify all items into top-$k$ or bottom-$(n-k)$. Moreover, although these pivot items are divided into different levels, they are all chosen at random and compared to all items. So all the comparisons can be placed in a single round. The whole algorithm is formally described in Algorithm \ref{alg:1topk} in Section \ref{sec:topk_n}. See also Figure \ref{fig:topk1} for a graphical view of the algorithm.


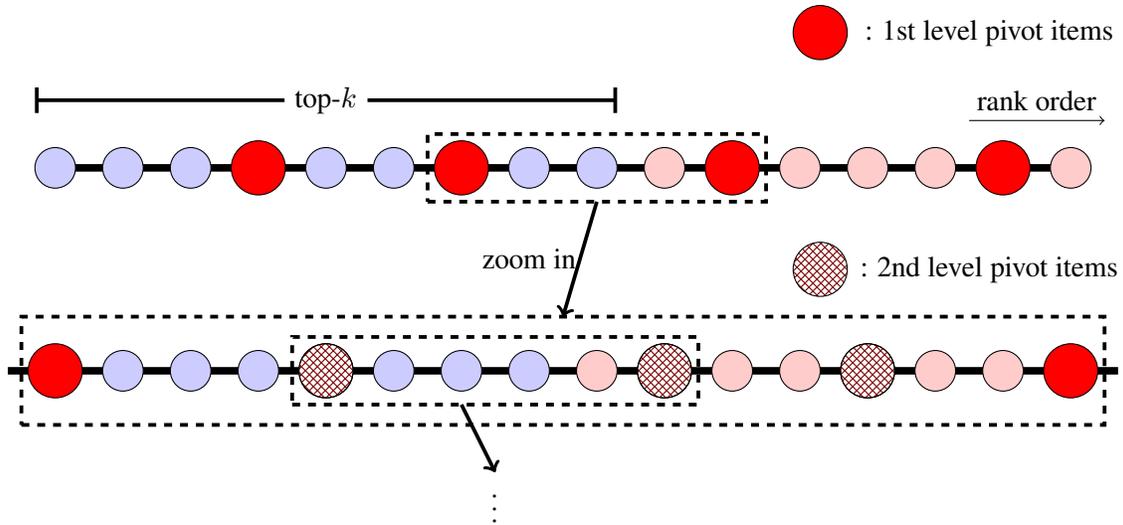
\begin{figure}[H]
\centering
\begin{tikzpicture}  [scale = 0.9]
\def\x{15}
\draw [line width = 1mm] (0,0) -- (\x,0);
\foreach \y in {9,10,11,12,13,14,15}
{
\filldraw [fill=red!20!white] (\y,0) circle (0.3cm);
} \foreach \y in {0,1,2,3,4,5,6,7,8}
{
\filldraw [fill=blue!20!white] (\y,0) circle (0.3cm);
} 

\foreach \y in {3,6,10,14}
{
\filldraw [fill=red!100!white] (\y,0) circle (0.4cm);
}

\def\e{0.1}
\node (a) at (4,1) {top-$k$};
\draw[|-,line width = 0.5mm] (-0.3,1) -- (a);
\draw[-|,line width =0.5mm] (a)  -- (8.3,1);

\node at (14.5,1) {rank order};
\draw[->] (13.5,0.7)--(15.5,0.7);


\draw [line width = 1mm] (-0.7,-3) -- (\x+0.7,-3);
\foreach \y in {8,9,10,11,12,13,14,15}
{
\filldraw [fill=red!20!white] (\y,-3) circle (0.3cm);
} \foreach \y in {0,1,2,3,4,5,6,7}
{
\filldraw [fill=blue!20!white] (\y,-3) circle (0.3cm);
} 
\filldraw [fill=red!100!white] (15,-3) circle (0.4cm);
\filldraw [fill=red!100!white] (0,-3) circle (0.4cm);
\foreach \y in {4,9,12}
{
\filldraw [fill= white] (\y,-3) circle (0.4cm);
\filldraw [pattern = crosshatch, pattern color=red!50!black] (\y,-3) circle (0.4cm);
}

\draw  [dashed, line width = 0.5mm] (5.5,-0.5)--(10.5,-0.5)--(10.5,0.5)--(5.5,0.5)--cycle;

\draw  [dashed, line width = 0.5mm] (-0.5,-3.8)--(15.5,-3.8)--(15.5,-2.2)--(-0.5,-2.2)--cycle;
\draw  [dashed, line width = 0.5mm] (3.5,-3.5)--(9.5,-3.5)--(9.5,-2.5)--(3.5,-2.5)--cycle;
\draw [->, line width = 0.5mm] (8,-0.5)--(7.5,-2.2);
\draw [->, line width = 0.5mm] (6,-3.5)--(6.5,-4.5);
\node [rotate = 90] at (6.5,-5) {$\cdots$};
\node at (7,-1.35) {zoom in};

\filldraw [pattern = crosshatch, pattern color=red!50!black] (11.3,-1.5) circle (0.4cm);
\node at (13.8,-1.5) {: 2nd level pivot items};

\filldraw [fill=red!100!white] (11.3,2) circle (0.4cm);
\node at (13.8,2) {: 1st level pivot items};
\end{tikzpicture}

\caption{The noisy 1-round top-$k$ algorithm.}
\label{fig:topk1}
\end{figure}

Now we proceed to describing our 2-round top-$k$ algorithm in the noisy case. It is the most sophisticated algorithm in this paper. We are going use $O(n^{4/3})$ comparisons. 

It would be good to first understand the 2-round top-$k$ algorithm in the noiseless case with $O(n^{4/3})$ comparisons. The idea is quite simple: we pick $n^{1/3}$ random pivot items in the first round and partition items into chunks of size $\Theta(n^{2/3})$ in expectation. And in the second round, we just need to focus on the chunk containing the $k$-th item. It has size $\Theta(n^{2/3})$ in expectation and we can just compare all pairs of items in this chunk.

Now in the noisy case, how do we still use only $O(n^{4/3})$ comparisons to find top-$k$ in 2 rounds? Repeating each comparison $\Theta(\log(n))$ times does not seem to work since it reduces the number of random pivot items to $\Theta(n^{1/3} / \log(n))$ and we will leave a chunk of too many items (i.e. $\Theta(n^{2/3} \log(n))$ items) to the second round. 

In our 2-round algorithm, we still use $\Theta(n^{1/3})$ random pivot items in the first round. We can only compare them to all items constant times as we only have $O(n^{4/3})$ comparisons in total. We partition items into chunks as following (see also Figure \ref{fig:topk2} for a graphical view). We first put pivot items in the order of their ranks. We get this order correctly with probability $1-1/poly(n)$ after the first round by having $\Theta(\log(n))$ comparison between each pair of pivot items in parallel with other comparisons. For each item $i$, we keep a counter and compare it to pivot items one by one. The counter is increased by one if the pivot item wins the majority of comparisons with item $i$ and decreased by one otherwise. In the end, we put the item into the chunk next to the pivot item where its counter reaches its maximum. The analysis of this process is similar to a biased random walk. Notice that item $i$'s counter has higher chance of increasing before it reaches its actual chunk and it has higher chance of decreasing after it reaches its actual chunk. We can show that although we may fail to put item $i$ into its actual chunk, the probability it is placed $l$ chunks away from its actual chunk can be bounded by $\exp(-\Omega(l))$.


\begin{figure}[H]
\centering
\begin{tikzpicture}  [scale = 0.9]
\def\x{15}
\draw [line width = 1mm] (0,0) -- (\x,0);

\filldraw [fill=red!20!white] (0,1) circle (0.3cm);

\filldraw [fill=red!20!white] (8,0) circle (0.3cm);
\foreach \y in {3,6,10,13}
{
\filldraw [fill=red!100!white] (\y,0) circle (0.4cm);
}

\def\e{0.1}

\node at (14.5,1) {rank order};
\draw[->] (13.5,0.7)--(15.5,0.7);


\draw [->, line width = 0.5mm] (0.3,1) to [out=20,in=120] (2.9,0.4);

\draw [->, line width = 0.5mm] (3.1,0.4) to [out=60,in=120] (5.9,0.4);

\draw [->, line width = 0.5mm] (6.1,0.4) to [out=60,in=120] (9.9,0.4);

\draw [->, line width = 0.5mm] (10.1,0.4) to [out=60,in=120] (12.9,0.4);
\node at (3,1) {\huge $\color{green}\checkmark$};
\node at (6,1) {\huge $\color{green}\checkmark$};
\node at (10,1) {\huge $\mathbin{\tikz [x=1.4ex,y=1.4ex,line width=.2ex, red] \draw (0,0) -- (1,1) (0,1) -- (1,0);}$};
\node at (13,1) {\huge $\mathbin{\tikz [x=1.4ex,y=1.4ex,line width=.2ex, red] \draw (0,0) -- (1,1) (0,1) -- (1,0);}$};

\node at (3,1.7) {1};
\node at (6,1.7) {2};
\node at (10,1.7) {1};
\node at (13,1.7) {0};

\node at (0.5,1.7) {counter:};
\end{tikzpicture}

\caption{The first round of the noisy 2-round top-$k$ algorithm.}
\label{fig:topk2}
\end{figure}
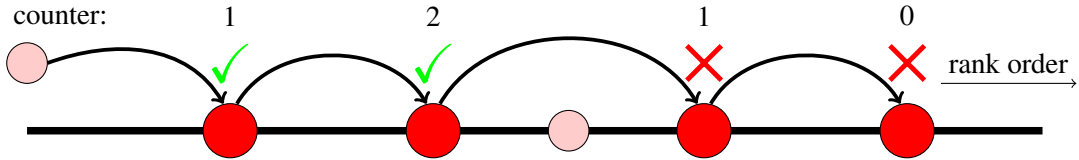

After the first round, we partition items into chunks of size $\Theta(n^{2/3})$ in expectation. As discussed above, this partition is not perfectly correct but items won't be placed too far away from their actual chunks. If the partition is perfectly correct, then we can directly use our previous 1-round top-$k$ algorithm described above as a blackbox to deal with the chunk containing the $k$-th item. But since the partition is not perfectly correct, we have to modify the 1-round algorithm to use in the second round of our 2-round algorithm. The whole algorithm is formally described in Algorithm \ref{alg:2topk} in Section \ref{sec:topk_n}.

\subsection{Sorted Top-$k$ in the Noisy Case}
In this sub-section, we show our results for sorted top-$k$ in the noisy case. All the detailed discussions and proofs can be found in Section \ref{sec:stopk_n}.  In the noisy case for round number $r \geq 3$, as described in the previous sub-section, we can adapt our noiseless algorithm into a noisy algorithm with sample complexity $O((n^{2/r} k^{(r-1)/r}+n)polylog(n))$. For $r= 1,2$, we show tight (up to a constant factor) sample complexity bounds in Theorem \ref{thm:stopk_n}.

\begin{theorem}
\label{thm:stopk_n}
The sample complexity of sorted top-$k$ in the noisy case is $\Theta(n^2 \log(k))$ for $r=1$ and $\Theta((n\sqrt{k} + n^{4/3}) \log(k))$ for $r=2$. 
\end{theorem}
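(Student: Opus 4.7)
The plan is to establish matching upper and lower bounds separately for $r=1$ and $r=2$, building on the noiseless sorted top-$k$ bounds of Theorems~\ref{thm:stopk_ub}--\ref{thm:stopk_lb}, the noisy top-$k$ algorithms of Theorem~\ref{thm:topk_n}, and the counter-based chunk-assignment technique from the noisy 2-round top-$k$ algorithm.

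For the upper bound at $r=1$, I would compare every pair $\Theta(\log k)$ times and, in parallel, run the $O(n^2)$-comparison 1-round noisy top-$k$ algorithm of Theorem~\ref{thm:topk_n} to identify the top-$k$ set; once that set is known, each of the $\binom{k}{2}$ pairs within it has its majority vote correct with probability $1-1/k^{\Omega(1)}$, so a union bound recovers the sorted order at total cost $O(n^2\log k)$. For $r=2$ I would use a pivot-based algorithm with a case split on $k$. When $k\le n^{2/3}$ (target $O(n^{4/3}\log k)$), take $\Theta(n^{1/3})$ random pivots in round 1 with enough repetitions to place each item in its correct chunk with high probability (cost $O(n^{4/3}\log k)$, where the counter-based placement of Theorem~\ref{thm:topk_n} handles very small $k$), and in round 2 apply the 1-round scheme above inside the $O(n^{2/3})$-sized region containing the top-$k$, at cost $O(n^{4/3}\log k)$. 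When $k>n^{2/3}$ (target $O(n\sqrt{k}\log k)$), take $\Theta(\sqrt{k})$ pivots with $\Theta(\log k)$ repetitions in round 1, and in round 2 sort each of the $O(k^{3/2}/n)$ chunks containing a top-$k$ item by all-pairs comparison $\Theta(\log k)$ times; concentration of the random pivot gaps bounds the round-2 cost by $O(n\sqrt{k}\log k)$.

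For the lower bound, Theorem~\ref{thm:stopk_lb} already yields the $n^2$ and $n\sqrt{k}+n^{4/3}$ factors, so it remains to multiply by $\Omega(\log k)$. The central step is a per-pair bound showing that any correct algorithm must allocate $\Omega(\log k)$ comparisons to every pair. For any pair $(a,b)$ I would consider the adversarial instance in which $\{a,b\}$ together with $k-2$ other fixed items form the top-$k$ and the internal order of this set is a uniform random permutation; restricted to this family, the algorithm must solve 1-round noisy sorting on $k$ items, which requires $\Omega(\log k)$ comparisons of every pair (including $(a,b)$) via the standard Fano/Assouad information-theoretic bound for $r=1$ noisy sorting. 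Since $(a,b)$ is arbitrary, this forces $\Omega(\log k)$ comparisons on each of the $\binom{n}{2}$ pairs, yielding $\Omega(n^2\log k)$ for $r=1$. For $r=2$, I would combine this per-pair argument with the unbalanced-sorting lower bound from the proof of Theorem~\ref{thm:stopk_lb}; the noiseless proof shows that only an $O(k^2/n^2)$ fraction of round-1 comparisons can concentrate inside the top-$k$, and the same counting carries through with the extra $\Omega(\log k)$ per-pair factor.

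The main obstacle will be carrying out the per-pair argument in the adaptive $r=2$ setting, since the algorithm is free to choose round-2 comparisons based on the round-1 transcript and could attempt to concentrate its round-2 budget on specific pairs already inferred to be inside the top-$k$. I plan to handle this by conditioning on the round-1 transcript and applying the Fano/Assouad argument to the conditional round-2 allocation, using the symmetry of the input distribution to show that the round-1 transcript leaks only a small amount of information about which pairs will need the extra $\log k$ budget; this parallels the way the noiseless proof of Theorem~\ref{thm:stopk_lb} controls round-1 concentration inside the top-$k$ and is what I expect to be the most delicate part of the argument.
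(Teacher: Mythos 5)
Your upper bounds are essentially fine: the $r=1$ algorithm is exactly the paper's (compare all pairs $\Theta(\log k)$ times while running the 1-round noisy top-$k$ algorithm in parallel), and for $r=2$ with $k$ not tiny, $\log k=\Theta(\log n)$ so your pivot scheme collapses to the paper's "repeat the noiseless 2-round algorithm $\Theta(\log n)$ times." For tiny $k$ your route through the counter-based chunk assignment is workable in principle but forces you to re-prove the misplacement tail bounds that make the 2-round top-$k$ algorithm the hardest piece of the paper; the paper instead uses a much simpler tournament (Algorithm~\ref{alg:2stopk_noisy}): partition $N$ into $n^{2/3}$ random groups of size $n^{1/3}$, find the top-$1$ of each group in round 1, and run the 1-round sorted top-$k$ algorithm on the $n^{2/3}$ winners in round 2, correctness following because for $k<n^{1/10}$ no two top-$k$ items share a group w.h.p.

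The lower bound, however, has a genuine gap. Your central claim --- that correctness forces $\Omega(\log k)$ comparisons on \emph{every} pair, via a Fano/Assouad argument on the family where $(a,b)$ sit inside the top-$k$ --- is false. Under the uniform prior on the internal order of the top-$k$, a fixed pair $(a,b)$ is adjacent with probability only $\Theta(1/k)$, and since two adjacent items have identical win probabilities against every third item, only direct $a$-vs-$b$ comparisons carry information about their relative order; thus leaving one pair (or even $\Theta(k^{0.4})$ pairs) completely uncompared costs the algorithm only $o(1)$ in success probability, and no per-pair lower bound can hold. What is actually needed, and what the paper proves in Lemma~\ref{lem:1stopk_noisy_lb}, is an \emph{aggregate} statement: if the total budget is below $\frac{1}{100}n^2\log k$, then with probability at least $1/2$ over the random permutation, at least $k/3$ of the consecutively-ranked pairs inside the top-$k$ are each compared at most $\log(k)/2$ times; each such swap distorts the transcript likelihood by a factor at most $2^{\log(k)/2}=\sqrt{k}$, and because the number of available swaps ($k/3$) dominates this likelihood-ratio blow-up ($\sqrt{k}$), an averaging argument over the swapped permutations forces failure probability above $1/3$. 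The requirement that the \emph{count} of under-sampled adjacent pairs beat the likelihood ratio is precisely what your per-pair formulation omits. The same defect propagates to your $r=2$ sketch: there the paper (Lemma~\ref{lem:2stopk_noisy_lb}) must show, conditioned on the round-1 transcript and the coarse chunk structure, that $k^{1/4}$ consecutively-ranked pairs remain under-compared even after the adaptive second round, which it does via a generalized Chernoff bound over only negatively-correlated indicators --- considerably more than "the same counting carries through with the extra $\Omega(\log k)$ per-pair factor."
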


Both of our sorted top-$k$ algorithms in Theorem \ref{thm:stopk_n} are based on our top-$k$ algorithms. Our 1-round sorted top-$k$ algorithm is relatively simple given our 1-round top-$k$ algorithm. We just compare all pairs $\Theta(\log(k))$ times and also runs the 1-round algorithm for top-$k$ of Theorem \ref{thm:topk_n} in the same round. After we make the comparisons, we learn the set of top-$k$ items. The majority of $\Theta(\log(k))$ comparisons between each pair is consistent to the actual rankings with probability $1-1/poly(k)$. Since we only focus on $k$ items, we can take the union bound to show our algorithm is correct with probability at least $2/3$. For details, see Algorithm \ref{alg:1stopk_noisy}.


Interestingly, although sorted top-$k$ is no easier than top-$k$, getting tight bounds of sorted top-$k$ could be easier. Our 2-round sorted top-$k$ algorithm is much simpler than our 2-round top-$k$ algorithm, and it only depends on our 1-round top-$k$ algorithm. When $k$ is not tiny ($\geq n^{1/10}$), since $\log(k) = \Theta(\log(n))$, we just use the sorted top-$k$ algorithm in the noiseless case (Theorem \ref{thm:stopk_ub}) and repeat each comparison $\Theta(\log(n))$ times. When $k$ is tiny  ($< n^{1/10}$), we partition all items into random groups; find the top-1 of each group in the first round (using the 1-round top-$k$ algorithm of Theorem \ref{thm:topk_n}) and then find the sorted top-$k$ of all these top-1's in the second round. For details, see Algorithm \ref{alg:2stopk_noisy}.


Now we start to describe how we prove the matching lower bounds. We start with the 1-round lower bound. The main idea of the lower bound is to show that if an algorithm does not make enough comparisons in one round, there must exist $\Omega(k)$ pairs of items who have consecutive ranks and are in top-$k$, such that they are compared fewer than $\log(k)/2$ times. 
For any one such pair of items, if we just swap their ranks, the order of items in top-$k$ changes and we can show that the chance of seeing the same comparison result would at most decrease by a factor of $2^{\log(k)/2}$. As long as the number of such pairs is much larger than this factor, we can show that an 1-round algorithm with not enough comparisons outputs incorrectly with large probability. For details, see Lemma \ref{lem:1stopk_noisy_lb}.

For the 2-round lower bound, we still want to show that if an algorithm does not make enough comparisons in two rounds, there must exist enough pairs of items who have consecutive ranks and are in top-$k$, such that they are compared fewer than $c\cdot \log(k)$ times for some small constant $c$. The proof is more complicated as the a 2-round algorithms have adaptiveness, i.e. which items are compared in the second round depend on the comparison results of the first round. 
The main idea of the proof is to show that bounded amount of first round comparisons won't be helpful to figure out which items are consecutively ranked. 
We explain proof steps in the case when $k > n^{2/3}$ here. We divide top-$k$ items into chunks of size $\frac{n}{\sqrt{k}}$. We show that after the first round, in a typical chunk, constant fraction of items are compared to any items in the same chunk fewer than $0.1\log(k)$ times. We can then show that, given the first round comparison results, there are $\Omega(n\sqrt{k})$ pairs of items who could be a consecutively ranked pair with not small chance. As the algorithm has $O(n \sqrt{k} \log(k))$ (with a small enough constant factor) comparisons, we can conclude the algorithm could miss to compare many consecutively ranked pairs $c\cdot \log(k)$ times. The rest of the argument is similar to the 1-round lower bound. For details, see Lemma \ref{lem:2stopk_noisy_lb}.

\section{Conclusion and Open Problems}
In this paper, we characterize the optimal trade-off between the sample complexity and the round complexity of sorted top-$k$ in both the noiseless case and the noisy case. For a fixed number of rounds, our sample complexity bound is tight up to a polylogarithmic factor.

When $r =1$ or $2$, we can make our sample complexity bound of sorted top-$k$ tight up to a constant factor. We extend these results to top-$k$ and sorting. These bounds also allow us to study the blow up in the sample complexity when we transition from the noiseless case to the noisy case. Interestingly, for $r=1$ or $2$, this blow up is different in different rank aggregation problems: $\Theta(1)$ in top-$k$, $\Theta(\log(k))$ in sorted top-$k$ and $\Theta(\log(n))$ in sorting. 

There are mainly two obstacles to getting tighter bounds for top-$k$, sorting and sorted top-$k$ when we have more than 2 rounds. 
We list them as open problems here. The first one is that we don't have tight (up to a constant factor) sample complexity bounds even in the noiseless case. 
\begin{openproblem}
Get tight (up to a constant factor) sample complexity bounds for the noiseless case when $r > 2$.
\end{openproblem}
In particular, the first step is to consider 3-round top-$k$ in the noiseless case. \cite{BMW16} shows its sample complexity is $O(n \cdot polylog(n))$. \cite{BollobasB90} shows that no 3-round algorithm with $\Theta(n)$ comparisons can find top-$k$ correctly with probability $1-o(1)$. If we only want to succeed with constant probability (for example $2/3$), the best lower bound is the trivial one: $\Omega(n)$.

Once we have a good understanding of the noiseless case, we can start to think about the noisy case for $r > 2$.
\begin{openproblem}
Extend our techniques for $r=1$ or $2$ in the noisy case to the case when we have more than 2 rounds. 
\end{openproblem}
In the noisy case, our 2-round bounds are very different and more complicated compared to 1-round bounds. Even if we have tight bounds in the noiseless case, getting tight bounds for more than 2 rounds could be more difficult and might require new techniques.

\acks{We would like to thank Claire Mathieu for earlier discussions of this problem.}

\bibliography{references}

\begin{thebibliography}{42}
\providecommand{\natexlab}[1]{#1}
\providecommand{\url}[1]{\texttt{#1}}
\expandafter\ifx\csname urlstyle\endcsname\relax
  \providecommand{\doi}[1]{doi: #1}\else
  \providecommand{\doi}{doi: \begingroup \urlstyle{rm}\Url}\fi

\bibitem[Agarwal et~al.(2017)Agarwal, Agarwal, Assadi, and
  Khanna]{AgarwalAAK17}
Arpit Agarwal, Shivani Agarwal, Sepehr Assadi, and Sanjeev Khanna.
\newblock Learning with limited rounds of adaptivity: Coin tossing, multi-armed
  bandits, and ranking from pairwise comparisons.
\newblock In \emph{Proceedings of the 30th Conference on Learning Theory,
  {COLT} 2017, Amsterdam, The Netherlands, 7-10 July 2017}, pages 39--75, 2017.

\bibitem[Ailon(2011)]{Ailon11}
N.~Ailon.
\newblock Active learning ranking from pairwise preferences with almost optimal
  query complexity.
\newblock In \emph{Advances in Neural Information Processing Systems}, 2011.

\bibitem[Ailon et~al.(2008)Ailon, Charikar, and Newman]{AilonCN08}
N~Ailon, M.~Charikar, and A.~Newman.
\newblock Aggregating inconsistent information: ranking and clustering.
\newblock \emph{Journal of the ACM}, 55\penalty0 (5):\penalty0 23:1--23:27,
  2008.

\bibitem[Ajtai et~al.(1986)Ajtai, Komlos, Steiger, and Szemeredi]{AjtaiKSS86}
M~Ajtai, J~Komlos, W~L Steiger, and E~Szemeredi.
\newblock Deterministic selection in o(loglog n) parallel time.
\newblock In \emph{Proceedings of the Eighteenth Annual ACM Symposium on Theory
  of Computing}, STOC '86, pages 188--195, New York, NY, USA, 1986. ACM.
\newblock ISBN 0-89791-193-8.
\newblock \doi{10.1145/12130.12149}.
\newblock URL \url{http://doi.acm.org/10.1145/12130.12149}.

\bibitem[Ajtai et~al.(1983)Ajtai, Koml{\'{o}}s, and Szemer{\'{e}}di]{AjtaiKS83}
Mikl{\'{o}}s Ajtai, J{\'{a}}nos Koml{\'{o}}s, and Endre Szemer{\'{e}}di.
\newblock An o(n log n) sorting network.
\newblock In \emph{Proceedings of the 15th Annual {ACM} Symposium on Theory of
  Computing, 25-27 April, 1983, Boston, Massachusetts, {USA}}, pages 1--9,
  1983.
\newblock \doi{10.1145/800061.808726}.
\newblock URL \url{http://doi.acm.org/10.1145/800061.808726}.

\bibitem[Akl(1990)]{Akl90}
Selim~G. Akl.
\newblock \emph{Parallel Sorting Algorithms}.
\newblock Academic Press, Inc., Orlando, FL, USA, 1990.
\newblock ISBN 0120476800.

\bibitem[Alon(1985)]{Alon85}
Noga Alon.
\newblock Expanders, sorting in rounds and superconcentrators of limited depth.
\newblock In \emph{Proceedings of the 17th Annual {ACM} Symposium on Theory of
  Computing, May 6-8, 1985, Providence, Rhode Island, {USA}}, pages 98--102,
  1985.
\newblock \doi{10.1145/22145.22156}.
\newblock URL \url{http://doi.acm.org/10.1145/22145.22156}.

\bibitem[Alon and Azar(1988{\natexlab{a}})]{AlonA88a}
Noga Alon and Yossi Azar.
\newblock Sorting, approximate sorting, and searching in rounds.
\newblock \emph{{SIAM} J. Discrete Math.}, 1\penalty0 (3):\penalty0 269--280,
  1988{\natexlab{a}}.
\newblock \doi{10.1137/0401028}.
\newblock URL \url{http://dx.doi.org/10.1137/0401028}.

\bibitem[Alon and Azar(1988{\natexlab{b}})]{AlonA88b}
Noga Alon and Yossi Azar.
\newblock The average complexity of deterministic and randomized parallel
  comparison-sorting algorithms.
\newblock \emph{{SIAM} J. Comput.}, 17\penalty0 (6):\penalty0 1178--1192,
  1988{\natexlab{b}}.
\newblock \doi{10.1137/0217074}.
\newblock URL \url{http://dx.doi.org/10.1137/0217074}.

\bibitem[Alon et~al.(1986)Alon, Azar, and Vishkin]{AlonAV86}
Noga Alon, Yossi Azar, and Uzi Vishkin.
\newblock Tight complexity bounds for parallel comparison sorting.
\newblock In \emph{27th Annual Symposium on Foundations of Computer Science,
  Toronto, Canada, 27-29 October 1986}, pages 502--510, 1986.
\newblock \doi{10.1109/SFCS.1986.57}.
\newblock URL \url{http://dx.doi.org/10.1109/SFCS.1986.57}.

\bibitem[Azar and Pippenger(1990)]{AzarP90}
Yossi Azar and Nicholas Pippenger.
\newblock Parallel selection.
\newblock \emph{Discrete Applied Mathematics}, 27\penalty0 (1-2):\penalty0
  49--58, 1990.
\newblock \doi{10.1016/0166-218X(90)90128-Y}.
\newblock URL \url{https://doi.org/10.1016/0166-218X(90)90128-Y}.

\bibitem[Azar and Vishkin(1987)]{AzarV87}
Yossi Azar and Uzi Vishkin.
\newblock Tight comparison bounds on the complexity of parallel sorting.
\newblock \emph{{SIAM} J. Comput.}, 16\penalty0 (3):\penalty0 458--464, 1987.
\newblock \doi{10.1137/0216032}.
\newblock URL \url{https://doi.org/10.1137/0216032}.

\bibitem[Bollob{\'{a}}s and Brightwell(1990)]{BollobasB90}
B{\'{e}}la Bollob{\'{a}}s and Graham Brightwell.
\newblock Parallel selection with high probability.
\newblock \emph{{SIAM} J. Discrete Math.}, 3\penalty0 (1):\penalty0 21--31,
  1990.
\newblock \doi{10.1137/0403003}.
\newblock URL \url{http://dx.doi.org/10.1137/0403003}.

\bibitem[Bollob{\'{a}}s and Hell(1985)]{BollobasH85}
B{\'{e}}la Bollob{\'{a}}s and Pavol Hell.
\newblock Sorting and graphs.
\newblock In Ivan Rival, editor, \emph{Graphs and Order}, volume 147 of
  \emph{NATO ASI Series}, pages 169--184. Springer Netherlands, 1985.
\newblock ISBN 978-94-010-8848-0.
\newblock \doi{10.1007/978-94-009-5315-4_5}.
\newblock URL \url{http://dx.doi.org/10.1007/978-94-009-5315-4_5}.

\bibitem[Bollob{\'{a}}s and Thomason(1983)]{BollobasT83}
B{\'{e}}la Bollob{\'{a}}s and Andrew Thomason.
\newblock Parallel sorting.
\newblock \emph{Discrete Applied Mathematics}, 6\penalty0 (1):\penalty0 1 --
  11, 1983.
\newblock ISSN 0166-218X.
\newblock \doi{http://dx.doi.org/10.1016/0166-218X(83)90095-1}.
\newblock URL
  \url{http://www.sciencedirect.com/science/article/pii/0166218X83900951}.

\bibitem[Braverman and Mossel(2008)]{BravermanM08}
Mark Braverman and Elchanan Mossel.
\newblock Noisy sorting without resampling.
\newblock In \emph{Proceedings of the Nineteenth Annual ACM-SIAM Symposium on
  Discrete Algorithms}, SODA '08, pages 268--276, Philadelphia, PA, USA, 2008.
  Society for Industrial and Applied Mathematics.
\newblock URL \url{http://dl.acm.org/citation.cfm?id=1347082.1347112}.

\bibitem[Braverman and Mossel(2009)]{BravermanM09}
Mark Braverman and Elchanan Mossel.
\newblock Sorting from noisy information.
\newblock \emph{CoRR}, abs/0910.1191, 2009.
\newblock URL \url{http://arxiv.org/abs/0910.1191}.

\bibitem[Braverman et~al.(2016)Braverman, Mao, and Weinberg]{BMW16}
Mark Braverman, Jieming Mao, and S.~Matthew Weinberg.
\newblock Parallel algorithms for select and partition with noisy comparisons.
\newblock In \emph{Proceedings of the Forty-eighth Annual ACM Symposium on
  Theory of Computing}, STOC '16, pages 851--862, New York, NY, USA, 2016. ACM.
\newblock ISBN 978-1-4503-4132-5.
\newblock \doi{10.1145/2897518.2897642}.
\newblock URL \url{http://doi.acm.org/10.1145/2897518.2897642}.

\bibitem[Chambers(1971)]{Chambers71}
J.~M. Chambers.
\newblock Algorithm 410: Partial sorting.
\newblock \emph{Commun. ACM}, 14\penalty0 (5):\penalty0 357--358, May 1971.
\newblock ISSN 0001-0782.
\newblock \doi{10.1145/362588.362602}.
\newblock URL \url{http://doi.acm.org/10.1145/362588.362602}.

\bibitem[Chen et~al.(2017)Chen, Gopi, Mao, and Schneider]{CGMS17}
X.~Chen, S.~Gopi, J.~Mao, and J.~Schneider.
\newblock Competitive analysis of the top-$k$ ranking problem.
\newblock In \emph{Proceedings of ACM-SIAM Symposium on Discrete Algorithms
  (SODA)}, 2017.

\bibitem[Chen et~al.(2018)Chen, Li, and Mao]{CLM18}
Xi~Chen, Yuanzhi Li, and Jieming Mao.
\newblock A nearly instance optimal algorithm for top-k ranking under the
  multinomial logit model.
\newblock In \emph{Proceedings of the Twenty-Eighth Annual {ACM-SIAM} Symposium
  on Discrete Algorithms, {SODA}}, 2018.

\bibitem[Chen and Suh(2015)]{ChenS15}
Yuxin Chen and Changho Suh.
\newblock Spectral {MLE:} top-k rank aggregation from pairwise comparisons.
\newblock In \emph{Proceedings of the 32nd International Conference on Machine
  Learning, {ICML} 2015, Lille, France, 6-11 July 2015}, pages 371--380, 2015.
\newblock URL \url{http://jmlr.org/proceedings/papers/v37/chena15.html}.

\bibitem[Cohen-Addad et~al.(2018)Cohen-Addad, Mallmann-Trenn, and
  Mathieu]{CohenMM18}
Vincent Cohen-Addad, Frederik Mallmann-Trenn, and Claire Mathieu.
\newblock Instance-optimality in the noisy value-and comparison-model ---
  accept, accept, strong accept: Which papers get in?
\newblock arXiv preprint arXiv:1806.08182, 2018.

\bibitem[Feige et~al.(1994)Feige, Raghavan, Peleg, and Upfal]{FeigeRPU94}
Uriel Feige, Prabhakar Raghavan, David Peleg, and Eli Upfal.
\newblock Computing with noisy information.
\newblock \emph{{SIAM} J. Comput.}, 23\penalty0 (5):\penalty0 1001--1018, 1994.
\newblock \doi{10.1137/S0097539791195877}.
\newblock URL \url{https://doi.org/10.1137/S0097539791195877}.

\bibitem[H{\"{a}}ggkvist and Hell(1981)]{HaggkvistH81}
Roland H{\"{a}}ggkvist and Pavol Hell.
\newblock Parallel sorting with constant time for comparisons.
\newblock \emph{{SIAM} J. Comput.}, 10\penalty0 (3):\penalty0 465--472, 1981.
\newblock \doi{10.1137/0210034}.
\newblock URL \url{http://dx.doi.org/10.1137/0210034}.

\bibitem[Jamieson and Nowak(2011)]{Jamieson11}
K.~Jamieson and R.~Nowak.
\newblock Active ranking using pairwise comparisons.
\newblock In \emph{Advances in Neural Information Processing Systems}, 2011.

\bibitem[Kenyon-Mathieu and Schudy(2007)]{Mathieu07}
C.~Kenyon-Mathieu and W.~Schudy.
\newblock How to rank with few errors.
\newblock In \emph{Proceedings of the Symposium on Theory of computing (STOC)},
  2007.

\bibitem[Kruskal(1983)]{Kruskal83}
Clyde~P. Kruskal.
\newblock Searching, merging, and sorting in parallel computation.
\newblock \emph{{IEEE} Trans. Computers}, 32\penalty0 (10):\penalty0 942--946,
  1983.
\newblock \doi{10.1109/TC.1983.1676138}.
\newblock URL \url{https://doi.org/10.1109/TC.1983.1676138}.

\bibitem[Leighton(1984)]{Leighton84}
Frank~Thomson Leighton.
\newblock Tight bounds on the complexity of parallel sorting.
\newblock In \emph{Proceedings of the 16th Annual {ACM} Symposium on Theory of
  Computing, April 30 - May 2, 1984, Washington, DC, {USA}}, pages 71--80,
  1984.
\newblock \doi{10.1145/800057.808667}.
\newblock URL \url{http://doi.acm.org/10.1145/800057.808667}.

\bibitem[Lu and Boutilier(2011)]{LuB11}
T.~Lu and C.~Boutilier.
\newblock Learning mallows models with pairwise preferences.
\newblock In \emph{Proceedings of the International Conference on Machine
  Learning (ICML)}, 2011.

\bibitem[Makarychev et~al.(2013)Makarychev, Makarychev, and
  Vijayaraghavan]{MakarychevM13}
Konstantin Makarychev, Yury Makarychev, and Aravindan Vijayaraghavan.
\newblock Sorting noisy data with partial information.
\newblock In \emph{Proceedings of the 4th Conference on Innovations in
  Theoretical Computer Science}, ITCS '13, pages 515--528, New York, NY, USA,
  2013. ACM.
\newblock ISBN 978-1-4503-1859-4.
\newblock \doi{10.1145/2422436.2422492}.
\newblock URL \url{http://doi.acm.org/10.1145/2422436.2422492}.

\bibitem[Mohajer and Suh(2016)]{MohajerS16}
S.~Mohajer and C.~Suh.
\newblock Active top-k ranking from noisy comparisons.
\newblock In \emph{Proceedings of the 54th Annual Allerton Conference on
  Communication, Control, and Computing (Allerton)}, 2016.

\bibitem[Negahban et~al.(2017)Negahban, Oh, and Shah]{NegahbanOS17}
S.~Negahban, S.~Oh, and D.~Shah.
\newblock Rank centrality: Ranking from pair-wise comparisons.
\newblock \emph{Operations Research}, 65\penalty0 (1):\penalty0 266--287, 2017.

\bibitem[Panconesi and Srinivasan(1997)]{PS97}
Alessandro Panconesi and Aravind Srinivasan.
\newblock Randomized distributed edge coloring via an extension of the
  chernoff--hoeffding bounds.
\newblock \emph{SIAM J. Comput.}, 26\penalty0 (2):\penalty0 350--368, April
  1997.
\newblock ISSN 0097-5397.
\newblock \doi{10.1137/S0097539793250767}.
\newblock URL \url{https://doi.org/10.1137/S0097539793250767}.

\bibitem[Pippenger(1987)]{Pippenger87}
Nicholas Pippenger.
\newblock Sorting and selecting in rounds.
\newblock \emph{{SIAM} J. Comput.}, 16\penalty0 (6):\penalty0 1032--1038, 1987.
\newblock \doi{10.1137/0216066}.
\newblock URL \url{https://doi.org/10.1137/0216066}.

\bibitem[Rajkumar and Agarwal(2014)]{RajkumarA14}
A.~Rajkumar and S.~Agarwal.
\newblock A statistical convergence perspective of algorithms for rank
  aggregation from pairwise data.
\newblock In \emph{Proceedings of the International Conference on Machine
  Learning (ICML)}, 2014.

\bibitem[Reischuk(1981)]{Reischuk81}
R{\"{u}}diger Reischuk.
\newblock A fast probabilistic parallel sorting algorithm.
\newblock In \emph{22nd Annual Symposium on Foundations of Computer Science,
  Nashville, Tennessee, USA, 28-30 October 1981}, pages 212--219, 1981.
\newblock \doi{10.1109/SFCS.1981.6}.
\newblock URL \url{http://dx.doi.org/10.1109/SFCS.1981.6}.

\bibitem[Shah and Wainwright(2015)]{ShahW15}
N.~B. Shah and M.~Wainwright.
\newblock Simple, robust and optimal ranking from pairwise comparisons.
\newblock arXiv preprint arXiv:1512.08949, 2015.

\bibitem[Shah et~al.(2017)Shah, Balakrishnan, Guntuboyina, and
  Wainright]{ShahBGW17}
N.~B. Shah, S.~Balakrishnan, A.~Guntuboyina, and M.~J. Wainright.
\newblock Stochastically transitive models for pairwise comparisons:
  Statistical and computational issues.
\newblock \emph{IEEE Transactions on Information Theory}, 63\penalty0
  (2):\penalty0 934--959, 2017.

\bibitem[Suh et~al.(2017)Suh, Tan, and Zhao]{SuhTZ17}
C.~Suh, V.~Tan, and R.~Zhao.
\newblock Adversarial top-$k$ ranking.
\newblock \emph{IEEE Transactions on Information Theory}, 63\penalty0
  (4):\penalty0 2201--2225, 2017.

\bibitem[Valiant(1975)]{Valiant75}
Leslie~G. Valiant.
\newblock Parallelism in comparison problems.
\newblock \emph{{SIAM} J. Comput.}, 4\penalty0 (3):\penalty0 348--355, 1975.
\newblock \doi{10.1137/0204030}.
\newblock URL \url{http://dx.doi.org/10.1137/0204030}.

\bibitem[Wauthier et~al.(2013)Wauthier, M.Jordan, and Jojic]{WauthierJJ13}
F.~Wauthier, M.Jordan, and N.~Jojic.
\newblock Efficient ranking from pairwise comparisons.
\newblock In \emph{Proceedings of the International Conference on Machine
  Learning (ICML)}, 2013.

\end{thebibliography}
\appendix
\section{Sorted Top-$k$ in the Noiseless Case}
\label{sec:stopk}

In this section, we show upper and lower bounds on the sample complexity for solving sorted top-$k$ in the noiseless case. 

First of all, it's easy to observe that the sample complexity for solving sorted top-$k$ in 1 round is $\Theta(n^2)$. For the upper bound, we just need to compare all pairs (there are $\binom{n}{2}$ of them). For the lower bound, first observe that we can wlog assume the algorithm is deterministic. Then if the algorithm uses fewer than $\binom{n}{2}$ comparisons, it misses the comparison between the best item and the second best item with positive probability and therefore the algorithm cannot even guarantee to solve top-1.  

For more than 1 round, we show algorithms in Section \ref{sec:stopk_ub} and lower bounds in Section \ref{sec:stopk_lb}.

\subsection{Algorithms}
\label{sec:stopk_ub}
Our algorithmic results are stated in Corollary \ref{cor:stopk_2r} (for 2 rounds) and Corollary \ref{cor:stopk_3r} (for $\geq 3$ rounds). They are based on two sub-routines: Algorithm \ref{alg:rsorted1} and Algorithm \ref{alg:rsorted2}. Both of them use the sorting algorithm in \cite{AlonAV86} as a blackbox (Theorem \ref{thm:AAV86}). 

Algorithm \ref{alg:rsorted1} is used when $k$ is large ($k > n^{(2r-2)/(2r-1)}$). In the first round, we pick a random set of size $\alpha = k^{(r-1)/r} n^{(2-r)/r}$ (call them ``pivot items'') and partition the entire set into $\alpha+1$ chunks by comparing all items to the pivot items. In the remaining $r-1$ rounds, we use the sorting algorithm in \cite{AlonAV86} for each chunk that has top-$k$ items. We prove Algorithm \ref{alg:rsorted1} works in Lemma \ref{lem:rsorted1}.

Algorithm \ref{alg:rsorted2} is used when $k$ is small. Compared with Algorithm \ref{alg:rsorted1}, we pick the pivot items more carefully in Algorithm \ref{alg:rsorted2}. In the first round, we extend the result of \cite{BMW16} (stated in Theorem \ref{thm:bmw16} and Corollary \ref{cor:bmw16}) to find pivot items. In the second round, we partition the entire set into chunks by comparing all items to the pivot items. In the remaining $r-2$ rounds, we use the sorting algorithm in \cite{AlonAV86} for each chunk that has top-$k$ items. We prove Algorithm \ref{alg:rsorted2} works in Lemma \ref{lem:rsorted2}.

We first provide the final statements of our algorithmic results for sorted top-$k$ in the noiseless case:
\begin{corollary}
\label{cor:stopk_2r}
There exists a $2$-round algorithm solves sorted top-$k$ with $O(n\sqrt{k}+ n^{4/3})$ comparisons in expectation.
\end{corollary}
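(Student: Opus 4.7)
My plan is to instantiate Algorithm \ref{alg:rsorted1} with $r=2$ and pivot count $\alpha := \max(\sqrt{k},\, n^{1/3})$. In round 1, sample a set $P$ of $\alpha$ items uniformly at random from $N$, compare every pair in $P$, and compare each non-pivot to every pivot, for a total of $O(n\alpha)$ comparisons. Since the pivot ranks are now known and the position of every non-pivot relative to $P$ is known, the items are partitioned into $\alpha+1$ chunks $C_1,\ldots,C_{\alpha+1}$ in rank order. Identify the unique boundary chunk $C^*$ containing the item of rank $k$; the top-$k$ set is then the union of the chunks lying entirely above $C^*$ together with the top items of $C^*$. In round 2, compare all pairs inside each chunk that meets top-$k$; this sorts every such chunk completely, and the sorted top-$k$ is read off as the first $k$ elements of the resulting sorted prefix.

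Correctness is immediate; the content is the expected-cost calculation. Round 2 costs $R_2 := \sum_{i:\,C_i\cap\text{top-}k\neq\emptyset}|C_i|^2$, which I would bound via $R_2 \le \sum_{i:\,C_i\subseteq\text{top-}k}|C_i|^2 + |C^*|^2$. For the first sum I expand as pairs:
\[
\E\!\left[\sum_{i:\,C_i\subseteq\text{top-}k}|C_i|^2\right] \le \E\!\left[\#\{(x,y)\in\text{top-}k\times\text{top-}k : \text{no pivot strictly between } x,y\}\right],
\]
and for any $x<y\le k$ the pivot-free event has probability at most $(1-\alpha/n)^{y-x-1}$ under uniform sampling without replacement; summing a geometric series over $y$ and then over $x\le k$ gives $O(kn/\alpha)$. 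For the boundary term, a size-biased spacing computation for uniform sampling without replacement gives $\E[|C^*|^2]=O((n/\alpha)^2)$. Combining with round 1, the total expected number of comparisons is $O\!\left(n\alpha + kn/\alpha + n^2/\alpha^2\right)$, and plugging in $\alpha = \max(\sqrt{k},\, n^{1/3})$ yields $O(n\sqrt{k}+n^{4/3})$: when $k\ge n^{2/3}$ the first two terms dominate at $\Theta(n\sqrt{k})$, and when $k<n^{2/3}$ the first and third dominate at $\Theta(n^{4/3})$.

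The main technical obstacle is bounding $\E[|C^*|^2]$, the second moment of the chunk containing a distinguished point. This gap is size-biased, so its second moment is a constant factor larger than that of a typical gap but remains $\Theta((n/\alpha)^2)$. The cleanest route is again pairwise: count pairs $(x,y)$ for which no pivot lies strictly between the item of rank $k$ and $\max(x,y)$ (so their chunk contains the item of rank $k$) and sum the resulting geometric-tail probabilities; this bypasses any max-gap argument and avoids the polylogarithmic losses that would otherwise break the claimed $O(n\sqrt{k}+n^{4/3})$ bound.
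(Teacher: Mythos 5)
Your proposal is correct and follows essentially the same route as the paper: it is Algorithm \ref{alg:rsorted1} specialized to $r=2$ (random pivots, compare all items to pivots in round 1, brute-force sort the chunks meeting the top-$k$ in round 2), with the same effective pivot count $\alpha=\max(\sqrt{k},n^{1/3})$ and the same three-term cost bound $O(n\alpha + kn/\alpha + n^2/\alpha^2)$. The only differences are cosmetic — sampling pivots without replacement rather than with repetition, and organizing the chunk-size second-moment estimates as pair counts rather than the paper's per-item moment bounds $\E[|N_{g(i)}|^{\beta}]=O((n/\alpha)^{\beta})$ — and both yield the same $O(kn/\alpha)$ and $O((n/\alpha)^2)$ contributions.
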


\begin{proof}
There are two cases:
\begin{itemize}
\item When $k > n^{2/3}$, run Algorithm \ref{alg:rsorted1} to find the sorted top-$k$. This takes $O(n\sqrt{k}) = O(n\sqrt{k}+ n^{4/3})$ comparisons in expectation.
\item When $k \leq n^{2/3}$, run Algorithm \ref{alg:rsorted1} to find the sorted top-$n^{2/3}$ and then output sorted top-$k$. This takes $O(n^{4/3}) = O(n\sqrt{k}+ n^{4/3})$ comparisons in expectation.
\end{itemize}
\end{proof}

\begin{corollary}
\label{cor:stopk_3r}
For $r \geq 3$, there exists an $r$-round algorithm solves sorted top-$k$ with $\tilde{O}(n^{2/r}k^{(r-1)/r}+ n))$ comparisons in expectation.
\end{corollary}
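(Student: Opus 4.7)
The plan is to prove this by a case split on $k$ and invoke Algorithm \ref{alg:rsorted1} and Algorithm \ref{alg:rsorted2} as black boxes via Lemma \ref{lem:rsorted1} and Lemma \ref{lem:rsorted2}. Set the threshold $\tau := n^{(2r-2)/(2r-1)}$, chosen so that at $k=\tau$ the pivot count $\alpha = k^{(r-1)/r} n^{(2-r)/r}$ satisfies $k\alpha/n = \Theta(1)$, which is precisely the regime where Algorithm \ref{alg:rsorted1} is tight.

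In the large-$k$ regime $k \geq \tau$, I would run Algorithm \ref{alg:rsorted1} with the above $\alpha$. Round 1 costs $n\alpha = n^{2/r} k^{(r-1)/r}$. With random pivots the top-$k$ prefix spans $\Theta(k\alpha/n)$ chunks of expected size $n/\alpha$; applying Theorem \ref{thm:AAV86} (Alon--Azar--Vishkin) in $r-1$ rounds to each chunk costs $\tilde{O}((n/\alpha)^{1+1/(r-1)})$ per chunk, summing to $\tilde{O}(k (n/\alpha)^{1/(r-1)})$, which equals $\tilde{O}(n^{2/r} k^{(r-1)/r})$ by the choice of $\alpha$. Both rounds balance and the total is $\tilde{O}(n^{2/r} k^{(r-1)/r} + n)$, which is exactly the analysis of Lemma \ref{lem:rsorted1}. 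In the small-$k$ regime $k < \tau$, a single level of random pivots is too coarse, because the first chunk has expected size $n/\alpha \gg k$ and sorting it in $r-1$ rounds becomes the bottleneck. Here I would invoke Algorithm \ref{alg:rsorted2}, which uses round 1 to simultaneously (i) deploy $\alpha$ random pivots and (ii) run the BMW16-style procedure of Corollary \ref{cor:bmw16} to pick a set of $\beta$ refined pivots that are approximately uniformly spaced. Round 2 compares these refined pivots only against the first chunk (of size $\tilde{O}(n/\alpha)$), producing refined sub-chunks of expected size $\tilde{O}(n/(\alpha\beta))$. Rounds 3 through $r$ invoke Theorem \ref{thm:AAV86} with $r-2$ rounds on each refined sub-chunk intersecting the top-$k$ prefix. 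With the parameters $\alpha,\beta$ specified in Algorithm \ref{alg:rsorted2}, all three cost terms (the $n\alpha$ round-1 scan, the $\beta \cdot n/\alpha$ round-2 comparisons, and the $r-2$-round AAV sorts of the refined chunks) balance to $\tilde{O}(n^{2/r} k^{(r-1)/r} + n)$; the additive $n$ is unavoidable because every item must be touched at least once to certify it is not in the top-$k$.

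The main obstacle, and the place where the real work sits, is Algorithm \ref{alg:rsorted2}'s analysis, not Algorithm \ref{alg:rsorted1}'s. Two probabilistic facts must be nailed down: first, the BMW16-style candidate-selection step in round 1 really does produce $\beta$ refined pivots whose ranks are nearly uniformly spaced inside the first chunk with high probability, so that round 2 carves it into pieces of the target size; and second, the entire top-$k$ prefix lies in the first chunk of the round-1 random partition with sufficiently high probability (with a small corrective term for boundary overflow that one absorbs into the polylog). Both facts are encapsulated in Lemma \ref{lem:rsorted2}, and they rely on extending Theorem \ref{thm:bmw16} to the pivot-placement setting. Once these lemmas are in hand, Corollary \ref{cor:stopk_3r} is just the case split above together with checking that the chosen parameters in each regime deliver the claimed $\tilde{O}(n^{2/r} k^{(r-1)/r} + n)$ bound.
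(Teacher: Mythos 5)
Your overall strategy is the same as the paper's: split on $k$ at the threshold $n^{(2r-2)/(2r-1)}$ and invoke Lemma \ref{lem:rsorted1} for large $k$ and Lemma \ref{lem:rsorted2} for small $k$ as black boxes. However, your two-case split leaves a genuine gap at the bottom of the range. Algorithm \ref{alg:rsorted2} and Lemma \ref{lem:rsorted2} only apply when $k$ is at least roughly $n^{(r-2)/(r-1)}$ (the paper's algorithm header requires $k \geq 10\,n^{(r-2)/(r-1)}$). The reason is visible from the parameter choice: the number of refined pivots is $\alpha^2+1$ with $\alpha = k^{(r-1)/r} n^{(2-r)/r}$, and $\alpha \geq 1$ if and only if $k \geq n^{(r-2)/(r-1)}$; below that threshold the construction degenerates (fewer than one pivot, and the spacing guarantee $\frac{k}{\alpha^2} \geq \sqrt{n/\alpha}$ used in the correctness argument is no longer meaningful). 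So your claim that ``in the small-$k$ regime $k < \tau$ I would invoke Algorithm \ref{alg:rsorted2}'' does not cover all of $k < \tau$.

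The fix is short and is exactly the paper's third case: when $k < 10\,n^{(r-2)/(r-1)}$, run Algorithm \ref{alg:rsorted2} to compute the sorted top-$k'$ for $k' = 10\,n^{(r-2)/(r-1)}$ and output its length-$k$ prefix. The cost is $\tilde{O}(n^{2/r}(k')^{(r-1)/r}) = \tilde{O}(n)$, which is absorbed by the additive $n$ term in the target bound. You correctly observed that the $+n$ term is present for lower-bound reasons, but you did not use it on the upper-bound side, which is precisely where it is needed here. A secondary, non-fatal inaccuracy: your description of the refined sub-chunks as having size $\tilde{O}(n/(\alpha\beta))$ does not match the actual construction, where the $\alpha^2+1$ refined pivots are spaced about $k/\alpha^2$ apart so the sub-chunks have size $O(k/\alpha^2)$; since you defer these details to Lemma \ref{lem:rsorted2} this does not break the argument, but the missing tiny-$k$ case does need to be added.
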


\begin{proof}
There are three cases:
\begin{itemize}
\item When $k > n^{(2r-2)/(2r-1)}$, run Algorithm \ref{alg:rsorted1} to find the sorted top-$k$. This takes $O(n^{2/r}k^{(r-1)/r}) = \tilde{O}(n^{2/r}k^{(r-1)/r}+ n))$ comparisons in expectation.
\item When $10 n^{(r-2)/(r-1)}\leq k \leq n^{(2r-2)/(2r-1)}$: run Algorithm \ref{alg:rsorted2} to find the sorted top-$k$. This takes $\tilde{O}(n^{2/r}k^{(r-1)/r}) = \tilde{O}(n^{2/r}k^{(r-1)/r}+ n))$ comparisons in expectation.

\item When $k<10 n^{(r-2)/(r-1)}$, run Algorithm \ref{alg:rsorted2} to find the sorted top-$10 n^{(r-2)/(r-1)}$ and then output sorted top-$k$. This takes $\tilde{O}(n)= \tilde{O}(n^{2/r}k^{(r-1)/r}+ n))$ comparisons in expectation.
\end{itemize}
\end{proof}

Now we start to show two sub-routines: Algorithm \ref{alg:rsorted1} and Algorithm \ref{alg:rsorted2}.

\begin{theorem}[\cite{AlonAV86}]
\label{thm:AAV86}
For any fixed $r > 0$, there exists an $r$-round algorithm which sorts $n$ items with $O(n^{1+1/r})$ comparisons in expectation.
\end{theorem}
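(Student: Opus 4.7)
The plan is to prove the theorem by induction on the number of rounds $r$. For the base case $r = 1$, performing all $\binom{n}{2}$ pairwise comparisons in a single round trivially sorts $n$ items with $O(n^2) = O(n^{1+1/1})$ comparisons.

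For the inductive step, I would follow a classical ``sample-and-partition'' strategy. In round 1, draw a set $P$ of $m = \lceil n^{1/r} \rceil$ pivots uniformly at random without replacement. Within the same round, compare every non-pivot to every pivot (cost $(n-m) m = O(n^{1+1/r})$) and also compare all pairs inside $P$ (cost $\binom{m}{2} = O(n^{2/r})$, which is absorbed into the previous term). At the end of round 1 the algorithm knows the total order on $P$ and, for each non-pivot $j$, the pair of consecutive pivots between which $j$ lies; this partitions the $n-m$ non-pivots into $m+1$ chunks $C_0, C_1, \ldots, C_m$ of sizes $s_0, \ldots, s_m$ with $\sum_i s_i = n-m$.

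In rounds $2$ through $r$, I would invoke the inductive $(r-1)$-round algorithm on each chunk $C_i$ in parallel. By the inductive hypothesis, sorting a chunk of size $s$ costs expected $O(s^{1+1/(r-1)})$ comparisons; concatenating the sorted chunks with the sorted pivot list produces the final total order. The total expected cost is therefore
\[
O(n^{1+1/r}) \;+\; O\!\left(\mathbb{E}\!\left[\sum_{i=0}^{m} s_i^{1+1/(r-1)}\right]\right),
\]
so everything hinges on bounding the second term by $O(n^{1+1/r})$.

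The key analytic step, and the main obstacle, is controlling that expected sum, because the chunk sizes are not independent and the function $x \mapsto x^{1+1/(r-1)}$ is strictly convex. My first approach would be to rewrite $\sum_i s_i^{1+1/(r-1)} = \sum_{j \notin P} X_j^{1/(r-1)}$, where $X_j$ is the size of the chunk containing $j$, and then appeal to the distribution of gap lengths induced by a uniformly random $m$-subset of $[n]$: a typical gap has mean $\Theta(n/m)$ with geometric-style tails, giving $\mathbb{E}[X_j^{1/(r-1)}] = O\!\left((n/m)^{1/(r-1)}\right)$. With $m = \lceil n^{1/r} \rceil$ this yields $n \cdot (n^{(r-1)/r})^{1/(r-1)} = n^{1+1/r}$, as needed. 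If the moment computation for non-independent gaps turns out to be delicate, the robust fallback is to apply a Chernoff bound to each chunk, condition on the high-probability event that every $s_i \leq O((n/m) \log n)$, and absorb the resulting polylogarithmic slack either by oversampling pivots in round 1 or by restarting if the balanced-chunks event fails; this already matches the claimed bound up to lower-order terms, and a second-level random sampling step inside each oversized chunk can shave the logarithm if an exact $O(n^{1+1/r})$ is required.
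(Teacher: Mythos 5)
Your proposal is correct, but note that the paper does not prove this statement at all: Theorem \ref{thm:AAV86} is imported as a blackbox from \cite{AlonAV86}, so there is no in-paper proof to match. Your sample-and-partition induction is the standard route to this bound and it goes through: round 1 costs $O(nm)=O(n^{1+1/r})$, and the recursive cost $\E\bigl[\sum_i s_i^{1+1/(r-1)}\bigr]=\E\bigl[\sum_{j\notin P} X_j^{1/(r-1)}\bigr]$ is handled exactly as you suspect. The moment computation you flag as the main obstacle is in fact not delicate: for $r\geq 3$ the exponent $\beta=1/(r-1)$ is at most $1$, so Jensen's inequality gives $\E[X_j^{\beta}]\leq \E[X_j]^{\beta}=O((n/m)^{\beta})$ directly, and for $r=2$ only the first moment $\E[X_j]=O(n/m)$ is needed; the geometric tail $\Pr[X_j\geq t]\lesssim (1-m/n)^t$ supplies this without any independence assumptions on the gaps. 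This is precisely the calculation the paper carries out for its own pivot-based algorithm in Lemma \ref{lem:rsorted1}, where $\E[|N_{g(i)}|^{\beta}]=O((n/\alpha)^{\beta})$ is established for all $0<\beta\leq 2$ via concavity and an explicit integral, so your primary approach suffices and the Chernoff-plus-restart fallback (with its polylogarithmic slack) is unnecessary.
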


\begin{algorithm}[H]
    \caption{$r$-round noiseless algorithm for sorted top-$k$ when $k>n^{(2r-2)/(2r-1)}$ and $r\geq 2$}
    	\label{alg:rsorted1}
    \begin{algorithmic}[1]
	\STATE Let $\alpha = k^{(r-1)/r} n^{(2-r)/r}$. 
	\STATE Round 1: Pick a set $S$ of $\alpha$ random items (with repetition). Compare each item in $N$ to each items in $S$. 
	\STATE Round 2 to Round $r$: Let items in $S$ have ranks $s_1\leq s_2 \leq \cdots \leq s_{\alpha}$. For notation convenience, define $s_{\alpha+1} = n+1$ and $s_0 = 0$. Let $l$ be the smallest number such that $s_l \geq k$. Define $N_i$ to be set of items that are worse than $s_{i-1}$ and better than $s_i$ for $i=1,...,l$. Use the algorithm in Theorem \ref{thm:AAV86} to sort each set $N_i$ in $r-1$ rounds in parallel.  Now we have the sorted top-$s_l$, just output the sorted top-$k$.
     \end{algorithmic}
\end{algorithm}

\begin{lemma}
\label{lem:rsorted1}
For $k>n^{(2r-2)/(2r-1)}$ and $r \geq 2$, Algorithm \ref{alg:rsorted1} is always correct and uses $O(n^{2/r}k^{(r-1)/r})$ comparisons in expectation.
\end{lemma}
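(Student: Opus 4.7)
The proof naturally splits into correctness, the first-round cost, and the sorting cost in rounds $2$ through $r$. Correctness is immediate from Theorem~\ref{thm:AAV86}: the AAV86 routine is Las Vegas, so it correctly sorts each chunk $N_i$, recovering the full ranking of every item with rank $\leq s_l$; since $s_l \geq k$ by choice of $l$, the sorted top-$k$ is simply a prefix of the output. The first round performs exactly $n\alpha = n^{2/r}k^{(r-1)/r}$ comparisons by the choice of $\alpha$.

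The substantive task is to bound $\mathbb{E}\!\left[\sum_{i=1}^l |N_i|^{r/(r-1)}\right]$, since Theorem~\ref{thm:AAV86} sorts a chunk of size $m$ in $r-1$ rounds using $O(m^{r/(r-1)})$ comparisons in expectation. I would attack this via Cauchy--Schwarz, decoupling the random stopping index $l$ from each chunk size, using two standard ingredients. (a) Each $|N_i|$ is a spacing of $\alpha$ i.i.d.\ uniform pivots on $[n]$, so by exchangeability of spacings together with the elementary tail estimate $\Pr[|N_1| \geq t] \leq (1 - t/n)^\alpha \leq e^{-t\alpha/n}$, every $|N_i|$ has essentially exponential tails around its mean $\Theta(n/\alpha)$; hence $\mathbb{E}[|N_i|^p] = O((n/\alpha)^p)$ for every constant $p$, independent of $i$. (b) The event $\{i \leq l\}$ coincides with $\{s_{i-1} < k\}$, i.e.\ at least $i-1$ pivots land in $\{1,\ldots,k-1\}$, a Binomial count with mean $\mu := \alpha k/n$. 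A Chernoff bound then yields $\sum_{i=1}^{\alpha+1}\sqrt{\Pr[i \leq l]} = O(\mu) = O(\alpha k/n)$, using $\mu \geq 1$ in the regime $k > n^{(2r-2)/(2r-1)}$.

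Combining (a) and (b) via Cauchy--Schwarz gives
\[
\mathbb{E}\!\left[\sum_{i=1}^l |N_i|^{r/(r-1)}\right] = \sum_{i=1}^{\alpha+1}\mathbb{E}\!\left[\mathbbm{1}[i\leq l]\,|N_i|^{r/(r-1)}\right] \leq \sum_{i=1}^{\alpha+1} \sqrt{\Pr[i \leq l]}\cdot\sqrt{\mathbb{E}[|N_i|^{2r/(r-1)}]},
\]
which by (a) and (b) equals $O\!\left((n/\alpha)^{r/(r-1)} \cdot \alpha k/n\right)$; substituting $\alpha = k^{(r-1)/r}n^{(2-r)/r}$ reduces this to $O(n^{2/r}k^{(r-1)/r})$, matching the first-round cost.

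The main obstacle I anticipate is avoiding a polylogarithmic loss. The coarser estimate $\sum_{i\leq l} |N_i|^{r/(r-1)} \leq (\max_i |N_i|)^{1/(r-1)} s_l$ picks up a $\log \alpha$ from the maximum of $\alpha$ spacings, while the pointwise convex bound $\sum |N_i|^{r/(r-1)} \leq s_l^{r/(r-1)}$ is already too large at the upper end of the range. Using per-chunk moment bounds combined with Cauchy--Schwarz --- rather than routing through the maximum spacing --- is precisely what converts the spacings analysis into a tight constant-factor estimate.
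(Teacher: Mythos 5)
Your proof is correct, and the bound it produces matches the paper's, but the decomposition is genuinely different. The paper sidesteps any tail estimate on the stopping index $l$ by reindexing the chunk sum over \emph{items}: it writes $\sum_{i=1}^{l-1}|N_i|^{1+1/(r-1)}\le\sum_{j\le k}|N_{g(j)}|^{1/(r-1)}$ (plus the single boundary term $|N_{g(k)}|^{1+1/(r-1)}$), where $N_{g(j)}$ is the chunk containing item $j$, and then bounds the moments of this \emph{size-biased} spacing by splitting $N_{g(j)}$ into the portions above and below $j$ and evaluating $\int_0^n\beta x^{\beta-1}(1-x/n)^{\alpha}\,dx$ with a small case analysis in $\beta$. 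You instead keep the chunk-indexed sum and decouple $\1[i\le l]$ from $|N_i|^{r/(r-1)}$ by Cauchy--Schwarz, paying with (a) uniform moments of plain (not size-biased) spacings and (b) a Chernoff tail on the binomial count of pivots landing in the top-$k$, giving $\sum_i\sqrt{\Pr[i\le l]}=O(\alpha k/n)$; the product $O\bigl((n/\alpha)^{r/(r-1)}\cdot\alpha k/n\bigr)=O\bigl((n/\alpha)^{1/(r-1)}k\bigr)=O(n^{2/r}k^{(r-1)/r})$ then absorbs both the typical chunks and the boundary chunk in one stroke. Your trade is the paper's explicit size-biased moment computation for a concentration bound on $l$, and your closing observation about why routing through $\max_i|N_i|$ would cost a $\log\alpha$ factor is exactly right.

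One small technical caveat: the pivots are drawn \emph{with repetition} from a discrete set, so the spacings are not literally exchangeable --- repeated pivots make the interior gaps stochastically smaller than $|N_1|$ (e.g.\ for $n=3$, $\alpha=2$ one computes $\E|N_2|<\E|N_1|$). The inequality you actually need, $\Pr[|N_i|\ge t]\le(1-t/n)^{\alpha}$ for every $i$, does hold in the favorable direction; one clean justification is to couple the discrete pivots to $\alpha$ i.i.d.\ continuous uniforms on $(0,n]$ by rounding up, so that $|N_i|$ is dominated by $n$ times the $i$-th continuous spacing, whose marginals are genuinely exchangeable. With that one-line patch, claim (a) and hence the whole argument goes through at constant-factor tightness.
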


\begin{proof}
The correctness of the algorithm is easy to check. To prove the lemma, it suffices to bound the expected number of comparisons used by the algorithm. In the first round, the algorithm uses $\alpha n = n^{2/r}k^{(r-1)/r}$ comparisons. From round 2 to  round $r$, by Theorem \ref{thm:AAV86}, the algorithm uses $\E[\sum_{i=1}^l |N_i|^{1+1/(r-1)} ]$ comparisons in expectation. It suffices to prove that $\E[\sum_{i=1}^l |N_i|^{1+1/(r-1)} ] = O(n^{2/r}k^{(r-1)/r})$. Notice that $l$ and $N_i$'s are random variables depending on the randomness of the algorithm. 

For $i < s_l$, define $g(i)$ such that $i$ is in $N_{g(i)}$. We have
\begin{align*}
\E\left[\sum_{i=1}^l |N_i|^{1+1/(r-1)} \right] &= \E\left[|N_l|^{1+1/(r-1)} + \sum_{i=1}^{l-1} |N_i|^{1+1/(r-1)} \right]\\
 &\leq \E\left[|N_{g(k)}|^{1+1/(r-1)}+\sum_{i=1}^k |N_{g(i)}|^{1/(r-1)}  \right].
\end{align*}
For each $i \leq k$, we will upper bound $\E[|N_{g(i)}|^{\beta}]$ for $0 < \beta \leq 2$. We start by considering $N_{g(i)} \cap \{j |j \leq i\}$ (the set of items that have ranks no worse than item $i$ and are put into the same partitioned set as item $i$). The size of this set is exactly $i-s_{g(i)-1} + 1$. For $1 \leq \beta \leq 2$,
\[
\E[|N_{g(i)} \cap \{j|j\leq i\}|^{\beta}] \leq \sum_{j=1}^n (j^{\beta} - (j-1)^{\beta})(1-j/n)^{\alpha} \leq \int_{0}^n \beta x^{\beta-1}(1-x/n)^{\alpha} dx.
\]
When $\beta = 1$, we have
\[
\int_{0}^n x^{\beta-1}(1-x/n)^{\alpha} dx = \frac{n}{\alpha + 1}< \frac{n}{\alpha}.
\]
When $\beta = 2$, we have
\[
\int_{0}^n x^{\beta-1}(1-x/n)^{\alpha} dx =\frac{n^2}{\alpha + 1} - \frac{n^2}{\alpha + 2} =  \frac{n^2}{(\alpha + 2)(\alpha + 1)}< \left(\frac{n}{\alpha}\right)^2.
\]
When $1 < \beta < 2$, by concavity of $x^{\beta -1}$ for $x>0$, we have
\[
\int_{0}^n x^{\beta-1}(1-x/n)^{\alpha} dx = \left(\int_{0}^n (1-x/n)^{\alpha} dx \right)\left(\frac{\int_{0}^n x(1-x/n)^{\alpha} dx}{\int_{0}^n (1-x/n)^{\alpha} dx }\right)^{\beta-1} < \left(\frac{n}{\alpha}\right) ^{\beta}.
\]
So  for $1 \leq \beta \leq 2$, we have
\[
\E[|N_{g(i)} \cap \{j|j\leq i\}|^{\beta}] \leq \beta \left(\frac{n}{\alpha}\right) ^{\beta}.
\]
For $\beta < 1$, by the concavity of $x^\beta$ we have
\[
\E[|N_{g(i)} \cap \{j|j\leq i\}|^{\beta}] \leq \E[|N_{g(i)} \cap \{j|j\leq i\}|]^{\beta} \leq \left(\frac{n}{\alpha}\right) ^{\beta}.
\]
By symmetry, we can also get the same upper bound on $\E[ |N_{g(i)} \cap \{j|j\geq i\}|^{\beta}]$. 
Therefore, for $0 < \beta \leq 2$:
\begin{align*}
\E[|N_{g(i)}|^{\beta}] &\leq  \E[(|N_{g(i)} \cap \{j|j\leq i\}| + |N_{g(i)} \cap \{j|j\geq i\}|)^{\beta}] \\
&\leq \E[(2|N_{g(i)} \cap \{j|j\leq i\}|)^\beta] + \E[ (2|N_{g(i)} \cap \{j|j\geq i\}|)^{\beta}] \\
&\leq O\left(\left(\frac{n}{\alpha}\right)^\beta\right).
\end{align*}
Notice that $\frac{n}{\alpha} = \frac{n^{(2r-2)/r}}{k^{(r-1)/r}}\leq k$. To sum up, we get
\[
\E\left[\sum_{i=1}^l |N_i|^{1+1/(r-1)} \right] = O\left( \left(\frac{n}{\alpha}\right)^{1+1/(r-1)} + k\cdot \left(\frac{n}{\alpha}\right)^{1/(r-1)}\right) =O(n^{2/r}k^{(r-1)/r}). 
\]
\end{proof}

\begin{algorithm}[H]
    \caption{$r$-round noiseless algorithm for sorted top-$k$ when $n^{(2r-2)/(2r-1)} \geq k\geq 10 n^{(r-2)/(r-1)}$ and $r >2$}
    	\label{alg:rsorted2}
    \begin{algorithmic}[1]
	\STATE Let $\alpha = k^{(r-1)/r} n^{(2-r)/r}$. 
	\STATE Round 1: Pick a set $S$ of $\alpha \ln(n)$ random items (with repetition). Compare each item in $N$ to each item in $S$.
	\STATE Round 1(run in parallel with the previous step): Run the algorithm of Corollary \ref{cor:bmw16} to get items $p_1,...,p_{\alpha^2+1}$ such that with probability at least $1-1/n$, $\forall i \in [\alpha^2+1], p_i$ ranks in $[\frac{i\cdot k}{\alpha^2} - \frac{1}{3} \cdot \sqrt{\frac{n}{\alpha}}, \frac{i\cdot k}{\alpha^2} +\frac{1}{3} \cdot \sqrt{\frac{n}{\alpha}}]$. 
	\STATE Round 2: Let $s$ be an item in $S$ whose rank is in $[2n/\alpha, 3n/\alpha]$. If such $s$ does not exist, declare FAIL and proceed to round 3. Let $N'$ be the set of items that are better than $s$. Compare items in $N'$ to $p_1,...,p_{\alpha^2+1}$. If $\exists i \in [\alpha^2+1]$ such that $p_i$ beats more than $\frac{i\cdot k}{\alpha^2} +\frac{1}{3} \cdot \sqrt{\frac{n}{\alpha}}$ or less than $\frac{i\cdot k}{\alpha^2} -\frac{1}{3} \cdot \sqrt{\frac{n}{\alpha}}$ items in $N'$, declare FAIL.
	\STATE Round 3-$r$: If the algorithm declares FAIL in round 2, compare all pairs of items and output the sorted top-$k$. Otherwise assume $p_1,...,p_{\alpha^2}$ partition items worse than $p_{\alpha^2+1}$ into sets $N_1, ... ,N_{\alpha^2}$. Use the algorithm in Theorem \ref{thm:AAV86} to sort each set $N_i$ in $r-2$ rounds in parallel. Now we have the sorted top-$p_{\alpha^2+1}$, output the sorted top-$k$. 
     \end{algorithmic}
\end{algorithm}

Algorithm 5 in Appendix C.1.1 of \cite{BMW16} can be easily extended to show the following theorem. In that algorithm, for their purpose, only $p_i$ for $i=n/2$ is explicitly computed after one round (denoted as $x$ in their pseudocode). However, it is not hard to see that $p_i$'s for all $i\in[n]$ can be computed in the same way using the same set of comparisons. The only change is that the failure probability is multiplied by a factor of $n$ because of union bound.

\begin{theorem}[\cite{BMW16}]
\label{thm:bmw16}
There exists an 1-round algorithm with $O(n)$ comparisons which outputs a list of item $p_i$'s for all $i \in [n]$ such that with probability at least $1-1/n$, $\forall i \in n$, $p_i$'s rank is at most $C(n) \cdot \sqrt{n}$ away from $i$ for some $C(n) = polylog(n)$.  
\end{theorem}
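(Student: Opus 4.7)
The plan is to reduce to BMW16's Algorithm 5, which is the one-round, $O(n)$-comparison algorithm that computes a single approximate median (denoted $x$ in that paper's pseudocode). The key observation, already flagged just before the theorem statement, is that the set of pairs compared by BMW16's algorithm does not depend on which rank $i$ is being targeted; it depends only on the algorithm's internal randomness and on the identities of the $n$ items. Consequently, the very same one-round batch of $O(n)$ comparisons already supports the extraction of an approximate order statistic for every rank $i \in [n]$ by repeating BMW16's final selection step with $i$ in place of $n/2$.

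To make this precise I would expose the rank-estimator $\hat{r} : N \to [n]$ implicitly constructed by BMW16's algorithm: for each item $j \in N$, $\hat{r}(j)$ is the estimate of $j$'s true rank $r(j)$ derived from its win/loss pattern in the comparison outcomes. BMW16's concentration analysis yields, for the specific item they study, a guarantee of the form $\Pr[|\hat{r}(j) - r(j)| > C(n)\sqrt{n}] \leq 1/n^{c}$ for any constant $c$, at the cost of a $polylog(n)$ factor in $C(n)$ (standard Chernoff/Hoeffding boosting). By symmetry of the algorithm over items, the same pointwise bound holds for every $j \in N$; fixing $c = 2$, I then set $p_i$, for each target rank $i \in [n]$, to be any item whose estimated rank $\hat{r}$ is closest to $i$, breaking ties arbitrarily.

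A union bound over $j \in N$ gives that the event $\mathcal{E} := \{|\hat{r}(j) - r(j)| \leq C(n)\sqrt{n} \text{ for all } j \in N\}$ holds with probability at least $1 - 1/n$. On $\mathcal{E}$, the true $i$-th ranked item $j^*$ satisfies $|\hat{r}(j^*) - i| \leq C(n)\sqrt{n}$, so by the minimization rule $|\hat{r}(p_i) - i| \leq C(n)\sqrt{n}$; combining this with $|\hat{r}(p_i) - r(p_i)| \leq C(n)\sqrt{n}$ via the triangle inequality yields $|r(p_i) - i| \leq 2C(n)\sqrt{n}$, which is the claimed bound after absorbing the factor $2$ into $C(n)$. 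The main obstacle is confirming that BMW16's per-item tail bound can be strengthened from constant to inverse-polynomial failure probability without introducing additional rounds or comparisons; this is handled by their standard concentration toolkit and costs only a polylogarithmic blow-up in $C(n)$, leaving the round count at $1$ and the comparison count at $O(n)$ unchanged.
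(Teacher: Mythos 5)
Your proposal matches the paper's argument: the paper likewise observes that BMW16's Algorithm 5 makes a rank-oblivious batch of comparisons from which $p_i$ for every $i\in[n]$ can be extracted in the same way as their single $p_{n/2}$, with the failure probability multiplied by $n$ via a union bound. Your extra details (the explicit rank estimator, the nearest-estimated-rank rule, and the triangle inequality absorbing a factor of $2$ into $C(n)$) are just a careful instantiation of the same reduction.
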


Using Theorem \ref{thm:bmw16}, we can get the following corollary. 
\begin{corollary}
\label{cor:bmw16}
For any $\alpha > 1$, there exists an 1-round algorithm with $\tilde{O}(\alpha n)$ comparisons which outputs a list of $p_i$'s for all $i \in [n]$ such that with probability at least $1-1/n$, $\forall i \in n$, $p_i$'s rank is at most $\frac{1}{3} \cdot \sqrt{\frac{n}{\alpha}}$ away from $i$. 
\end{corollary}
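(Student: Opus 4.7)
The plan is to show that the one-round algorithm from Theorem \ref{thm:bmw16} can be scaled up to spend $\tilde{O}(\alpha n)$ comparisons in exchange for a $\sqrt{\alpha}$-factor reduction in the rank-estimation error. The algorithm referenced (Algorithm 5 in Appendix C.1.1 of \cite{BMW16}) is a one-round procedure in which each item is compared against a random collection of pivots, and the rank of each item is estimated by counting wins (possibly through a multi-level pivot structure). The $C(n)\sqrt{n}$ error bound in Theorem \ref{thm:bmw16} arises from Hoeffding/Chernoff concentration applied to those pivot samples. I would verify that, by multiplying every internal sample size by a factor $\beta$, the total comparison count becomes $O(\beta n)$ and, by the standard $1/\sqrt{m}$ scaling of such tail bounds, the resulting rank estimates $\hat{r}(x)$ satisfy $|\hat{r}(x)-r(x)|\le C(n)\sqrt{n/\beta}$ for every item $x\in[n]$ simultaneously, with probability at least $1-1/n$.

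I would then choose $\beta = 36\, C(n)^2 \alpha$, which is $O(\alpha \cdot \mathrm{polylog}(n))$, so that the total number of comparisons is $\tilde{O}(\alpha n)$ and every estimate is within $\frac{1}{6}\sqrt{n/\alpha}$ of the true rank. For each $i\in[n]$, I output $p_i$ to be any item $x$ satisfying $|\hat{r}(x)-i|\le \frac{1}{6}\sqrt{n/\alpha}$; such an $x$ always exists because the item of true rank $i$ itself satisfies this. By the triangle inequality,
\[
|r(p_i)-i| \;\le\; |r(p_i)-\hat{r}(p_i)| + |\hat{r}(p_i)-i| \;\le\; \tfrac{1}{6}\sqrt{n/\alpha}+\tfrac{1}{6}\sqrt{n/\alpha}=\tfrac{1}{3}\sqrt{n/\alpha},
\]
which is exactly the bound claimed.

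The main obstacle is the scaling claim itself: the assertion that multiplying every internal sample size in the BMW16 procedure by $\beta$ deterministically reduces the error bound by a factor of $\sqrt{\beta}$. This is a standard consequence of Hoeffding/Chernoff concentration, whose tails shrink like $1/\sqrt{m}$ in the sample size $m$, but formally it requires unpacking Algorithm 5 of \cite{BMW16} (already used as a black box in the paper's statement of Theorem \ref{thm:bmw16}) to check that each application of concentration in its analysis scales in the expected way. If one wished to avoid inspecting the BMW16 internals, one could alternatively run $T=\tilde{\Theta}(\alpha)$ independent copies of the Theorem \ref{thm:bmw16} algorithm in parallel and take, for each item, a mean of its independent rank estimates; but controlling the bias in that averaging reduces to essentially the same scaling analysis, so inspecting the BMW16 construction directly seems to be the cleanest route.
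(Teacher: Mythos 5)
Your overall plan --- spend a factor $\tilde{O}(\alpha)$ more comparisons to shrink the rank error by $\sqrt{\alpha}$ --- is the right goal, but your proof of the central scaling claim is deferred rather than given: you explicitly say that verifying ``multiplying every internal sample size by $\beta$ reduces the error by $\sqrt{\beta}$'' requires unpacking Algorithm 5 of \cite{BMW16} and re-checking each concentration step, and you do not carry this out. Since Theorem \ref{thm:bmw16} is used in this paper strictly as a black box, an argument that reopens its internals is both incomplete as written and more fragile than necessary (the error in such selection procedures is not a single Hoeffding term but an accumulation over a multi-level pivot structure, so ``each tail shrinks like $1/\sqrt{m}$'' does not immediately imply the final guarantee scales that way). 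Your fallback --- averaging $\tilde{\Theta}(\alpha)$ independent runs --- has the bias problem you yourself note, so neither route in the proposal is actually closed.

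The paper's proof avoids all of this with a duplication trick that keeps Theorem \ref{thm:bmw16} fully black-box. Create $\beta$ identical copies of each item, yielding $\beta n$ items, and run the one-round algorithm of Theorem \ref{thm:bmw16} on this expanded instance: comparisons between copies of distinct items are realized as actual comparisons, and comparisons between copies of the same item are answered by an arbitrary fixed order. The theorem then guarantees outputs $q_1,\dots,q_{\beta n}$ whose ranks (in the expanded instance) are within $C(\beta n)\sqrt{\beta n}$ of their targets, and one sets $p_i = q_{\beta i}$. A rank displacement of $C(\beta n)\sqrt{\beta n}$ among the copies corresponds to a displacement of at most $C(\beta n)\sqrt{\beta n}/\beta = C(\beta n)\sqrt{n/\beta}$ in the original ranking, which is at most $\frac{1}{3}\sqrt{n/\alpha}$ once $\beta \geq 9\alpha\,(C(\beta n))^2$, with $O(\beta n) = \tilde{O}(\alpha n)$ comparisons. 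The point is that the $1/\beta$ compression of ranks is a deterministic consequence of the duplication, so no re-derivation of the concentration inside \cite{BMW16} is needed. If you want to salvage your write-up, replacing the ``rescale the internal sample sizes'' step with this duplication argument closes the gap cleanly.
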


\begin{proof}
Set $\beta$ to be some value larger than $9\alpha \cdot (C(\beta n))^2 $($C(n)$ is the one in the statement of Theorem \ref{thm:bmw16}). It suffices to pick some $\beta = polylog(n)$. For each item, create $\beta$ copies. Run the 1-round algorithm of Theorem \ref{thm:bmw16} on these copies ($\beta n$ items). If the algorithm compares the copies of different items, we make an actual comparison between these two items. If the algorithm compares the copies of the same item, we just decide the comparison result based on some arbitrarily fixed order between the copies of the same item. Suppose the 1-round algorithm outputs $q_1,...,q_{\beta n}$. We output $p_i = q_{\beta \cdot i}$. It's easy to check that with probability at least $1-1/n$, $\forall i \in n$, the difference between $p_i$'s rank and $i$ is at most
\[
\frac{ C(\beta n) \sqrt{\beta n}  }{\beta} \leq \frac{1}{3} \cdot \sqrt{\frac{n}{\alpha}}. 
\]
\end{proof}

\begin{lemma}
\label{lem:rsorted2}
For $n^{(r-2)/(r-1)}<k\leq n^{(2r-2)/(2r-1)}$ and $r \geq 3$, Algorithm \ref{alg:rsorted2} is always correct and uses $\tilde{O}(n^{2/r}k^{(r-1)/r})$ comparisons in expectation. 
\end{lemma}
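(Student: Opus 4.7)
My plan is to split the proof into correctness and expected sample complexity, with the latter being the main work. Correctness is immediate: if the algorithm declares FAIL in Round 2, it compares all $\binom{n}{2}$ pairs in the remaining rounds and is trivially correct; otherwise, the Round 2 check has directly verified that every $p_i$ has rank in $[\tfrac{ik}{\alpha^2} - \tfrac{1}{3}\sqrt{n/\alpha}, \tfrac{ik}{\alpha^2} + \tfrac{1}{3}\sqrt{n/\alpha}]$. A short calculation using $\alpha = k^{(r-1)/r} n^{(2-r)/r}$ together with $r \geq 3$ and $k \leq n$ shows $k/\alpha^2 > \tfrac{2}{3}\sqrt{n/\alpha}$, which implies that consecutive windows are disjoint and that $\mathrm{rank}(p_{\alpha^2+1}) > k$. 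Hence the top-$k$ items are covered by $\{p_1,\ldots,p_{\alpha^2}\} \cup N_1 \cup \cdots \cup N_{\alpha^2}$, and sorting each $N_i$ exactly in $r-2$ rounds via Theorem \ref{thm:AAV86} yields the sorted top-$k$.

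For the expected sample complexity I would account for each piece separately. The two sub-procedures of Round 1 together use $\tilde{O}(\alpha n)$ comparisons (the first directly uses $\alpha n \ln n$; the second uses $\tilde{O}(\alpha n)$ via Corollary \ref{cor:bmw16} applied with parameter $\alpha$). In Round 2, each random element of $S$ lands in $[2n/\alpha, 3n/\alpha]$ independently with probability $1/\alpha$, so the probability that no suitable $s$ exists is at most $(1-1/\alpha)^{\alpha\ln n} \leq 1/n$. When $s$ exists, $|N'| \leq 3n/\alpha$, so the Round 2 comparisons cost $O((n/\alpha)\cdot\alpha^2) = O(\alpha n)$. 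The FAIL branch costs $\binom{n}{2}$ but is triggered with probability at most $O(1/n)$ by a union bound over the failure of Corollary \ref{cor:bmw16} and the non-existence of $s$, contributing only $O(n)$ in expectation.

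The main work is bounding the cost of Rounds $3$ through $r$ in the no-FAIL case. The Round 2 check ensures $|N_i| \leq k/\alpha^2 + \tfrac{2}{3}\sqrt{n/\alpha}$ for each of the $\alpha^2$ chunks, so Theorem \ref{thm:AAV86} gives a sorting cost of
\[
\sum_{i=1}^{\alpha^2} O\bigl(|N_i|^{1+1/(r-2)}\bigr) \;\leq\; O\bigl(\alpha^2 (k/\alpha^2)^{1+1/(r-2)}\bigr) + O\bigl(\alpha^2 (\sqrt{n/\alpha})^{1+1/(r-2)}\bigr)
\]
via $(a+b)^p \leq 2^p(a^p+b^p)$. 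Direct substitution of $\alpha = k^{(r-1)/r} n^{(2-r)/r}$ shows the first term equals $k^{(r-1)/r} n^{2/r}$, exactly the target. I expect the second term to be the main technical obstacle, since its exponent structure depends on $r$: writing $q = (r-1)/(2(r-2))$, the second term equals $\alpha^{2-q} n^q$, and its ratio to $\alpha n$ is $\alpha^{1-q} n^{q-1}$. For $r=3$ this ratio equals $1$, and for $r\geq 4$ one has $q<1$, so the crude bound $\alpha \leq n^{1/r}$ (which follows from $k \leq n$) gives ratio at most $n^{(1-q)(1-r)/r} \leq 1$. Summing all contributions yields the claimed $\tilde{O}(n^{2/r} k^{(r-1)/r})$ expected sample complexity.
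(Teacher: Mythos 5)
Your proposal is correct and follows essentially the same route as the paper: correctness via the verified rank windows of the $p_i$'s together with the inequality $k/\alpha^2 \gtrsim \sqrt{n/\alpha}$ (equivalent to $n^{2r-6}\geq k^{r-3}$, which holds for $r\geq 3$), and a round-by-round cost accounting in which the FAIL branch contributes only $O(n)$ in expectation. The one cosmetic difference is in Rounds $3$ through $r$: the paper reuses $k/\alpha^2\geq\sqrt{n/\alpha}$ to get $|N_i|\leq 2k/\alpha^2$ and a single term, whereas you keep the $\alpha^2\bigl(\sqrt{n/\alpha}\bigr)^{(r-1)/(r-2)}$ term and bound it separately via $\alpha\leq n^{1/r}$ --- both computations check out.
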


\begin{proof}
We first prove the correctness of Algorithm \ref{alg:rsorted2}. If the algorithm declares FAIL in round 2, then all the pairs of items get compared. The algorithm definitely outputs the sorted top-$k$ correctly. Now we consider the case when the algorithm does not declare FAIL. First the set $N'$ will have all the items in top-$s$ for some $s \in [2n/\alpha, 3n/\alpha]$. Since $n/\alpha \geq k$,  $N'$ contains all the items in top-$k$. We also have 
\[
\frac{k}{\alpha^2} \geq \sqrt{\frac{n}{\alpha}} \Leftrightarrow n^{2r-6} \geq k^{r-3}. 
\]
Since $r \geq 3$, we have $\frac{k}{\alpha^2} \geq \sqrt{\frac{n}{\alpha}}$. For all $i \in [\alpha^2+1]$, we have $\frac{i\cdot k}{\alpha^2} +\frac{1}{3} \cdot \sqrt{\frac{n}{\alpha}} < k(1+\frac{1}{2} + \frac{1}{3}) < s$. Therefore if the algorithm does not declare FAIL, we have $\forall i \in [\alpha^2+1], p_i$ ranks in $[\frac{i\cdot k}{\alpha^2} - \frac{1}{3} \cdot \sqrt{\frac{n}{\alpha}}, \frac{i\cdot k}{\alpha^2} +\frac{1}{3} \cdot \sqrt{\frac{n}{\alpha}}]$. Therefore we know that all these $p_i$'s are in $N'$. We also have $p_{\alpha^2+1} \geq \frac{(\alpha^2+1)\cdot k}{\alpha^2} - \frac{1}{3} \cdot \sqrt{\frac{n}{\alpha}} >  \frac{(\alpha^2+1)\cdot k}{\alpha^2} - \frac{1}{3} \cdot \frac{k}{\alpha^2} \geq k$. This means the top-$k$ is inside the top-$p_{\alpha^2+1}$. Since the algorithm gets sorted top-$p_{\alpha^2+1}$, it can output sorted top-$k$ correctly. 

Now we want to bound the expected number of comparisons of the algorithm. In the first round, both steps use $\tilde{O}(\alpha n) = \tilde{O}(n^{2/r}k^{(r-1)/r})$ comparisons. In the second round, the algorithm uses at most $(\alpha^2+1) \cdot \frac{ 3n}{\alpha} = O(\alpha n) = O(n^{2/r}k^{(r-1)/r})$ comparisons. In the third round, if the algorithm does not declare FAIL, each $N_i$ will have size at most $\frac{2k}{\alpha^2}$. By Theorem \ref{thm:AAV86}, the algorithm will use at most 
\[
\left(\frac{2k}{\alpha^2}\right)^{1+1/(r-2)} \cdot \alpha^2 = O(k^{1+1/(r-2)} \cdot \alpha^{-2/(r-2)}) = O(n^{2/r}k^{(r-1)/r})
\] comparisons in expectation. If the algorithm declares FAIL, the algorithm will use $O(n^2)$ comparisons. We know the probability of FAIL is at most $1/n+ (1-\frac{1}{\alpha})^{\alpha \ln(n)} < 2/n$. Therefore in expectation, FAIL will cause at most $O(n) = O(n^{2/r}k^{(r-1)/r})$ comparisons. 
\\
\end{proof}


\subsection{Lower Bounds}
\label{sec:stopk_lb}
We prove lower bounds in Lemma \ref{lem:stopk_lb3} (for $\geq 3$ rounds) and Lemma \ref{lem:stopk_lb2} (for 2 rounds). The main idea is to show that there are not many comparisons between top-$k$ items in the first 1 or 2 rounds and then reduce from sorting $k$ items for the remaining $r-1$ or $r-2$ rounds. Similarly as our algorithms, we use the lower bound for sorting (Theorem 2.1 of \cite{AlonA88b}) as a blackbox. We state their theorem (Theorem \ref{thm:AA88}) in the format which is enough for our proof.

\begin{theorem}[\cite{AlonA88b}]
\label{thm:AA88}
Suppose there are $2$ disjoint sets of item, denoted $Z$ and $Y$. $|Y| = y$. The rank of each item in $Z$ is known. The set of $y$ ranks in $Y$ is known but all the $y!$ orders of the items of the set are equally likely. Suppose in the first round the algorithm already makes $e$ comparisons between 2 items in $Y$ and $e'$ comparisons between an item in $Y$ and an item in $Z$. Let $f = e + e'/2$. The expected number of comparisons to sort all the items in $r$ more rounds is at least
\[
r \cdot \left( \frac{y^{1+1/r}}{c\cdot g_r(y,f)} - y\right).
\]
for some constant $c$. Here $g_r(y,f)$ is defined as 
\[
g_r(y,f) = \left\{
\begin{array}{rcl}
 1,  ~~~~~~~~&      & {f=0},\\
 \left(\frac{c}{4}\right)^{1/r} ,~~   &      & {0\leq \frac{f}{y}\leq \frac{1}{4}},\\
 \left(\frac{cf}{y}\right)^{1/r},~      &      & { \frac{f}{y}\geq \frac{1}{4}}.\\
\end{array} \right.
\]
\end{theorem}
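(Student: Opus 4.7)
The plan is to prove the bound by combining a Turán-type extremal argument on the round-1 comparisons with the classical Alon--Azar round-adaptive sorting lower bound, invoked as a black box on a carefully extracted sub-instance.

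First I would reformulate the task as recovering a uniformly random bijection $\pi:Y\to R$, where $R\subseteq[n]$ is the known set of $y$ positions occupied by $Y$-items. The $e$ round-1 $Y$-$Y$ comparisons reveal $\operatorname{sgn}(\pi(i)-\pi(j))$, and the $e'$ round-1 $Y$-$Z$ comparisons reveal $\operatorname{sgn}(\pi(i)-z)$ for fixed known $z$. I would then extract a subset $W\subseteq Y$ of size $w$ whose conditional distribution, after round 1, is essentially uniform on all $w!$ relative orderings. Form the graph $G_{YY}$ on $Y$ with the $e$ round-1 $Y$-$Y$ pairs as edges; for each $i\in Y$ let $d_Z(i)$ count its round-1 $Y$-$Z$ queries, so $\sum_i d_Z(i)=e'$. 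Restrict first by Markov to the at least $y/2$ items with $d_Z(i)\le 2e'/y$, and within these take a $G_{YY}$-independent set; Turán then gives $w=\Omega\bigl(y^2/(2f+y)\bigr)$ with $f=e+e'/2$. The $1/2$ weighting appears here because an $Y$-$Z$ comparison has only one $Y$-endpoint and so contributes half as much to the per-vertex Turán cost as a full $Y$-$Y$ edge. By letting the adversary answer each $Y$-$Z$ query so that the involved $W$-item sits in the middle of its still-feasible sub-interval of $R$, the relative order of $W$ remains uniform on $w!$.

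Next I would invoke the classical Alon--Azar round-adaptive sorting lower bound on the residual uniform-sorting problem on $W$: sorting $w$ uniformly random items in $r$ rounds requires in expectation $\Omega\bigl(r(w^{1+1/r}-w)\bigr)$ comparisons. This is itself the $f=0$ specialization of the present theorem, proved by induction on $r$ with base case $r=1$ being the trivial ``$\binom{w}{2}$ pairs must be compared'' bound. Substituting $w=\Omega(y^2/(2f+y))$ into $w^{1+1/r}$ and splitting into the three regimes $f=0$, $0<f\le y/4$, and $f\ge y/4$ reproduces the three cases of $g_r(y,f)$ after absorbing constants into $c$, and yields exactly $r\bigl(y^{1+1/r}/(c\,g_r(y,f))-y\bigr)$; the additive $-y$ accounts for singleton comparisons in the residual rounds that carry no pairwise information.

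The main obstacle is the extremal step. Finding a $G_{YY}$-independent $W$ is routine, but the delicate part is simultaneously controlling the $Y$-$Z$ queries so that, after conditioning, the relative order of $W$ is still uniform on $w!$: one must argue that the adversary can commit to round-1 $Y$-$Z$ answers consistently with enough slack in $R$ that every arrangement of $W$ remains feasible. The Markov truncation $d_Z(i)\le 2e'/y$ is what makes this possible and is what produces the $e'/2$ in the definition of $f$; carrying this argument through uniformly in all three regimes of $g_r$ is the most technical piece, while the rest of the proof is standard Alon--Azar plus routine optimization.
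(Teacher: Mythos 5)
First, a framing point: the paper does not prove this statement at all --- it is Theorem~2.1 of Alon--Azar, imported verbatim as a black box (``we use the lower bound for sorting \ldots as a blackbox''). So there is no in-paper proof to compare against; your proposal has to be judged as a reconstruction of the Alon--Azar argument.

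Judged that way, it has a genuine gap, and it sits exactly in the regime where the theorem has content and where the paper actually applies it. Your plan is to extract, via Markov truncation plus Tur\'an, an uncompared subset $W$ of size $w=\Theta\bigl(y^2/(2f+y)\bigr)$ and then apply the $f=0$ sorting bound $\Omega\bigl(r(w^{1+1/r}-w)\bigr)$ to $W$. In the regime $f\geq y/4$ this yields a main term of order $\bigl(y^2/f\bigr)^{1+1/r}=y^{2+2/r}f^{-1-1/r}$, whereas the theorem claims $y^{1+1/r}/\bigl(c\,g_r(y,f)\bigr)=y^{1+2/r}/\bigl(c^{1+1/r}f^{1/r}\bigr)$. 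The ratio of the claimed bound to yours is $f/y$, so your bound is \emph{weaker} by a factor of $f/y$ whenever $f\gg y$; your assertion that the substitution ``reproduces the three cases of $g_r$'' is simply false in the third case. This is not a removable constant: in Case~1 of Lemma~\ref{lem:stopk_lb3} the theorem is invoked with $y=k$ and $f/y=k^{(2r-1)/r}/n^{(2r-2)/r}$, which is as large as $n^{1/r}$, so the Tur\'an route loses a polynomial factor and would not support the paper's lower bound. A concrete sanity check with $r=1$, $f=y^{3/2}$: the theorem (correctly) asserts $\Omega(y^{3/2})$ pairs must still be compared in the last round, because a graph with $f$ edges under a random ordering leaves $\Omega(y^3/f)$ incomparable pairs in expectation; your independent set only certifies $\binom{w}{2}=\Theta(y)$ of them.

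The missing idea is that the Alon--Azar argument never passes to an uncompared subset. It keeps all of $Y$ and counts linear extensions: for a comparison (multi)graph with $Y$-degrees $d_v$ (a $Y$--$Z$ comparison increments only one $Y$-degree, which is where $f=e+e'/2$ and $\sum_v d_v=2f$ come from), the expected number of orderings of $Y$ consistent with the round-one answers is still at least roughly $y!\big/\prod_v(1+d_v)\geq y!\,\bigl(y/(2f+y)\bigr)^{y}$. In entropy terms only $O\bigl(y\log(f/y)\bigr)$ bits have been learned, so the residual instance retains almost all of its $y\log y$ bits even when $f\gg y$ --- precisely the information your size-$y^2/f$ subset discards. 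The recursion over the remaining $r$ rounds is then run on this nearly-full-entropy instance, which is how the exponent $y^{1+2/r}f^{-1/r}$ (rather than $(y^2/f)^{1+1/r}$) arises. Your treatment of the $e'/2$ weighting and of the $f\leq y/4$ and $f=0$ regimes is fine, but the extremal-subset step must be replaced by this linear-extension counting for the proof to deliver the stated bound.
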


\begin{lemma}
\label{lem:stopk_lb3}
For $r \geq 3$, $r$-round algorithm needs $\Omega(n^{2/r} k^{(r-1)/r} +n)$ comparisons in expectation to solve sorted top-$k$.
\end{lemma}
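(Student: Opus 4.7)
The lemma's bound splits into a trivial $\Omega(n)$ term and a main $\Omega(n^{2/r}k^{(r-1)/r})$ term. The trivial part is immediate: in any correct algorithm every item must appear in at least one comparison, because an uncompared item has undetermined rank and so undetermined membership in the top-$k$; hence $2T\ge n$, giving $T=\Omega(n)$. The remainder of the plan is devoted to the main term, which I will obtain by reducing to sorting $k$ items in fewer rounds via Theorem \ref{thm:AA88}.

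\textbf{Main reduction via Yao's principle.} By Yao's minimax principle, fix a uniformly random input permutation (so the top-$k$ set $Y$ is a uniformly random $k$-subset of $[n]$) and an arbitrary deterministic $r$-round algorithm of expected cost $T$. Round-1 comparisons are chosen before any feedback is available, hence are oblivious to $Y$; thus if $T_1$ is the number of round-1 comparisons, then by symmetry
\[
\E[e_1]=T_1\binom{k}{2}\big/\binom{n}{2}=O(T_1 k^2/n^2),\qquad \E[e_1']=T_1 k(n-k)\big/\binom{n}{2}=O(T_1 k/n),
\]
where $e_1,e_1'$ count round-1 comparisons inside $Y$ and straddling $Y$; thus $\E[f_1]=\E[e_1+e_1'/2]=O(T_1 k/n)$. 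Now reveal $Y$ (and the $k$ ranks it occupies) to the algorithm for free at the end of round 1, which only decreases cost; the residual task is sorting $k$ items in $r-1$ rounds, to which Theorem \ref{thm:AA88} applies with $y=k$, $r_{\mathrm{thm}}=r-1$ and $f=f_1$. Using convexity of $f\mapsto 1/g_{r-1}(k,f)$ to push the expectation inside, summing with $T_1$, and optimizing over $T_1$ gives $T=\Omega(\max(k^{r/(r-1)}, k n^{1/r}))$. Together with the trivial bound this already matches the target $n^{2/r}k^{(r-1)/r}$ when $k\ge n^{2(r-1)/(2r-1)}$, and when $k\le n^{(r-2)/(r-1)}$ the target is $\le n$ and the trivial bound suffices.

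\textbf{Closing the intermediate range; main obstacle.} In the remaining range $k\in(n^{(r-2)/(r-1)},\,n^{2(r-1)/(2r-1)})$ the one-round reveal is too weak. Mirroring Algorithm \ref{alg:rsorted2}, the plan is to instead reveal $Y$ at the end of round 2 and apply Theorem \ref{thm:AA88} with $r_{\mathrm{thm}}=r-2$ (valid since $r\ge 3$). The analogous optimization then yields $T=\Omega(k n^{1/(r-1)})$, which dominates the target exactly in this regime; combined with the $\Omega(n)$ bound this finishes the lemma. The key additional input is the bound $\E[f_{1,2}]=O(T_{1,2} k/n)$ on the number of sorting-useful comparisons in rounds 1 and 2 combined, where $T_{1,2}=T_1+T_2$; this is the \emph{main obstacle}, because round-2 comparisons may depend on the round-1 transcript and so plain symmetry fails. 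The plan to handle it is a conditional argument: the posterior of $Y$ given the round-1 transcript $\tau_1$ is uniform over $k$-subsets consistent with the partial order revealed by round 1, so for any pair $(a,b)$ the conditional probability that $(a,b)$ touches $Y$ can be controlled; averaging over the algorithm's (adversarial) choice of $T_2$ pairs as a function of $\tau_1$ and over the input then yields the desired $\E[f_{1,2}] = O(T_{1,2} k/n)$, and plugging into Theorem \ref{thm:AA88} completes the proof.
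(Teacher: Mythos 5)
Your skeleton --- reveal the top-$k$ set after one or two rounds, count the comparisons that are ``useful'' for sorting it, and invoke Theorem \ref{thm:AA88} with $r-1$ or $r-2$ residual rounds --- is exactly the paper's, but both of your quantitative steps fall short of the stated bound. First, in the one-round-reveal case you set $f_1=e_1+e_1'/2=O(T_1k/n)$, keeping the straddling comparisons. Since here $Z$ consists of items all ranked below every item of $Y$, a $Y$--$Z$ comparison has a deterministic outcome and carries no information about the order within $Y$, so one may take $e'=0$ and count only comparisons with both endpoints in the top-$k$; this is what the paper does, and it gives the fraction $k^2/n^2$ rather than $k/n$. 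The difference matters: with $f_1=O(T_1k/n)$, balancing $T_1+k^{1+1/(r-1)}(n/T_1)^{1/(r-1)}$ gives only $\Omega(kn^{1/r})$, which is below the target $n^{2/r}k^{(r-1)/r}$ by the factor $(n/k)^{1/r}$ for every $k<n$ (also, the algorithm chooses $T_1$, so you obtain the \emph{minimum} of the two expressions you list, not their maximum, and $kn^{1/r}$ is the smaller one throughout the relevant range). With the fraction $k^2/n^2$ the same balance lands exactly on $n^{2/r}k^{(r-1)/r}$, which is why the paper's Case 1 covers all $k\ge n^{(2r-2)/(2r-1)}$.

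Second, and more seriously, your key claim for the intermediate range, $\E[f_{1,2}]=O(T_{1,2}\,k/n)$, is false, and the posterior-uniformity argument you sketch cannot rescue it. The posterior of the ranking given the round-1 transcript is uniform over \emph{consistent} orderings, but the transcript can confine the top-$k$ to a small candidate set: with $T_1$ round-1 comparisons against random pivots the algorithm learns a superset of the top-$k$ of size about $w=n^2/T_1$ and may then spend \emph{all} of round 2 inside it, so a $\Theta(k/w)\gg k/n$ fraction of round-2 comparisons touch the top-$k$. The paper's Case 2 confronts this directly: it fixes $w=n^{(2r-2)/r}/k^{(r-1)/r}$, shows that with constant probability a set $W$ of $\Omega(w)$ top-$w$ items receives no intra-$W$ comparisons in round 1 (so the algorithm cannot tell which of them lie in the top-$k$), bounds the useful round-2 comparisons by $O(T_2\cdot k/w)$, and only then applies Theorem \ref{thm:AA88} with $r-2$ rounds --- and the exponents work out so that this still yields $\Omega(n^{2/r}k^{(r-1)/r})$. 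Without introducing this intermediate scale $w$, your plan does not close the range $n^{(r-2)/(r-1)}<k<n^{(2r-2)/(2r-1)}$.
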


\begin{proof}
Wlog we assume the algorithm gives each item a label in $[n]$. As the algorithm has no information about the ordering, the list of these labels (the label of the rank-1 item, the label of the rank-2 item,...) is a uniformly random permutation of $[n]$. After that, notice that an $r$-round randomized algorithm is just a distribution over $r$-round deterministic algorithms. To prove a lower bound on the expected number of comparisons, it suffices to only consider deterministic algorithms. And now the randomness only comes from how items are labeled. There are three cases to consider depending on the values of $n$ and $k$:

\textbf{Case 1: $k \geq n^{(2r-2)/(2r-1)}$.} If in the first round, the algorithm is using at least $n^{2/r} k^{(r-1)/r}$ comparisons, then we already have the lower bound. Now assume that in the first round, the algorithm is using fewer than $n^{2/r} k^{(r-1)/r}$ comparisons. In expectation, only $O(\frac{k^2}{n^2})$ fraction of comparisons in the first round are between two items in top-$k$ (we call them useful comparisons). Therefore in expectation, the algorithm has fewer than $O(n^{2/r} k^{(r-1)/r} \cdot (\frac{k}{n})^2)$ useful comparisons. By giving some extra useful comparisons to the algorithm's first round, we can make sure that the algorithm always have at least $k/4$ useful comparisons in the first round and the expected number of useful comparisons in the first round is $O(n^{2/r} k^{(r-1)/r} \cdot (\frac{k}{n})^2 + k ) = O(n^{2/r} k^{(r-1)/r} \cdot (\frac{k}{n})^2)$.

Now consider the case that we tell the algorithm the set of top-$k$ items after the first round and the algorithm just need to sort the set of $k$ items in the remaining $r-1$ rounds. By Theorem \ref{thm:AA88} and the convexity of $f(x) = 1/x$ for $x>0$, the number of comparisons we need to use in the last $r-1$ rounds is at least 
\[
\Omega\left(\frac{k^{1+1/(r-1)}}{(n^{2/r} k^{(r-1)/r} \cdot (\frac{k}{n})^2\cdot \frac{1}{k})^{1/(r-1)} }\right) = \Omega(n^{2/r} k^{(r-1)/r}).
\]

\textbf{Case 2: $n^{(r-2)/(r-1)} \leq k < n^{(2r-2)/(2r-1)}$.} Set $w = \frac{n^{(2r-2)/r}}{k^{(r-1)/r}}$. If in the first round, the algorithm is using at least $\frac{1}{8}\cdot n^{2/r} k^{(r-1)/r}$ comparisons, then we already have the lower bound. Now assume that in the first round, the algorithm is using less than $\frac{1}{8} \cdot n^{2/r} k^{(r-1)/r}$ comparisons. In this case the expected number of comparisons between 2 items in top-$w$ is at most 
\[
\frac{1}{8} \cdot n^{2/r} k^{(r-1)/r} \cdot \frac{\binom{w}{2}}{\binom{n}{2}} \leq \frac{1}{8} \cdot n^{2/r} k^{(r-1)/r} \cdot \frac{w^2}{n^2} \leq \frac{1}{8} \cdot w.
\]
Define $W$ to be the set of items in top-$w$ that have no comparisons with any other items in top-$w$ in the first round. Define $K$ to be the set of items in top-$k$ that have no comparisons with any other items in top-$w$ in the first round. We have $\E[|W|]] \geq w -  \frac{1}{8} \cdot w \cdot 2 \geq 3w/4$ and $\E[|K|] = \E[|W|] \cdot \frac{k}{w} \geq 3k/4$. Then by Markov inequality, we have that with probability $1-1/3-1/3 = 1/3$, both $|W| \geq w/2$ and $|K| \geq k/2$. 

Fix the comparison results of the first round. Now we will focus on the situation when $|W| \geq w/2$ and $|K| \geq k/2$. It suffices to show that in this case, the algorithm needs $\Omega(n^{2/r} k^{(r-1)/r} )$ comparisons in expectation to solve sorted top-$k$ in $r-1$ rounds. We remove some items from $K$ and $W$ such that $K \subseteq W$, $|K| = k/2$ and $|W| = w/2$. We also tell the algorithm the rank of each item in $[n] \backslash W$. Now consider the next round of the algorithm (the second round). If the algorithm uses at least $n^{2/r} k^{(r-1)/r}$ comparisons in expectation, then we are done. Now assume the algorithm uses fewer than $n^{2/r} k^{(r-1)/r}$ comparisons in expectation in the second round. Call comparisons involving at least one item in $K$ as useful comparisons. Since the algorithm has no information about which items in $W$ are in $K$, the algorithm has at most 
$n^{2/r} k^{(r-1)/r} \cdot \frac{2k}{w} = \frac{2k^{(3r-2)/r}}{n^{(2r-4)/r}}$ useful comparisons in expectation. By giving some extra useful comparisons to the algorithm's second round, we can make sure that the algorithm always have at least $k/2$ useful comparisons in the second round and the expected number of useful comparisons in the second round is $O(\frac{k^{(3r-2)/r}}{n^{(2r-4)/r}} + k ) = O(\frac{k^{(3r-2)/r}}{n^{(2r-4)/r}})$.

By Theorem \ref{thm:AA88} and the convexity of $f(x) = 1/x$ for $x>0$, the number of comparisons we need to use in the last $r-2$ rounds is at least 
\[
\Omega\left(\frac{k^{1+1/(r-2)}}{\left(\frac{k^{(3r-2)/r}}{n^{(2r-4)/r}} \cdot \frac{1}{k}\right)^{1/(r-2)}} \right)= \Omega(n^{2/r} k^{(r-1)/r}).
\]

\textbf{Case 3: $k < n^{(r-2)/(r-1)}$.} In this case, we have $n^{2/r} k^{(r-1)/r} < n$. We know that just to find the set of top-$k$ without round constraint we need at least $\Omega(n)$ comparisons. This simply gives the lower bound. 
\end{proof}

\begin{lemma}
\label{lem:stopk_lb2}
 Any $2$-round algorithm needs $\Omega(n \sqrt{k}+n^{4/3})$ comparisons in expectation to solve sorted top-$k$.
\end{lemma}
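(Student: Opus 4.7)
The plan is to prove the two lower bounds $\Omega(n^{4/3})$ and $\Omega(n\sqrt{k})$ separately, then combine them by taking the maximum. The $\Omega(n^{4/3})$ bound is immediate: sorted top-$k$ implicitly solves (unordered) top-$k$, and the $2$-round noiseless top-$k$ lower bound is already known to be $\Omega(n^{4/3})$ from prior work. It remains to prove $\Omega(n\sqrt{k})$, and we may restrict to the regime $k \geq n^{2/3}$, since otherwise $n\sqrt{k} \leq n^{4/3}$ and the first bound suffices.

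I would proceed along the lines of Case~1 of the proof of Lemma~\ref{lem:stopk_lb3}, specialized to $r = 2$. First, standard arguments let us assume the algorithm is deterministic, so the only randomness is in the uniformly random rank-labelling. Suppose the algorithm uses fewer than $\tfrac{1}{8} n\sqrt{k}$ comparisons in round $1$ (otherwise we are already done). Call a comparison \emph{useful} if both items compared are in the top-$k$. The fraction of round-$1$ comparisons that are useful in expectation is at most $\binom{k}{2}/\binom{n}{2} \leq (k/n)^2$, giving an expected count of $O(k^{5/2}/n)$. As in Lemma~\ref{lem:stopk_lb3}, I would augment the algorithm to guarantee at least $k/4$ useful comparisons in round~$1$; this preserves $\E[f] = O(k^{5/2}/n + k) = O(k^{5/2}/n)$ (using $k \geq n^{2/3}$) and ensures we land in the $f/y \geq 1/4$ regime of Theorem~\ref{thm:AA88}.

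Next, I would reveal the set of top-$k$ items to the algorithm at the end of round~$1$; this can only make its task easier, so any resulting lower bound still applies. The remaining problem is to sort the $k$ revealed items in $1$ more round. Applying Theorem~\ref{thm:AA88} with $r=1$ remaining round and $y=k$, the number of round-$2$ comparisons is at least $\Omega(k^3/f - k)$. Taking expectation and using the convexity of $1/x$ via Jensen's inequality,
\[
\E\!\left[\frac{k^3}{f}\right] \;\geq\; \frac{k^3}{\E[f]} \;=\; \Omega\!\left(\frac{k^3}{k^{5/2}/n}\right) \;=\; \Omega(n\sqrt{k}),
\]
contradicting our initial assumption. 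The main obstacle is the careful bookkeeping around the Jensen step: one must ensure $f$ is strictly positive (handled by the ``at least $k/4$ useful'' augmentation) and justify that, once the top-$k$ set is revealed, only the within-top-$k$ first-round comparisons matter — cross comparisons between top-$k$ and non-top-$k$ items contribute nothing to the entropy of the internal ranking of the top-$k$, so effectively taking $e' = 0$ in Theorem~\ref{thm:AA88} is legitimate.
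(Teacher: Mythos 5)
Your proposal is correct and follows essentially the same route as the paper: the $k\ge n^{2/3}$ regime is handled exactly as in Case~1 of Lemma~\ref{lem:stopk_lb3} specialized to $r=2$ (bound the expected number of within-top-$k$ first-round comparisons by $O(k^{5/2}/n)$, pad to at least $k/4$, reveal the top-$k$ set, and apply Theorem~\ref{thm:AA88} with Jensen), while the $\Omega(n^{4/3})$ term comes from the known $2$-round selection lower bound of prior work. Your explicit remark that the $Y$-vs-$Z$ comparisons are uninformative (so $e'$ can be taken to be $0$) is a point the paper leaves implicit, but otherwise the arguments coincide.
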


\begin{proof}
Similarly as Lemma \ref{lem:stopk_lb3}, it suffices to prove the lower bound only for deterministic algorithms. There are two cases:

\textbf{Case 1: $k \geq n^{2/3}$.} Notice that this case is the same as Case 1 in the proof of Lemma \ref{lem:stopk_lb3} and that proof also works when $r=2$. So from that we get a lower bound $\Omega(n \sqrt{k}) = \Omega(n \sqrt{k}+ n^{4/3})$. 

\textbf{Case 2: $k < n^{2/3}$.}  In this case we have $n^{4/3} > n \sqrt{k}$. Notice that if an algorithm solves sorted top-$k$, it also finds top-1. \cite{AlonA88b} claims in the concluding remark that a 2-round algorithm which finds top-1 needs $\Omega(n^{4/3})=\Omega(n \sqrt{k}+ n^{4/3})$ comparisons in expectation. This directly implies the lemma in this case. 
\end{proof}

\section{Top-$k$ in the Noisy Case}
\label{sec:topk_n}

In this section, we show algorithms for top-$k$ in the noisy case. In particular, we show an 1-round top-$k$ algorithm with sample complexity $O(n^2)$ in Section \ref{sec:topk1} and a 2-round top-$k$ algorithm with sample complexity $O(n^{4/3})$ in Section \ref{sec:topk2}.

\subsection{1-Round Top-$k$ Algorithm in the Noisy Case}
\label{sec:topk1}

In this sub-section, we show an 1-round top-$k$ algorithm with sample complexity $O(n^2)$ (Lemma \ref{lem:1topk}).
 
\begin{algorithm}[H]
    \caption{$1$-round algorithm for top-$k$ with noisy comparisons}
    	\label{alg:1topk}
    \begin{algorithmic}[1]
    	\STATE Set $c = 18 \cdot 48$, $l_0 = n$ and $l_i =\max\left( \underbrace{\log...\log}_{i \text{ times}} (n), 1\right)$ for $i = 1,...,\log^*(n)$ . 
	\STATE For $i=1,...,\log^*(n)$, pick a set $S_i$ of $n/l_i^2$ random items.
    	\STATE \textbf{Round 1:}
	\begin{ALC@g}
    	\FOR {$i = 1$ to $\log^*(n)$}
		\STATE  For each item in $N$ and each item in $S_i$, compare them $c \cdot l_i$ times.
	\ENDFOR
	\end{ALC@g}
	\STATE \textbf{Output procedure:}
	\begin{ALC@g}
	\STATE Set $N_1 = [n]$, $T_0 = \emptyset$ and $k_1 = k$. 
    	\FOR {$i = 1$ to $\log^*(n)$}
		\STATE For each item $j \in S_i$ and item $j' \in N_i$, if $j$ wins the majority of comparisons among $c \cdot l_i$ comparisons between $j$ and $j'$ in the $i$-th iteration, say $j$ beats $j'$ in the $i$-th iteration. Otherwise say $j'$ beats $j$ in the $i$-th iteration.
		\STATE For each item $j  \in S_i \cap N_i$, define $r_i(j)$ as the number of items in $N_i$ which are not beaten by item $j$ in the $i$-th iteration. 
		\STATE Let $a_i = \argmin_{j \in S_i \cap N_i, r_i(j) \leq k_i} |r_i(j) - k_i|$ and $A_i$ be the set of items in $N_i$ which are not beaten by item $a_i$ in the $i$-th iteration. If $a_i$ does not exist, set $A_i = \emptyset$. 
		\STATE Let $b_i = \argmin_{j \in S_i \cap N_i, r_i(j) > k_i} |r_i(j) - k_i|$ and $B_i$ be the set of items in $N_i$ which don't beat item $b_i$ in the $i$-th iteration. If $b_i$ does not exist, set $B_i = \emptyset$. 
		\STATE $T_i = T_{i-1} \cup A_i$. 
		\STATE $N_{i+1} = N_i \backslash (A_i \cup B_i)$
		\STATE $k_{i+1} = k_i  - |A_i|$. 
	\ENDFOR
	\STATE Output $T_{\log^*(n)}$ as the top-$k$ set. 
	\end{ALC@g}
     \end{algorithmic}
\end{algorithm}

\begin{lemma}
\label{lem:1topk}
Algorithm \ref{alg:1topk} uses $O(n^2)$ comparisons and outputs top-$k$ correctly with probability at least $2/3$. 
\end{lemma}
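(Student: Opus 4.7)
My plan is to separate the analysis into a sample-complexity computation and a correctness argument built from a Chernoff-plus-union-bound over the levels, combined with an induction that the sets $N_i$ shrink doubly exponentially in size.

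The total sample complexity is
\[
\sum_{i=1}^{\log^*(n)} |N|\cdot |S_i|\cdot c l_i \;=\; c n^2 \sum_{i=1}^{\log^*(n)} \frac{1}{l_i}.
\]
Since $l_{i+1}=\log l_i$ down to $l_{\log^*(n)}=1$, we have $l_i \geq 2^{l_{i+1}}$, hence $1/l_i \leq 2^{-l_{i+1}}$; reading the series from the tail back, $\sum_i 1/l_i$ is dominated by its last few terms and is $O(1)$, which yields a total of $O(n^2)$ comparisons.

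For correctness, I would introduce the good event $E_i$ at level $i$: every majority in the $i$-th iteration between an item in $S_i\cap N_i$ and an item in $N_i$ with distinct true ranks agrees with the true order. Each such majority is wrong with probability at most $e^{-c l_i/48}=e^{-18 l_i}$ by a Chernoff bound on $c l_i$ independent Bernoulli$(2/3)$ trials, using $c=18\cdot 48$. I would then prove by induction that conditional on $E_1\cap\cdots\cap E_{i-1}$: (a) $T_{i-1}$ is contained in the true top-$k$; (b) $N_i$ is exactly the unique ``chunk'' of $N_{i-1}$, determined by $S_{i-1}\cap N_{i-1}$ in the true rank order, that contains the rank-$k_{i-1}$ item of $N_{i-1}$; (c) $k_i=k-|T_{i-1}|$ is the correct residual top count inside $N_i$. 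Given (b), $|N_i|$ equals a specific gap in a uniform random sample of size $|S_{i-1}\cap N_{i-1}|=\Theta(|N_{i-1}|/l_{i-1}^2)$ within $N_{i-1}$, so standard concentration yields $|N_i|=O(l_{i-1}^2 \log|N_{i-1}|)$ with failure probability $o(1/\log^*(n))$. Iterating this recursion gives $|N_i|=O(l_{i-1}^3)$ and, crucially, $|N_{\log^*(n)}|=O(1)$.

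Plugging these size bounds into the Chernoff union bound, the failure probability of $E_i$ is at most $|S_i\cap N_i|\cdot|N_i|\cdot e^{-18 l_i} = O(l_{i-1}^6/l_i^2)\cdot e^{-18 l_i}$. Using $l_{i-1}\leq 2^{l_i}$, this is bounded by $e^{O(l_i)-18 l_i}/l_i^2 \leq e^{-12 l_i}/l_i^2$, geometrically small in $l_i$; summing over the $\log^*(n)$ levels (and including the size-concentration failure events) remains $<1/3$ when $c$ is chosen as above. The main obstacle will be the final level, where $l_{\log^*(n)}=1$ and the Chernoff bound only supplies constant-factor concentration: this forces the preceding levels to deliver $|N_{\log^*(n)}|=O(1)$, so the doubly exponential shrinkage of $|N_i|$ is essential rather than cosmetic. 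A secondary subtlety is that $S_i$ is drawn from $[n]$ rather than from $N_i$, making $|S_i\cap N_i|$ a hypergeometric random variable; I would verify $|S_i\cap N_i|=\Theta(|N_i|/l_i^2)$ by a separate Chernoff bound whenever this quantity is at least a large constant, and otherwise rely on the constant-factor slack available once $|N_i|=O(1)$ at the final level.
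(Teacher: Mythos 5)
Your proposal is correct and follows essentially the same route as the paper: the same $O(n^2\sum_i 1/l_i)$ sample-complexity computation, the same per-level good events (majorities consistent with the true order, plus existence of nearby pivots that keep $|N_{i+1}|=O(l_i^3)$), the same Chernoff-plus-union-bound with the sum over levels dominated by its tail, and the same observation that the final level has all items as pivots acting on an $O(1)$-size $N_{\log^*(n)}$. The only quibble is that your per-level size-concentration failure probability is not literally $o(1/\log^*(n))$ at the last few levels (where $l_i=O(1)$ it is a fixed constant, e.g.\ $e^{-\Theta(l_i)}$), but as in the paper the sum of these terms over all levels is still a small constant, so your aggregate bound of $<1/3$ goes through.
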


\begin{proof}
The number of comparisons used is at most
\begin{align*}
&cn^2+  \sum_{i=1}^{\log^*(n)-1} n \cdot \frac{n}{l_i^2} \cdot c \cdot l_i \\
&= cn^2 \cdot \left(1+ \sum_{i=1}^{\log^*(n)-1}\frac{1}{l_i} \right)\\
&\leq cn^2 \cdot  \left(1+\sum_{i=1}^{\log^*(n)-1} \frac{1}{2^{i-1}} \right)\\
&= O(n^2).
\end{align*}
Now we are going to show Algorithm \ref{alg:1topk} is correct with probability at least $2/3$. Consider event $W$ as the intersection of the following events:
\begin{itemize}
\item $W_1$: For each $i = 1,..., \log^*(n)-1$, either $k - 4l_i^3< 1$ or there exists an item $j \in S_i$ such that item $j$ ranks between $k -4l_i^3$ and $k$. 
\item $W_2$: For each $i = 1,...,\log^*(n) - 1$, either $k +4l_i^3 > n$ or there exists an item $j \in S_i$ such that item $j$ ranks between $k+1$ and $k +4l_i^3$. 
\item $W_3$: For each $i = 1,...,\log^*(n)$ and each pair of items $j$ and $ j'$ such that $j\in S_i$ and both $j$ and $j'$ rank between $\min\left(1, k-4l_{i-1}^3\right)$ and $\max\left(n, k+4l_{i-1}^3\right)$, whether $j$ beats $j'$ in the $i$-th iteration is consistent with the true ordering. 
\end{itemize}
Assuming $W$ happens, it is easy to check by induction that the following is true, for $i=1,...,\log^*(n)$:
\begin{itemize}
\item  $N_i$ is a subset of items rank between $\min\left(1, k-4l_{i-1}^3\right)$ and $\max\left(n, k+4l_{i-1}^3\right)$. Items in $N_i$ have consecutive ranks. The $k_i$-th item in $N_i$ is the $k$-th item of the entire set. 
\item $r_i(j)$ is the true rank of item $j$ in $N_i$.
\item Items in $A_i$ are all in top-$k$ and items in $B_i$ are all in bottom-$(n-k)$.
\end{itemize}
Finally notice that in the last iteration $S_{\log^*(n)} = [n]$ and $N_{\log^*(n)+1} = \emptyset$. Therefore all items are placed in some $A_i$ or $B_i$. Since $T_{\log^*(n)} = \bigcup_{i=1}^{\log^*(n)} A_i$, the algorithm outputs correctly. To sum up, so far we have that, $W$ implies the algorithm outputs correctly. 

Now it suffices to show that $W$ happens with probability at least $2/3$. We are going to analyze $W_1,W_2,W_3$ and then take a union bound.
\begin{itemize}
\item The probability that $W_1$ does not hold (i.e. $1-\Pr[W_1]$) is at most
\[
\sum_{i=1}^{\log^*(n)-1} (1-4l_i^3/n)^{\frac{n}{l_i^2}} \leq \sum_{i=1}^{\log^*(n)-1}e^{-4l_i} \leq 2 e^{-4l_{\log^*(n)-1}} \leq 2e^{-4}.
\]
\item Similarly as the previous bullet, $\Pr[W_2] \geq 1- 2e^{-4}$. 
\item In the $i$-th iteration, for each pair of items $j$ and $ j'$ such that $j\in S_i$ and  both $j$ and $j'$ rank between $\min\left(1, k-4l_{i-1}^3\right)$ and $\max\left(n, k+4l_{i-1}^3\right)$, there are $c \cdot l_i$ comparisons between $j$ and $j'$. By Chernoff bound, the majority of these comparisons differs from the true ordering with probability at most 
\[
\exp\left(-\frac{2}{3} c \cdot l_i \cdot \left(\frac{1}{4}\right)^2 \cdot \frac{1}{2}\right) = \exp(-c \cdot l_i/ 48). 
\]
By union bound, 
\begin{align*}
1-\Pr[W_3] &\leq \sum_{i=1}^{\log^*(n)} (8l_{i-1}^3)^2 \cdot \exp(-c \cdot l_i/ 48)\\
& \leq \sum_{i=1}^{\log^*(n)} (8l_{i-1}^3)^2 \cdot \frac{1}{2^{11} \cdot l_{i-1}^7 } \leq 2 \cdot \frac{1}{2^5 \cdot  l_{log^*(n)-1} } \leq \frac{1}{16}.
\end{align*}
\end{itemize}
To sum up, by union bound,
\[
\Pr[W] \geq 1 - (1-\Pr[W_1]) - (1-\Pr[W_2])-(1-\Pr[W_3]) \geq 1 - \frac{4}{e^{4}} - \frac{1}{16} > 2/3.
\]
\end{proof}

\subsection{2-Round Top-$k$ Algorithm in the Noisy Case}
In this sub-section, we show a 2-round top-$k$ algorithm with sample complexity $O(n^{4/3})$ (Corollary \ref{cor:2topk}).

For convenience we will assume $\min(k,n-k) \geq 40 \cdot \log(n) \cdot n^{2/3}$. When $\min(k,n-k)$ is small, we can add $\Theta(\log(n) \cdot n^{2/3})$ dummy items as the top and bottom ones and then the problem reduces to the case when $\min(k,n-k)$ is large. This procedure only blows up the sample complexity by a constant factor. 

\label{sec:topk2}
\begin{algorithm}[H]
    \caption{$2$-round algorithm for top-$k$ with noisy comparisons}
    	\label{alg:2topk}
    \begin{algorithmic}[1]
    	\STATE Halt the algorithm whenever it uses more than $c_0 \cdot n^{4/3}$ comparisons. $c_0$ is a constant specified in the proof. 
    	\STATE \textbf{Round 1:} Run Algorithm \ref{alg:2topk1}.
    	\STATE \textbf{Round 2:} Run Algorithm \ref{alg:2topk2}.
		\STATE Output $T_{\log^*(n)}$ as the top-$k$ set. 

     \end{algorithmic}
\end{algorithm}

\begin{algorithm}[H]
    \caption{First round of Algorithm \ref{alg:2topk}}
    	\label{alg:2topk1}
    \begin{algorithmic}[1]
    	\STATE Comparisons:
		\begin{ALC@g}
    	\STATE (a) Pick a set $S$ of $n^{1/3}$ random items. For each item in $N$ and each item in $S$, compare them $c_1 = 72\cdot 32$ times. 
    	\STATE (b) For each pair of items in $S$, compare them $100\log(n)$ times. 
    	\end{ALC@g}
	\STATE Sort items in $S$ according to comparisons in (b): label items in $S$ as $s_1,...,s_{n^{1/3}}$ such that for each $i<j$, $s_i$ wins the majority of comparisons between $s_i$ and $s_j$ in (b). If such labeling does not exist, label them arbitrarily. 
	\STATE For each $i \in N$ and $s_j \in S$, set $X_{i,j} = 1$ if $j$ wins the majority of the comparisons in step (a). Otherwise set $X_{i,j} = -1$. Let $p(i) =\arg \max_J \sum_{j=1}^J X_{i,j}$ for each $i \not\in S$. For each $s_j \in S$, set $p(s_j) = j$. Let $P_j = \{i | p(i) = j\}$. 
	\STATE Let $m$ be the minimum $J$ such that $\sum_{j=1}^J |P_j| \geq k$. 
     \end{algorithmic}
\end{algorithm}

\begin{algorithm}[H]
    \caption{Second round of Algorithm \ref{alg:2topk}}
    	\label{alg:2topk2}
    \begin{algorithmic}[1]
\STATE Set  $l_i =\max\left( \underbrace{\log...\log}_{i \text{ times}} (n), 1\right)$ for  $i = 1,...,\log^*(n)$ and $l_0 = n$. Pick a set $S_i$ of $ |\bigcup_{j=m-l_i}^{m+l_i} P_j|/ l_i^4$ random items in $\bigcup_{j=m-l_i}^{m+l_i} P_j$ for $i = 1,...,\log^*(n)$. 
	\STATE \textbf{Comparisons:} 
	\begin{ALC@g}
    	\FOR {$i = 1$ to $\log^*(n)$}
		\STATE  For each item in $\bigcup_{j=m-l_i}^{m+l_i} P_j$ and each item in $S_i$, compare them $c_2 \cdot l_i$ times.
	\ENDFOR
	\end{ALC@g}
	\STATE Set $N_1 = \bigcup_{j=m-l_i}^{m+l_i} P_j$, $T_0 = \bigcup_{j=0}^{m-l_1-1} P_j$ and $k_1 = k - \sum_{j=0}^{m-l_1-1} |P_j|$. 
    	\FOR {$i = 1$ to $\log^*(n)$}
		\STATE For each item $j \in S_i$ and item $j' \in N_i$, if $j$ wins the majority of comparisons among $c_2 \cdot l_i$ comparisons between $j$ and $j'$ in the $i$-th iteration, say $j$ beats $j'$ in the $i$-th iteration. Otherwise say $j'$ beats $j$ in the $i$-th iteration.
		\STATE For each item $j  \in S_i \cap N_i$, define $r_i(j) = $ number of items in $N_i$ which are not beaten by item $j$ in the $i$-th iteration. 
		\STATE Let $a_i = \argmin_{j \in S_i \cap N_i, r_i(j) \leq k_i} |r_i(j) - k_i|$ and $A_i$ be the set of items in $N_i$ which are not beaten by item $a_i$. If $a_i$ does not exist, set $A_i = \emptyset$. 
		\STATE Let $b_i = \argmin_{j \in S_i \cap N_i, r_i(j) > k_i} |r_i(j) - k_i|$ and $B_i$ be the set of items in $N_i$ which don't beat item $b_i$. If $b_i$ does not exist, set $B_i = \emptyset$. 
		\STATE $T_i = T_{i-1} \cup A_i$. 
		\STATE $N_{i+1} = N_i \backslash (A_i \cup B_i)$
		\STATE $k_{i+1} = k_i  - |A_i|$. 
	\ENDFOR
     \end{algorithmic}
\end{algorithm}

We prove the following corollary for Algorithm \ref{alg:2topk}. It's based on Lemma \ref{lem:2r-1r} (for the first round of the algorithm) and Lemma \ref{lem:2r-2r} (for the second round of the algorithm). 
\begin{corollary}
\label{cor:2topk}
Algorithm \ref{alg:2topk} uses $O(n^{4/3})$ comparisons and outputs top-$k$ correctly with probability at least $2/3$.
\end{corollary}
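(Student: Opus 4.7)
The plan is to establish two intermediate lemmas—correctness of the first round's partitioning (Lemma \ref{lem:2r-1r}) and correctness of the second round's multi-level classification (Lemma \ref{lem:2r-2r})—and then combine them by a union bound; the comparison budget will follow from explicit summations together with the explicit halting safeguard at $c_0 n^{4/3}$. Round 1's budget is immediate: step (a) uses $c_1 n \cdot n^{1/3}$ comparisons and step (b) uses $100\log(n)\binom{n^{1/3}}{2}$, both $O(n^{4/3})$. For Round 2, the key structural fact I would prove is that each partition cell satisfies $|P_j| = O(n^{2/3})$ with probability $1-1/\mathrm{poly}(n)$; given this, the level-$i$ cost of Algorithm \ref{alg:2topk2} is
\[
\Big|\bigcup_{j=m-l_i}^{m+l_i} P_j\Big|\cdot|S_i|\cdot c_2 l_i \;=\; O(l_i n^{2/3})\cdot O(n^{2/3}/l_i^3)\cdot c_2 l_i \;=\; O(n^{4/3}/l_i),
\]
and because $l_1,l_2,\ldots$ grows like an iterated exponential, $\sum_i 1/l_i = O(1)$, giving $O(n^{4/3})$ in total; a large enough constant $c_0$ ensures that the halting safeguard triggers only with probability $o(1)$.

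For Round 1 (Lemma \ref{lem:2r-1r}) the crux is a biased random walk analysis of $p(i)=\arg\max_J \sum_{j=1}^J X_{i,j}$. I would first condition on the event $E_S$ that the $100\log(n)$-fold majority votes of step (b) recover the true order on $S$; Chernoff plus a union bound over the $\binom{n^{1/3}}{2}$ pairs give $\Pr[E_S] \geq 1 - 1/\mathrm{poly}(n)$. Under $E_S$, let $j^\ast(i)$ be the index of the chunk to which item $i$ truly belongs. Then, independently across $j$, $\Pr[X_{i,j}=+1] \geq 2/3$ for $j \leq j^\ast(i)$ and $\Pr[X_{i,j}=-1] \geq 2/3$ for $j > j^\ast(i)$, so the partial-sum walk $\sum_{j=1}^J X_{i,j}$ has expected slope $\geq 1/3$ before $j^\ast(i)$ and $\leq -1/3$ afterward. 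Standard Hoeffding/Azuma concentration then yields $\Pr[|p(i)-j^\ast(i)| \geq \ell] \leq 2\exp(-\Omega(\ell))$. Summing these geometric tails gives $\E[|P_j|] = O(n^{2/3})$, and since, conditional on $S$ and $E_S$, the placements of different items $i$ use disjoint noisy coins, a further Chernoff bound upgrades this to a simultaneous pointwise bound $|P_j| = O(n^{2/3})$ with probability $1 - 1/\mathrm{poly}(n)$.

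For Round 2 (Lemma \ref{lem:2r-2r}) I would mirror the induction in Lemma \ref{lem:1topk}. The displacement tail from Round 1 guarantees that with probability $1-1/\mathrm{poly}(n)$ every item whose true rank is more than $\Theta(l_1 n^{2/3})$ from $k$ has already been absorbed into $T_0$ or discarded, so $N_1$ is (with high probability) a contiguous rank-block of size $O(l_1 n^{2/3})$ containing the top-$k$/bottom-$(n-k)$ boundary. From this base case the inductive argument proceeds exactly as in Lemma \ref{lem:1topk}: define the coverage events $W_1,W_2$ (at each level $i$ there exists a pivot in $S_i$ within $O(l_i^3)$ positions on either side of the current $k_i$) and the noise event $W_3$ (every majority vote of $c_2 l_i$ comparisons relevant at level $i$ agrees with the truth), establish by Chernoff that each fails with probability $o(1)$, and iterate until $N_{\log^*(n)+1} = \emptyset$, at which point $T_{\log^*(n)}$ is exactly the top-$k$ set.

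The main technical obstacle is the random-walk concentration in Round 1: although each individual $X_{i,j}$ has only a constant bias, I need a whp pointwise bound on $|P_j|$ despite the fact that across items $i$ the placements share the same ordered sequence $S$. The remedy, as outlined, is to condition on $S$ and $E_S$ first and then invoke independence of the noisy coins in step (a) across items. Once Lemmas \ref{lem:2r-1r} and \ref{lem:2r-2r} are in place, the corollary follows from a final union bound over $O(1)$ failure events—$E_S$, the per-chunk displacement bound, the budget overrun, and the Round-2 events $W_1,W_2,W_3$—each of probability comfortably $o(1)$ or below a small constant, leaving success probability well above $2/3$.
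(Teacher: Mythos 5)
Your overall architecture is the same as the paper's: analyze the first round via a biased-random-walk tail on $p(i)$, analyze the second round by the same multi-level induction as the $1$-round algorithm, combine by a union bound, and let the explicit halting safeguard handle the worst-case budget. However, two of your intermediate claims are wrong as stated, and one step is glossed over. First, the structural fact ``each $|P_j| = O(n^{2/3})$ with probability $1-1/\mathrm{poly}(n)$'' is false: even the \emph{true} gap $s_{J+1}-s_J$ between consecutive random pivots exceeds $Cn^{2/3}$ with constant probability $\approx e^{-C}$, and the maximum gap over the $n^{1/3}$ chunks is $\Theta(n^{2/3}\log n)$ with high probability, so no Chernoff upgrade can give a simultaneous pointwise $O(n^{2/3})$ bound. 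The paper avoids this by bounding only the aggregated unions $|\bigcup_{j=m-l_i}^{m+l_i}P_j|\le 100\,l_i n^{2/3}$ (event $W_{1,4}$), using concentration of $|S\cap\{k-40 l_i n^{2/3}+1,\dots,k\}|$ together with a Markov-inequality bound (Lemma~\ref{lem:cj}) on the number of misplaced items --- and even then only with constant success probability, which suffices since a total failure probability of $1/3$ is all that is required. Second, you assert only $\Pr[X_{i,j}=+1]\ge 2/3$; but $X_{i,j}$ is the majority of $c_1=72\cdot 32$ noisy comparisons, so its accuracy is $1-e^{-\Omega(c_1)}$, and this strong constant is what produces the $e^{-15t}$ displacement tail in Lemma~\ref{lem:brw}. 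With a bare $2/3$ per-step bias the tail constant is roughly $\D_e(1/2\,\|\,1/3)\approx 0.06$, and the union bound over all $n$ items at displacement $\ell=\Theta(\log n)$ (the paper's $W''_{1,4}$) gives $n\cdot e^{-0.06\log n}\to\infty$, so your high-probability claim about every item landing within $O(\log n)$ chunks of its true position would not follow.

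Third, the round-2 induction is not literally ``exactly as in Lemma~\ref{lem:1topk}'': at level $i$ the pivots in $S_i$ are compared only to items in $\bigcup_{j=m-l_i}^{m+l_i}P_j$, not to all of $N_i$ a priori, so one must maintain the additional containment invariant $N_i\subseteq\bigcup_{j=m-l_i}^{m+l_i}P_j$ interleaved with the rank-window invariant $N_i\subseteq\{k-6l_i^5+1,\dots,k+6l_i^5\}$; this is precisely the paper's two-step induction $W^i_{2,1}\Rightarrow W^{i+1}_{2,2}$ and $W^{i+1}_{2,2}\Rightarrow W^{i+1}_{2,1}$, the latter relying on the first-round event $W_{1,5}$. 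None of these issues is fatal to the plan --- each is repaired exactly as the paper does --- but as written the proposal rests on one false lemma and one union bound that does not close.
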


\begin{proof}
First of all, since the algorithm halts whenever it uses more than $c_0 \cdot n^{4/3}$ comparisons, the total number of comparisons used is always at most $O(n^{4/3})$. 

By Lemma \ref{lem:2r-1r} and Lemma \ref{lem:2r-2r}, we have
\[
\Pr[W_1 \cap W_2] \geq 1 -1/6 -1/6 \geq 2/3. 
\]
When both $W_1$ and $W_2$ hold, $W_{1,6}$, $W_{2,3}$ and $W_{2,4}$ imply that the algorithm outputs correctly. 
\end{proof}

\subsubsection{First Round of the 2-Round Algorithm}
We first analyze the first round of the algorithm. Consider $W_1$ as the intersection of the following events ($W_1$ happens when all of them happen):
\begin{itemize}
\item $W_{1,1}$: Items in $S$ are correctly sorted.
\item $W_{1,2}$: $s_m$'s rank is between $k - n^{2/3}/20$  and $k-6n^{2/3}$. 
\item $W_{1,3}$: $s_{m+1}$'s rank is between $k+n^{2/3}/20$ and $k+6n^{2/3}$. 
\item $W_{1,4}$: $\forall i = 1,...,\log^*(n)$, $|\bigcup_{j=m-l_i}^{m+l_i} P_j| \leq 100 \cdot l_i\cdot  n^{2/3}$.
\item $W_{1,5}$: $\forall i = 1,...,\log^*(n)-1$, $ \{k-6 l_i^5+1,..., k+ 6 l_i^5\} \subseteq \bigcup_{j=m-l_{i+1}}^{m+l_{i+1}} P_j $
\item $W_{1,6}$: $\bigcup_{j< m-l_1} P_j \subseteq \{1,...,k\}$ and $\bigcup_{j> m+l_1} P_j \subseteq \{k+1,...,n\}$
\end{itemize} 
We use $W_1$ to indicate the success of the first round. We show in Lemma \ref{lem:2r-1r} that $W_1$ happens with probability at least $1-1/6$.

Before proving Lemma \ref{lem:2r-1r}, we first prove Lemma \ref{lem:brw} and Lemma \ref{lem:cj} which analyze the biased random walk and are used in the proof of Lemma \ref{lem:2r-1r}.

\begin{lemma}
\label{lem:brw}
Suppose $S$ are correctly sorted. For each $J$, $s_J \leq i < s_{J+1}$ and $t$,  $\Pr[p(i) - J = t] \leq \alpha^{|t|}$ and $\Pr[|p(i) - J| \geq |t|] \leq 3\alpha^{|t|}$  where $\alpha = 1/e^{15}$. 
\end{lemma}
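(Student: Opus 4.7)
I would prove Lemma \ref{lem:brw} by treating $(X_{i,j})_j$ as a biased $\pm 1$ random walk and bounding its argmax via a Chernoff estimate on each coordinate followed by a union bound on the relevant partial sum.

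Under the hypothesis that $S$ is correctly sorted and $s_J \leq i < s_{J+1}$, the true outcome of each individual comparison between $i$ and $s_j$ is ``$s_j$ beats $i$'' when $j\leq J$ and ``$i$ beats $s_j$'' when $j>J$. Each such pair is compared $c_1 = 72\cdot 32 = 2304$ times with independent noise. A multiplicative Chernoff bound applied to $\mathrm{Bin}(c_1, 2/3)$ gives that the majority disagrees with the truth with probability at most $\beta := e^{-c_1/48} = e^{-48}$. Consequently, the $X_{i,j}$'s are mutually independent and, for each $j$, $X_{i,j}$ takes its ``correct'' value ($+1$ for $j\leq J$, $-1$ for $j>J$) with probability at least $1-\beta$.

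Fix $t>0$. By the definition of the argmax, if $p(i) = J+t$ then the partial sum at $J+t$ must dominate the partial sum at $J$, so $\sum_{j=J+1}^{J+t} X_{i,j}\geq 0$. Writing $Z$ for the number of $+1$'s among these $t$ terms, the event becomes $\{Z\geq \lceil t/2\rceil\}$. Each $X_{i,j}$ in this range equals $+1$ with probability at most $\beta$ (its ``truth'' is $-1$), so a term-by-term bound gives
\[
\Pr[p(i)=J+t] \;\leq\; \sum_{z\geq \lceil t/2\rceil}\binom{t}{z}\beta^z \;\leq\; 2^t\beta^{t/2} \;=\; (4\beta)^{t/2} \;\leq\; e^{-23 t} \;\leq\; \alpha^t,
\]
using $4\beta\leq e^{-46}$ and $\alpha=e^{-15}$. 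The case $t<0$ is perfectly symmetric: $\{p(i)=J+t\}$ then forces $\sum_{j=J+t+1}^{J} X_{i,j}\leq 0$, which requires at least $\lceil |t|/2\rceil$ of those coordinates (each biased toward $+1$) to flip to $-1$, and the same computation applies. The case $t=0$ is trivial. The second statement follows from a geometric sum:
\[
\Pr[|p(i)-J|\geq |t|] \;\leq\; 2\sum_{s\geq|t|}\alpha^s \;=\; \frac{2\alpha^{|t|}}{1-\alpha} \;\leq\; 3\alpha^{|t|},
\]
since $\alpha=e^{-15}$ is tiny enough that $2/(1-\alpha)\leq 3$.

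I do not anticipate a real obstacle here. The generous gap between the per-coordinate Chernoff exponent ($\approx 24$) and the target exponent ($15$) comfortably absorbs both the combinatorial factor $2^t$ and the ceiling in $\lceil t/2\rceil$. The only conceptual ingredient is the observation that an argmax-offset of $t$ forces a non-negative (resp.\ non-positive) partial sum of $|t|$ consecutive terms that are each individually biased in the opposing direction, which is exactly what independent Chernoff controls.
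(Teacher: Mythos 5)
Your proof is correct and follows essentially the same route as the paper: reduce the event $p(i)=J+t$ to the event that the partial sum $\sum_{j=J+1}^{J+t}X_{i,j}$ of $t$ consecutive, independent, oppositely-biased coordinates fails to have its expected sign, and then apply a per-coordinate Chernoff bound followed by a geometric sum for the tail statement. The only cosmetic difference is that the paper bounds the partial-sum deviation with a KL-divergence Chernoff bound while you use the elementary estimate $\sum_{z\ge t/2}\binom{t}{z}\beta^{z}\le(4\beta)^{t/2}$; both leave ample slack over the target exponent $15$.
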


\begin{proof}
When $t =0$, the inequality. By symmetry, it suffices to prove the case when $t > 0$.  If $J + t > n^{1/3}$, $\Pr[p(i) - J = t]  = 0$. Now let's focus on the case when $j+t \leq n^{1/3}$. We have
\[
\Pr[p(i) -J= t] \leq \Pr\left[\sum_{j=1}^J X_{i,j} \leq \sum_{j=1}^{J+t} X_{i,j}\right] = \Pr\left[\sum_{j=J+1}^{J+t} X_{i,j} \leq 0\right].
\]
For each $j \in \{J+1, ...,J+t\}$, by Chernoff bound, we know $\Pr[X_{i,j}=1] > 1-e^{-32}$ . 

By Chernoff bound again, we have
\[
\Pr\left[\sum_{j=J+1}^{J+t} X_{i,j} \leq 0\right] \leq \exp(- \D(1/2 \| 1-e^{-22})\cdot t) = \left(\frac{1}{4(1-e^{-22})e^{-22}} \right)^{-t/2} \leq e^{-15t}.
\]
Finally we have 
\[
\Pr[|p(i) - J| \geq |t|] \leq  \frac{2\alpha^{|t|}}{1-\alpha} \leq 3\alpha^{|t|}.
\]
\end{proof}

\begin{lemma}
\label{lem:cj}
Suppose $S$ are correctly sorted. For each $J$, let $C_J = \sum_{j<J} |P_j| = |\{i | p(i) < J\}|$. For any $\beta > 0$, we have $\Pr[|C_J - s_J| > \beta \cdot n^{2/3}] \leq \frac{1}{5000\beta}$.
\end{lemma}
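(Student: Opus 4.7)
The plan is to bound $\E[|C_J - s_J|] \leq n^{2/3}/5000$ and then apply Markov's inequality. Because $\alpha = e^{-15}$, there are several orders of magnitude of slack against the target, so all estimates can afford to be wasteful.

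First I would rewrite the discrepancy as a signed sum. Set $X_i = \1[p(i) < J] - \1[\text{rank}(i) < s_J]$, so that $\sum_i X_i = C_J - (s_J-1)$. Every pivot $s_j$ contributes $X_{s_j} = 0$, because $p(s_j) = j$ and $\text{rank}(s_j) = s_j$, so it is enough to bound $\sum_{i \notin S} \E[|X_i|]$.

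Next I would bound $\E[|X_i| \mid S]$ for a non-pivot $i$ using Lemma \ref{lem:brw}. Let $J^*(i)$ be the unique index with $s_{J^*(i)} < \text{rank}(i) < s_{J^*(i)+1}$. A short case split shows that $|X_i|$ equals the indicator that the walk's argmax $p(i)$ lands on the wrong side of $J$: if $J^*(i) \leq J-1$ then $\text{rank}(i) < s_J$ forces $|X_i| = \1[p(i) \geq J]$, and if $J^*(i) \geq J$ then $\text{rank}(i) > s_J$ forces $|X_i| = \1[p(i) < J]$. Summing the geometric tail from Lemma \ref{lem:brw} gives $\E[|X_i| \mid S] = O\bigl(\alpha^{|J - J^*(i)|}/(1-\alpha)\bigr)$. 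Grouping by $J^*(i)$ and averaging over $S$,
\[
\E[|C_J - s_J|] \;\leq\; 1 + \frac{O(1)}{1-\alpha} \sum_{d \geq 1} \alpha^d \cdot \E\!\left[|P^*_{J-d}| + |P^*_{J+d-1}|\right],
\]
where $P^*_{J'}$ denotes the $J'$-th true chunk (items with rank in $[s_{J'}, s_{J'+1})$). By the standard gap estimate for a uniformly random sample of $n^{1/3}$ items from $N$, $\E[|P^*_{J'}|] \leq 2 n^{2/3}$ for every $J'$, so summing the two geometric series yields $\E[|C_J - s_J|] \leq O(\alpha n^{2/3}/(1-\alpha)^2) \leq n^{2/3}/5000$. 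Markov's inequality then delivers $\Pr[|C_J - s_J| > \beta n^{2/3}] \leq 1/(5000\beta)$ in the regime $\beta n^{2/3} \geq 1$; the complementary regime is immediate since $|C_J - s_J|$ is a nonnegative integer and the bound $n^{2/3}/5000$ on its expectation already dominates the probability it is at least $1$.

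The only real subtlety is the bookkeeping in the case split: isolating the pivots (which contribute zero), orienting the two indicator cases correctly, and noting that when $J^*(i) = J$ the relevant event still requires a walk step of size at least one (hence an extra factor of $\alpha$). Because the tail from Lemma \ref{lem:brw} is exponential with base $e^{-15}$, no tight chunk-size bound is required anywhere and the crude estimate $\E[|P^*_{J'}|] \leq 2n^{2/3}$ is comfortably enough.
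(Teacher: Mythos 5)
Your proof is correct and follows essentially the same route as the paper's: both bound the expected number of items whose argmax $p(i)$ lands on the wrong side of $J$ by $O\bigl(\alpha n^{2/3}/(1-\alpha)\bigr)$ using the geometric tail from Lemma \ref{lem:brw}, and then finish with Markov's inequality. The only differences are organizational --- the paper argues one side at a time by symmetry and evaluates the expectation exactly via the Chu--Vandermonde identity, whereas you handle both sides via a signed decomposition, group by true chunk, and use the crude bound $\E[|P^*_{J'}|] \leq 2n^{2/3}$ --- and with $\alpha = e^{-15}$ either accounting comfortably clears the $1/5000$ target.
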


\begin{proof}
By symmetry,  it suffices to prove that $\Pr[C_J -s_J > \beta \cdot n^{2/3}] \leq \frac{1}{10000\beta}$. Define $\Delta_J = \{ i | i < s_J, p(i) \geq J\}$ and $\delta_J = |\Delta_J|$. We have $C_J - s_J \leq \delta_J$. We will first give an upper bound on $\E[\delta_J]$. By Lemma \ref{lem:brw} and Chu-Vandermonde identity, we have
\begin{align*}
\E[\delta_J] &\leq \sum_{i=1}^{s_{J}-1} \sum_t \Pr[ \{i+1,...,s_J-1\} \cap S| = t]  \cdot \alpha^{t+1}\\
&=  \sum_{i=1}^{s_J-1} \sum_{t=0}^{n^{1/3}-1} \frac{\binom{s_J-i-1}{t} \cdot \binom{n-s_J+i}{n^{1/3}-t-1}}{\binom{n-1}{n^{1/3}-1}} \cdot \alpha^{t+1} \\
&\leq   \sum_{t=0}^{n^{1/3}-1} \alpha^{t+1} \cdot \sum_{i=0}^{s_{J}-1} \frac{\binom{s_J-i-1}{t} \cdot \binom{n-s_J+i}{n^{1/3}-t-1}}{\binom{n-1}{n^{1/3}-1}} \\
&=  \sum_{t=0}^{n^{1/3}-1} \alpha^{t+1} \cdot \frac{\binom{n}{n^{1/3}}}{\binom{n-1}{n^{1/3}-1}}\\
&\leq \frac{\alpha n^{2/3}}{1-\alpha}. 
\end{align*}
Since random variable $\delta_J \geq 0$, by Markov inequality, we have
\[
\Pr[\delta_J \geq \beta \cdot n^{2/3} ] \leq \frac{\frac{\alpha n^{2/3}}{1-\alpha}}{\beta \cdot n^{2/3}} \leq \frac{2\alpha}{\beta} <\frac{1}{10000\beta}.
\]
\end{proof}

\begin{lemma}
\label{lem:2r-1r}
$W_1$ happens with probability at least $1-1/6$. 
\end{lemma}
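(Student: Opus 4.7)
The plan is to apply a union bound over the six sub-events $W_{1,1},\ldots,W_{1,6}$, bounding each failure probability separately. Event $W_{1,1}$ follows from a Chernoff bound on each of the $\binom{n^{1/3}}{2}$ pair-comparisons in step (b) (each repeated $100\log n$ times with bias $1/6$), giving $\Pr[\neg W_{1,1}] = n^{-\Omega(1)}$; all subsequent analysis is conditional on $W_{1,1}$, so the labelling $s_1<\cdots<s_{n^{1/3}}$ is correct and Lemmas \ref{lem:brw} and \ref{lem:cj} apply as stated.

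For the gap events $W_{1,2}$ and $W_{1,3}$, conditional on $W_{1,1}$ the ranks $r_j:=\mathrm{rank}(s_j)$ are the order statistics of a uniform random size-$n^{1/3}$ subset of $[n]$. Let $M^\ast=|\{j:r_j\le k\}|$. An elementary occupancy computation gives $\Pr[r_{M^\ast}\in[k-6n^{2/3},k-n^{2/3}/20]]\ge e^{-1/20}-e^{-6}$, and symmetrically for $r_{M^\ast+1}$. To identify $m$ with $M^\ast$ I invoke Lemma \ref{lem:cj} with $\beta=1/40$ at $J\in\{M^\ast,M^\ast+1\}$: with failure probability $\le 2/125$, $|C_J-r_J|\le n^{2/3}/40$, which combined with the gap inequalities forces $C_{M^\ast}<k\le C_{M^\ast+1}$ and hence $m=M^\ast$. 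This gives $W_{1,2}\cap W_{1,3}$.

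The remaining events rely on the tail bound of Lemma \ref{lem:brw} for the biased walk. For $W_{1,4}$ I decompose $|\bigcup_{j=m-l_i}^{m+l_i} P_j|$ into items whose noiseless intended partition $J(i)$ lies in $[m-l_i,m+l_i]$, counted as $r_{m+l_i+1}-r_{m-l_i}=O(l_i n^{2/3})$ by concentration of order statistics, plus items imported from outside whose expected total is $\sum_{t\ge1} O(n)\alpha^t=O(1)$ by Lemma \ref{lem:brw}. For $W_{1,5}$, every rank $r\in[k-6l_i^5+1,k+6l_i^5]$ has intended partition within $O(1)$ of $m$ (by extending the $W_{1,2}$--$W_{1,3}$ gap argument to pivots near $k$, which holds except with probability $O(l_i^5/n^{2/3})$), and Lemma \ref{lem:brw} gives $\Pr[|p(i)-J(i)|\ge l_{i+1}/2]\le 3\alpha^{l_{i+1}/2}=O(l_i^{-7})$; a union bound over the $\le 12 l_i^5$ items and $i=1,\ldots,\log^*n-1$ contributes $O(\sum_i l_i^{-2})=O(1)$. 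For $W_{1,6}$, any item of true rank $>k$ satisfies $J(i)\ge m$ by $W_{1,2}$, so $p(i)<m-l_1$ requires a deviation of at least $l_1=\log n$, bounded by $3\alpha^{l_1}=O(n^{-15})$; the symmetric argument handles ranks $\le k$ and a union bound over all $n$ items gives $O(n^{-14})$.

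The main obstacle I anticipate is managing the two coupled layers of randomness---the uniform-subset randomness determining the $r_j$'s (and $M^\ast$) and the noisy-walk randomness determining $p(\cdot)$, the $C_J$'s, and $m$---and choosing the constants ($c_1$, $\beta$, and the effective walk bias) so that Lemmas \ref{lem:brw} and \ref{lem:cj} have enough slack for the six failure probabilities to sum to at most $1/6$. In particular, the gap separations $n^{2/3}/20$ and $6n^{2/3}$ are calibrated so that the Lemma \ref{lem:cj} slack $\beta n^{2/3}$ still forces $m=M^\ast$, while keeping the total probability budget within $1/6$.
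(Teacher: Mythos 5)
Your proposal follows essentially the same route as the paper's proof: condition on $W_{1,1}$ via a Chernoff/union bound, establish gap and density properties of the random pivot set around rank $k$ (the paper's auxiliary event $W_1'$), use Lemma \ref{lem:cj} to identify $m$ correctly and control $|C_J - s_J|$, and use the biased-walk tail of Lemma \ref{lem:brw} for the per-item deviation events, with the same tight accounting of constants so that the six failure probabilities sum below $1/6$. The one quantitative slip is in your bound on the items imported into $\bigcup_{j=m-l_i}^{m+l_i}P_j$ for $W_{1,4}$: the expected count is $\sum_{t\ge 1}O(n^{2/3})\alpha^{t}=O(n^{2/3})$ (only the $O(n^{2/3})$ items at partition-distance $t$ contribute $\alpha^{t}$ each; this is exactly what the proof of Lemma \ref{lem:cj} computes), not $O(1)$, but combined with Markov's inequality this still yields the required $O(l_i n^{2/3})$ bound with failure probability summable over $i$.
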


\begin{proof}
The randomness in the first round of the algorithm comes from three independent components: 
\begin{enumerate}[(i)]
\item Randomness used to pick the random set $S$.
\item Randomness in the noise of comparisons in comparisons (b). These comparisons are used to sort items in $S$. 
\item Randomness in the noise of comparisons in comparisons (a). These comparisons are used to compute $p(i)$. 
\end{enumerate}

They together decide whether $W_1$ happens. We are going to analyze (i) and (ii) separately and then analyze (iii) when fixing the randomness of (i) and (ii).

For (ii), the related event in $W_1$ is $W_{1,1}$. For each pair $s_i,s_j \in S$, by Chernoff bound, the majority of comparisons between $s_i$ and $s_j$ in comparisons (b) is consistent with the true ordering with probability at least $1-1/n^2$. By union bound, $\Pr[W_{1,1}] \geq 1- 1/n$. 

For (i), consider $W_1'$ as the intersection of following events. They are all about the random set $S$. Later we will use $W_1'$ to analyze $W_1$. 
\begin{itemize}
\item $W'_{1,1}$: $S \cap \{ k -n^{2/3}/40+1 ,..., k+n^{2/3}/40\} = \emptyset$.
\item $W'_{1,2}$: $S \cap \{ k - 6 n^{2/3},...,k-n^{2/3}/40\} \neq \emptyset$.
\item $W'_{1,3}$: $S \cap \{ k + n^{2/3}/40+1,...,k+6 n^{2/3}\} \neq \emptyset$.
\item $W'_{1,4}$: $\forall i = 1,...,\log^*(n)$, $|S \cap \{k-40\cdot l_i \cdot n^{2/3}+1,..., k\}| \geq l_i$ and $|S \cap \{k+1,...,k+40\cdot l_i \cdot n^{2/3}\}| \geq l_i$. 
\end{itemize}
Now we analyze the probability $W_1'$ holds (i.e. $\Pr[W_1']$). We have 
\[
\Pr[W_{1,1}'] = \frac{\binom{n-n^{2/3}/20}{n^{1/3}}} {\binom{n}{n^{1/3}}}   \geq \left(\frac{n-n^{1/3}-n^{2/3}/20}{n-n^{1/3}}\right)^{n^{1/3}} \geq (1- n^{-1/3}/10)^{n^{1/3}} \geq   1 - 1/10 .
\]
We also have
\[
\Pr[W_{1,2}'] = \Pr[W_{1,3}'] \geq 1- \frac{\binom{n-5n^{2/3}}{n^{1/3}}} {\binom{n}{n^{1/3}}} \geq 1-\left(\frac{n-5n^{2/3}}{n}\right)^{n^{1/3}} \geq 1- e^{-5}.
\]
For $W_{1,4}'$, define $X_i$ to be the indicator variable of whether $i \not\in S$ (i.e. $X_i = 1$ if $i \not\in S$ and $X_i = 0$ otherwise). We have that for any subset $N' \subseteq N$, $\Pr[\bigwedge_{i\in N'} X_i = 1] \leq (1-1/n^{2/3})^{|N'|}$. By generalized Chernoff bound (Theorem \ref{thm:gcb}), we have
\begin{align*}
&\Pr[|S \cap \{k-40\cdot l_i \cdot n^{2/3}+1,..., k\}| < l_i] \\
&= \Pr\left[ \sum_{a \in \{k-40\cdot l_i \cdot n^{2/3}+1,..., k\}} X_a \geq \left(1- \frac{1}{40n^{2/3}}\right) \cdot 40\cdot l_i \cdot n^{2/3}\right] \\
&\leq \exp\left(-40\cdot l_i \cdot n^{2/3} \cdot \D_e\left(1- \frac{1}{40n^{2/3}} \| 1- \frac{1}{n^{2/3}}\right)\right).\\
\end{align*}
By Fact \ref{fact:kl}, we have
\[
\D_e\left(1- \frac{1}{40n^{2/3}} \| 1- \frac{1}{n^{2/3}}\right) \geq \left(\frac{1}{n^{2/3}} - \frac{1}{40n^{2/3}}\right)^2/ \frac{2}{n^{2/3}} \geq \frac{19}{40n^{2/3}}.
\]
Therefore
\[
\Pr[|S \cap \{k-40\cdot l_i \cdot n^{2/3}+1,..., k\}| < l_i] \leq \exp(-19l_i).
\]
By union bound
\[
\Pr[W_{1,4}' ]\geq 1- 2\sum_{i=1}^{\log^*(n)} \exp(-19l_i) \geq 1-e^{-16}. 
\]
By union bound again,
\[
\Pr[W_1'] \geq 1- (1-\Pr[W_{1,1}']) - (1-\Pr[W_{1,2}'])- (1-\Pr[W_{1,3}']) - (1-\Pr[W_{1,4}']) \geq  1 - 1/10 - 2/e^5- 1/e^{16}.
\]

For (iii), we are going to fix the randomness of (i) and (ii) and condition on $W_{1,1} \cap W_1'$. By $W'_{1,2}$, $S \cap \{ k - 6 n^{2/3},...,k-n^{2/3}/40-1\}$ is not empty and set $J$ such that $s_J$ has the lowest rank in $S \cap \{ k - 6 n^{2/3},...,k-n^{2/3}/40-1\}$. By $W'_{1,1}$ and $W'_{1,3}$, we know that $s_{J+1} \in  \{ k + n^{2/3}/40+1,...,k+6 n^{2/3}\}$. Consider event $W_1''$ as the intersection of following events:
\begin{itemize}
\item $W''_{1,1}$: $|s_J - C_J| \leq n^{2/3}/40$ and $|s_{J+1} - C_{J+1}| \leq n^{2/3}/40$.
\item $W''_{1,2}$: $\forall i = 1,...,\log^*(n)$, $|s_{J-l_i} -C_{J-l_i}| \leq l_i \cdot n^{2/3}$ and $|s_{J+1 + l_i} -C_{J+1+ l_i}| \leq l_i \cdot n^{2/3}$.
\item $W''_{1,3}$: $\forall i = 1,...,\log^*(n)-1$, $a \in \{k-6 l_i^5+1,..., k+ 6l_i^5\} $, $|p(a)-J| \leq l_{i+1}$. 
\item $W''_{1,4}$: $\forall i \in N$, suppose $s_j \leq i < s_{j+1}$, then $|p(i) - j| \leq l_1$. 
\end{itemize}

Now we analyze the probability $W_1''$ holds (i.e. $\Pr[W_1'']$). For $W''_{1,1}$, by Lemma \ref{lem:cj}, we have
\[
\Pr[W''_{1,1}|W_{1,1} \cap W_1'] \geq 1 - 2 \cdot \frac{1}{5000} \cdot 40 \geq 1 - 1/50.
\]
For $W''_{1,2}$, by Lemma \ref{lem:cj} again, we have
\[
\Pr[W''_{1,2}|W_{1,1} \cap W_1'] \geq 1- 2\cdot \sum_{i=1}^{\log^*(n)}   \frac{1}{5000l_i} \geq 1-1/500.
 \]
 For $W''_{1,3}$, by Lemma \ref{lem:brw}, we have
 \[
\Pr[W''_{1,3}|W_{1,1} \cap W_1'] \geq 1- \sum_{i=1}^{\log^*(n)-1} 2 \cdot6 \cdot l_i^5 \cdot 3 \cdot e^{-15l_{i+1}} \geq 1- \sum_{i=1}^{\log^*(n)-1} \frac{1}{e^5 \cdot l_i} \geq 1-e^{-4}.
 \]
 For $W''_{1,4}$, by Lemma \ref{lem:brw} again, we have
 \[
\Pr[W''_{1,4}|W_{1,1} \cap W_1'] \geq 1- n \cdot 3 \cdot e^{-15l_1} \geq 1- 3/n^{14}.
 \]
 By union bound, we have
 \begin{align*}
 &~~~\Pr[W_1''|W_{1,1} \cap W_1'] \\
 &\geq 1- (1-\Pr[W_{1,1}''|W_{1,1} \cap W_1']) - (1-\Pr[W_{1,2}''|W_{1,1} \cap W_1'])\\
 &~~~- (1-\Pr[W_{1,3}''|W_{1,1} \cap W_1']) - (1-\Pr[W_{1,4}''|W_{1,1} \cap W_1']) \\
 &\geq  1 - 1/50 -1/500- 1/e^4- 3/n^{14}.
 \end{align*}
 Then we have
 \begin{align*}
 \Pr[ W_{1,1} \cap W_1' \cap W_1''] &\geq  \Pr[ W_{1,1} \cap W_1'] \cdot \Pr[W_1''|W_{1,1} \cap W_1'] \\
 &\geq 1- 1/n - 1/10 - 2/e^5- 1/e^{16} - 1/50 -1/500- 1/e^4- 3/n^{14} \\
 &\geq 1-1/6.
 \end{align*}
Finally we show that conditioned on $W_{1,1} \cap W_1'$, $W_1''$ implies $W_1$. 
\begin{itemize}
\item $W_{1,2}$ and $W_{1,3}$: By $W_{1,1}''$, we know that $\sum_{j<J} |P_j| = C_J < k$ and $\sum_{j<J+1}|P_j| = C_{J+1}  \geq k$. Therefore $m = J$ and then $W_{1,2}$ and $W_{1,3}$ hold. 
\item $W_{1,4}$: By $W''_{1,2}$ and $W'_{1,4}$, we know that $C_{m-l_i} \geq k-41\cdot l_i \cdot n^{2/3}$ and $C_{m+l_i+1} \leq k+41\cdot l_i \cdot n^{2/3}v$. Then $|\bigcup_{j=m-l_i}^{m+l_i} P_j| = C_{m+l_i+1} - C_{m-l_i} \leq 100\cdot l_i \cdot n^{2/3}$. 
\item $W_{1,5}$: By $W_{1,3}''$, we know that, $\forall i = 1,...,\log^*(n)-1$, $a \in \{k-6 l_i^5+1,..., k+ 6 l_i^5\} $, $|p(a)-m| \leq l_{i+1}$ and therefore $a\in \bigcup_{j=m-l_{i+1}}^{m+l_{i+1}} P_j$.
\item $W_{1,6}$: By $W_{1,4}''$, we know that $\forall i \leq k$, $p(i) \leq k+ l_1$ and therefore $i \not\in \bigcup_{j> m+l_1} P_j $. This means $\bigcup_{j> m+l_1} P_j  \subseteq \{k+1,...,n\}$. The same argument will also give $\bigcup_{j< m-l_1} P_j \subseteq \{1,...,k\}$. 
\end{itemize}

Therefore $\Pr[W_1] \geq \Pr[W_{1,1} \cap W_1'\cap W_1''] \geq 1- 1/6$. 
\end{proof}

\subsubsection{Second Round of the 2-Round Algorithm}
Now we are going to assume $W_1$ holds and analyze the second round of the algorithm. Consider event $W_2$ to be: 
\begin{itemize}
\item $W_{2,1}$: $\forall i = 1,...,\log^*(n)$, $N_i \subseteq \bigcup_{j=m-l_i}^{m+l_i} P_j $. 
\item $W_{2,2}$: $\forall i = 2,...,\log^*(n)$, $N_i \subseteq \{k-6l_i^5+1,..., k+6l_i^5\} $. 
\item $W_{2,3}$: $\forall i = 1,...,\log^*(n)$, $A_i \subseteq \{1,...,k\}$ and $B_i \subseteq \{k+1,...,n\}$. 
\item $W_{2,4}$: $N_{\log^*(n)+1} = \emptyset$. 
\end{itemize}

We use $W_2$ to indicate the success of the second round. We show in Lemma \ref{lem:2r-2r} that conditioned on $W_1$, $W_2$ happens with probability at least $1-1/6$.

\begin{lemma}
\label{lem:2r-2r}
$\Pr[W_2|W_1] \geq 1- 1/6$. 
\end{lemma}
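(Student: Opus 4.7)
The proof will closely mirror that of Lemma~\ref{lem:1topk}, with the universe $[n]$ replaced at iteration $i$ by the restricted set $U_i := \bigcup_{|j-m|\le l_i} P_j$. Conditioning on $W_1$ gives three structural facts to exploit: from $W_{1,4}$, $|U_i|\le 100\,l_i\,n^{2/3}$, so $|S_i|\le 100\,n^{2/3}/l_i^3$; from $W_{1,5}$, for each $i\ge 2$ the true-rank window $\{k-6l_{i-1}^5+1,\dots,k+6l_{i-1}^5\}$ is entirely contained in $U_i$; and from $W_{1,6}$ together with the initialization of $T_0$, $k_1$, $N_1$ in Algorithm~\ref{alg:2topk2}, the items outside $U_1$ are already correctly classified and the invariants hold at $i=1$.

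I plan to introduce two auxiliary events, in analogy with $W_1, W_3$ inside Lemma~\ref{lem:1topk}. Let $E_{\mathrm{piv}}$ be the event that for every $i=1,\dots,\log^*(n)$ the random sample $S_i$ hits both $[k-6l_i^5,\,k]\cap U_i$ and $[k+1,\,k+6l_i^5]\cap U_i$ whenever those slices are nonempty, and let $E_{\mathrm{cmp}}$ be the event that for every $i$ and every pivot $s\in S_i\cap N_i$ and every $j\in N_i$ of different true rank, the majority of the $c_2 l_i$ comparisons between $s$ and $j$ in iteration $i$ agrees with the true ordering. The two failure probabilities are bounded exactly as in Lemma~\ref{lem:1topk}: each targeted rank slice of size $6l_i^5$ is contained in $U_i$ by $W_{1,5}$, so the miss probability is at most $(1-6l_i^5/|U_i|)^{|U_i|/l_i^4}\le e^{-6l_i}$; and a Chernoff bound gives a $e^{-c_2 l_i/48}$ bound per pair, summed over at most $|S_i|\cdot|U_i|\le (100\,l_i\,n^{2/3})^2/l_i^4$ relevant pairs per iteration. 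Taking $c_2$ a sufficiently large absolute constant and union-bounding over the $\log^*(n)$ iterations yields $\Pr[E_{\mathrm{piv}}\cap E_{\mathrm{cmp}}\mid W_1]\ge 1-1/6$.

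Finally I will show that on $W_1\cap E_{\mathrm{piv}}\cap E_{\mathrm{cmp}}$ the algorithm maintains by induction on $i$ the joint invariant: (I1) $N_i\subseteq U_i$; (I2) for $i\ge 2$, $N_i\subseteq\{k-6l_i^5+1,\dots,k+6l_i^5\}$; (I3) $k_i$ equals the number of true top-$k$ items contained in $N_i$, and $r_i(j)$ equals $j$'s true rank within $N_i$ for every $j\in S_i\cap N_i$; (I4) $A_i\subseteq$ top-$k$ and $B_i\subseteq$ bottom-$(n-k)$. The base case follows from $W_{1,6}$ and the definitions of $N_1, T_0, k_1$. In the inductive step, $E_{\mathrm{cmp}}$ together with (I3) for iteration $i$ implies that the chosen pivots $a_i, b_i$ have true ranks that bracket the $k$-th overall rank, and $E_{\mathrm{piv}}$ forces those ranks to lie within $6l_i^5$ of $k$; consequently $A_i$ and $B_i$ are classified correctly (yielding $W_{2,3}$) and $N_{i+1}$ falls inside the interval $(a_i,b_i)$, which by $W_{1,5}$ is contained in $U_{i+1}\cap\{k-6l_{i+1}^5+1,\dots,k+6l_{i+1}^5\}$, re-establishing (I1) and (I2). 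At $i=\log^*(n)$ we have $l_i=1$, so $|S_{\log^*(n)}|=|U_{\log^*(n)}|$, and the same argument forces $N_{\log^*(n)+1}=\emptyset$, giving $W_{2,4}$.

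The main obstacle is the book-keeping between the set-valued universes $U_i$ (inherited from the noisy random-walk partition of round~1) and the true-rank windows used to control $N_i$. Unlike in Lemma~\ref{lem:1topk}, the sampling is not uniform over ranks near $k$, so I must lean on $W_{1,5}$ to identify a large sub-slice of $U_i$ that is a contiguous rank window around $k$, and on $W_{1,4}$ to bound $|U_i|$ so that the induced sampling rate $1/l_i^4$ still suffices to hit sub-windows of size $6l_i^5$ with probability $1-e^{-\Omega(l_i)}$. Once this coverage is secured, the rest of the induction is a direct translation of the proof of Lemma~\ref{lem:1topk}.
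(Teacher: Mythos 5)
Your overall architecture --- two auxiliary events (sampling coverage and comparison correctness) plus an induction keeping $N_i$ inside a shrinking true-rank window --- is exactly the paper's, and your coverage bound $(1-6l_i^5/|U_i|)^{|U_i|/l_i^4}\le e^{-6l_i}$ via $W_{1,5}$ is fine. The gap is in the union bound for $E_{\mathrm{cmp}}$. You sum a per-pair failure probability of $e^{-c_2 l_i/48}$ over ``at most $|S_i|\cdot|U_i|\le 10^4\,n^{4/3}/l_i^2$'' pairs. But for $i\ge 2$ we have $l_i\le \log\log n$, and at the last iteration $l_{\log^*(n)}=1$, so this product is at least $n^{4/3}e^{-c_2/48}$; no absolute constant $c_2$ makes that small --- you would need $c_2=\Omega(\log n)$, which destroys the $O(n^{4/3})$ comparison budget. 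The whole point of the $\log^*$-level construction is that iteration $i$ repeats each comparison only $c_2 l_i$ times, so per-pair correctness is only $1-e^{-\Theta(l_i)}$, and one can afford to union bound only over $\mathrm{poly}(l_{i-1})$-many pairs, not over polynomially-in-$n$ many.

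The fix (and what the paper does) is to define the comparison-correctness event not over all of $S_i\times N_i$ --- which is also delicate because $N_i$ is itself a function of the comparison outcomes, making a naive union bound circular --- but over the set of pairs, deterministic given $Q$ and $H_1$, both of whose true ranks lie in $\{k-6l_{i-1}^5+1,\dots,k+6l_{i-1}^5\}$ intersected with $\bigcup_{j=m-l_i}^{m+l_i}P_j$. That window has $O(l_{i-1}^5)$ items, so the union bound is $O(l_{i-1}^{10})\cdot e^{-c_2 l_i/48}=O\bigl(l_{i-1}^{10-c_2/48}\bigr)$, which is summable over $i$ once $c_2$ is a large absolute constant. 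Your induction must then carry the invariant (your (I2)) that $N_i$ is contained in that window so that only the controlled pairs are ever consulted; you state (I2) but your probability estimate does not exploit it. A second, smaller omission: Algorithm~\ref{alg:2topk} halts once it exceeds $c_0 n^{4/3}$ comparisons, so you also need to verify (via $W_{1,4}$) that conditioned on $W_1$ the second round stays within budget; otherwise $W_{2,1}$--$W_{2,4}$ could fail simply because the algorithm never completes its iterations.
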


\begin{proof}
Assume $W_1$ holds. We first show that the algorithm does not halt. The first round of the algorithm uses $c_1  n^{4/3} + 10n^{2/3}\log(n)$ comparisons. In the second round, by $W_{1,4}$, the number of comparisons used by the algorithm is 
\[
\sum_{i=1}^{\log^*(n)}\left|\bigcup_{j=m-l_i}^{m+l_i} P_j\right| \cdot\left( \left|\bigcup_{j=m-l_i}^{m+l_i} P_j\right|/ l_i^4\right) \cdot c_2 l_i \leq 10000 \cdot c_2  \cdot \sum_{i=1}^{\log^*(n)} \frac{1}{l_i} \cdot n^{4/3} \leq 20000\cdot c_2 \cdot n^{4/3}.
\]
Therefore picking $c_0$ to be a large enough constant will make sure that the algorithm does not halt conditioned on $W_1$.

Consider event $W_2'$ to be the intersection of the following two events:
\begin{itemize}
\item $W_{2,1}'$: $\forall i = 1,...,\log^*(n)-1$, $|S_i \cap \{k- 6l_i^5+1,..., k\}| >0 $ and  $|S_i \cap \{k+ 1,..., k+6l_i^5\}| >0 $.
\item $W_{2,2}'$: $\forall i = 1,...,\log^*(n)$ and each pair of item $j, j'$ ranks in $\{ k-6l_{i-1}^5+1, ..., k+6l_{i-1}^5 \} \cap \bigcup_{j=m-l_{i}}^{m+l_{i}} P_j$ and $j \in S_i$, if $j$ and $j'$ are compared in the $i$-th iteration, then whether $j$ beats $j'$ in the $i$-th iteration of the second round is consistent with the true ordering.
\end{itemize}

Now let's analyze $\Pr[W_2'|W_1]$. We start with $W_{2,1}'$. By $W_{1,5}$, we know that $\forall i = 1,...,\log^*(n)-1$, we have $\{k- 6l_i^5+1,..., k+6l_i^5\}\subseteq \bigcup_{j=m-l_{i+1}}^{m+l_{i+1}} P_j$. Therefore,
\[
\Pr[W_{2,1}'|W_1] \geq 1- 2\sum_{i=1}^{\log^*(n)-1} (1-1/l_i^4)^{6l_i^5} \geq 1 -2\sum_{i=1}^{\log^*(n)-1}e^{-6l_i} \leq 1 -4 e^{-6l_{\log^*(n)-1}} \leq 1-4e^{-6}.
\]

In the $i$-th iteration, for each pair of item $j, j'$ ranks in $\{ k-6l_{i-1}^5+1, ..., k+6l_{i-1}^5 \}\cap \bigcup_{j=m-l_{i}}^{m+l_{i}} P_j$ and $j \in S_i$, if $j$ and $j'$ are compared in the $i$-th iteration, then there are $c_2 \cdot l_i$ comparisons between $j$ and $j'$. By Chernoff bound, the majority of these comparisons differs from the true ordering with probability at most 
\[
\exp\left(-\frac{2}{3} c_2 \cdot l_i \cdot \left(\frac{1}{4}\right)^2 \cdot \frac{1}{2}\right) = \exp(-c_2 \cdot l_i/ 48). 
\]
By union bound, 
\begin{align*}
1-\Pr[W_{2,2}'|W_1] &\leq \sum_{i=1}^{\log^*(n)} (12l_{i-1}^6)^2 \cdot \exp(-c_2 \cdot l_i/ 48)\\
& \leq \sum_{i=1}^{\log^*(n)} (12l_{i-1}^6)^2 \cdot \frac{1}{2^{11} \cdot l_{i-1}^7 } \leq 2 \cdot \frac{1}{2^5 \cdot  l_{log^*{n}-1} } \leq \frac{1}{16}.
\end{align*}
By union bound again,
\[
\Pr[W_2'|W_1] \geq  1- (1-\Pr[W_{2,1}']) - (1-\Pr[W_{2,1}'])  \geq 1- 1/6.
\]

Now we are going to show that $W_1$ and $W_2'$ imply $W_2$. Define $W^i_{2,1}$ to be the event that $N_i \subseteq \bigcup_{j=m-l_i}^{m+l_i} P_j $, $W^i_{2,2}$ to be the event that $N_i \subseteq \{k-6l_i^5+1, ...,k+6l_i^5\} $ and $W^i_{2,3}$ to be the event that $A_i \subseteq \{1,...,k\}$ and $B_i \subseteq \{k+1,...,n\}$. Now assume $W_1$ and $W_2'$ hold. 
\begin{itemize}
\item $W^i_{2,1} \Rightarrow W^{i+1}_{2,2}$, $W^i_{2,1} \Rightarrow W^i_{2,3}$: By $W^i_{2,1}$, we know that all the items in $N_i$ are compared to all the items in $N_i \cap S_i$ in the $i$-th iteration. Then by $W'_{2,2}$ we know that, $\forall j \in S_i \cap N_i$, $r_i(j)$ is $j$'s correct rank in $N_i$. By $W_{2,1}$, we know that either $a_i \in \{k-6l_i^5+1,..., k\}$ or we have $a_i$ does not exist and $N_i \cap \{1,...,k-6l_i^5\} = \emptyset$. In both cases, we have $A_i \subseteq \{1,...,k\}$ and $N_{i+1} \cap \{1,...,k-6l_i^5\} = \emptyset$. Similarly, we get $B_i \subseteq \{k+1,...,n\}$ and $N_{i+1} \cap \{k+6l_i^5+1, ...,n\} = \emptyset$. So we get $W^{i+1}_{2,2}$and $W^i_{2,3}$.
\item $W^i_{2,2} \Rightarrow W^{i}_{2,1}$: By $W_{1,5}$, we know that $ \{k-6l_i^5+1, k+6l_i^5\} \subseteq \bigcup_{j=m-l_i}^{m+l_i} P_j$. Then $W^i_{2,2}$ simply implies $W^i_{2,1}$. 
\end{itemize}
Since $W^1_{2,1}$ holds, by induction we know that $W_{2,1}$, $W_{2,2}$ and $W_{2,3}$ hold. Also notice that $l_{\log^*(n)} = 1$ and thus $N_{\log^*(n)}\subseteq S_{\log^*(n)}$. Since $W^{\log^*(n)}_{2,1}$ implies $r_{\log^*(n)}(j)$ is $j$'s correct rank in $N_{\log^*(n)}$ for $\forall j \in N_{\log^*(n)}$, we know that $a_{\log^*(n)}$ and $b_{\log^*(n)}$ will rank at $k_{\log^*(n)}$ and $k_{\log^*(n)}+1$ in $N_{\log^*(n)}$. And therefore we have $N_{\log^*(n)+1} = \emptyset$ (event $W_{2,4}$). 

To sum up, we have
\[
\Pr[W_2|W_1] \geq \Pr[W_2'|W_1] \geq 1-1/6.
\]

\end{proof}

\section{Sorted Top-$k$ in the Noisy Case}
\label{sec:stopk_n}
In this section, we consider sorted top-$k$ in the noisy case. In particular, we show that the sample complexity of 1-round algorithms is $\Theta(n^2 \log(k))$ (in Section \ref{sec:stopk1_n}) and the sample complexity of 2-round algorithms is $\Theta(n^{4/3} \log(k))$ (in Section \ref{sec:stopk2_n}). 

When $k=1$, sorted top-$k$ is equivalent to top-$k$. In this section, we only consider cases when $k>1$.

\subsection{1-Round Sorted Top-$k$ in the Noisy Case}
\label{sec:stopk1_n}

In this sub-section, we show an 1-round algorithm in Lemma \ref{lem:1stopk_noisy_ub} and a matching lower bound in Lemma \ref{lem:1stopk_noisy_lb}.

\begin{algorithm}[H]
    \caption{$1$-round algorithm for sorted top-$k$ with noisy comparisons}
    	\label{alg:1stopk_noisy}
    \begin{algorithmic}[1]
    	\STATE Run 3 copies of Algorithm \ref{alg:1topk} (1-round top-$k$ algorithm) in parallel.
	\STATE In the same round, compare each pair $100(\log(k)+1)$ time.
 	\STATE Let $S$ be the majority answer of the 3 copies of Algorithm \ref{alg:1topk}. 
	\STATE For each item $i,j \in S$, say $i$ beats $j$ if $i$ wins the majority of comparisons between $i$ and $j$ in step 2. Rank items in $S$ in the decreasing order of how many items they beat (break tie arbitrarily). Output this sorted list of $S$. 
     \end{algorithmic}
\end{algorithm}

\begin{lemma}
\label{lem:1stopk_noisy_ub}
We have an 1-round algorithm (Algorithm \ref{alg:1stopk_noisy}) which solves sorted top-$k$ in the noisy case with $O(n^2\log(k) )$ comparisons.
\end{lemma}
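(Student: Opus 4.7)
The plan is to analyze the comparison cost and the success probability separately, exploiting the fact that steps 1 and 2 of Algorithm~\ref{alg:1stopk_noisy} draw on disjoint, independent pools of noisy comparisons.

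For the cost, the three parallel copies of Algorithm~\ref{alg:1topk} contribute $3 \cdot O(n^2) = O(n^2)$ comparisons by Lemma~\ref{lem:1topk}, and step 2 contributes $\binom{n}{2} \cdot 100(\log k + 1) = O(n^2 \log k)$ comparisons. Since these pairs are all fixed in advance and scheduled in a single round, the total is $O(n^2 \log k)$, as required.

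For correctness, let $S^{*}$ denote the true top-$k$ set. I will introduce $E_1$, the event that at least two of the three copies of Algorithm~\ref{alg:1topk} output $S^{*}$ (so the majority answer $S$ equals $S^{*}$), and $E_2$, the event that for every pair $i,j \in S^{*}$ the majority of their $100(\log k +1)$ comparisons in step 2 agrees with the true order. Whenever $E_1 \cap E_2$ holds, $S = S^{*}$, and within $S$ the Copeland count (number of items beaten in step 2) of each item equals its true rank in $S^{*}$ minus one, so sorting by decreasing count recovers the correct sorted top-$k$. To bound the probabilities, I will combine Lemma~\ref{lem:1topk} (each copy succeeds with probability at least $2/3$) with the independence of the three copies, giving $P(E_1) \geq 3(2/3)^2(1/3) + (2/3)^3 = 20/27$; and a Chernoff bound on $100(\log k + 1)$ independent Bernoulli$(2/3)$ trials, yielding per-pair failure $\exp(-\Omega(\log k))$ in step 2 (the constant $100$ is chosen so that the Chernoff exponent comfortably exceeds, say, $5\log k$). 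A union bound over the $\binom{k}{2}$ pairs of $S^{*}$ then gives $P(E_2) \geq 1 - k^{-\Omega(1)} \geq 0.99$ for every $k \geq 2$. Because $E_1$ and $E_2$ depend on disjoint noisy comparisons, $P(E_1 \cap E_2) = P(E_1)\cdot P(E_2) > 2/3$.

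The only place requiring any real care is this final constant chase: Lemma~\ref{lem:1topk} guarantees only probability $2/3$ per top-$k$ copy, so three-fold boosting raises it to $20/27$, leaving roughly $2/27$ of slack to absorb $\Pr[\overline{E_2}]$. The factor $100$ in step 2 is more than enough to make $\Pr[\overline{E_2}]$ fit comfortably into this slack, so no deeper obstacle arises; the remainder is routine Chernoff and union-bound bookkeeping.
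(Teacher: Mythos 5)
Your proposal is correct and follows essentially the same route as the paper's proof: account for the $O(n^2)+O(n^2\log k)$ cost, use the majority of three independent copies of Algorithm~\ref{alg:1topk} to get the top-$k$ set with probability at least $20/27$, and apply a Chernoff bound plus a union bound over the $\binom{k}{2}$ pairs to sort within that set. The only cosmetic difference is that you multiply the two success probabilities using independence of the disjoint comparison pools, whereas the paper simply takes a union bound of the two failure events ($7/27+1/27$); both yield a success probability exceeding $2/3$.
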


\begin{proof}
First of all, by the definition of Algorithm \ref{alg:1stopk_noisy} and Lemma \ref{lem:1topk} (which bounds the number of comparisons of Algorithm \ref{alg:1topk}), we know Algorithm \ref{alg:1stopk_noisy} uses  $O(n^2\log(k) )$ comparisons.

By Lemma \ref{lem:1topk}, we know that the probability that Algorithm \ref{alg:1topk} outputs top-$k$ correctly with probability at least $2/3$. Therefore $S$ is the set of top-$k$ items with probability at least $1-7/27$. 

For each pair of items $(i,j)$, the probability that whether $i$ beats $j$ is consistent with their underlying order is at least $1-\frac{1}{27k^2}$. For a fixed $S$, by union bound, 
\[
\Pr[\forall i,j \in S, i \text{ beats } j \text{ is consistent with their underlying order}] \geq 1-1/27.
\]

By union bound again, Algorithm \ref{alg:1stopk_noisy} is correct with probability at least $1-7/27-1/27 > 2/3$.

\end{proof}

\begin{lemma}
\label{lem:1stopk_noisy_lb}
Any 1-round algorithm needs $\Omega(n^2\log(k) )$ comparisons to output sorted top-$k$ correctly with probability at least $2/3$ in the noisy case. 
\end{lemma}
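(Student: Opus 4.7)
The plan is to apply Yao's minimax with the uniform distribution on permutations, reducing the bound to showing that every deterministic 1-round algorithm $A$ using at most $s < c_0 n^2 \log_2 k$ comparisons satisfies $\mathbb{E}_\pi[\mathrm{err}(\pi)] > 1/3$ for uniform $\pi$, for a small absolute constant $c_0 > 0$. Viewing the randomized algorithm as a distribution over deterministic realizations and using that the sample complexity is worst-case (so every realization uses $\leq s$ comparisons), the worst-case $2/3$-success assumption plus pigeonholing yields some deterministic $A$ with $\mathbb{E}_\pi[\mathrm{succ}(\pi)] \geq 2/3$ that I then need to preclude.

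Fix such an $A$, let $c_{ij}$ denote the number of times it compares items $i$ and $j$, set the threshold $c = \lfloor \log_2(k)/2 \rfloor$, and define the bad-pair set
\begin{align*}
B(\pi) = \{(i,j) : c_{ij} \leq c \text{ and } i,j \text{ have consecutive ranks within the top-}k \text{ of } \pi\}.
\end{align*}
At most $s/c$ pairs satisfy $c_{ij} > c$, and by symmetry each specific pair is consecutive in top-$k$ under uniform $\pi$ with probability $(k-1)/\binom{n}{2}$; a direct calculation then gives $\mathbb{E}_\pi[|B(\pi)|] \geq 0.9\,k$ once $c_0$ is taken sufficiently small. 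Informally, an algorithm with too few comparisons must leave most adjacent top-$k$ pairs under-tested, and this is the source of the lower bound.

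The key tool is a pointwise likelihood comparison between $\pi$ and its single-swap neighbor $\pi_{(i,j)}$. Let $\mu_\pi$ be the distribution of comparison outcomes under truth $\pi$ and $E_\pi$ the event that $A$ outputs the sorted top-$k$ of $\pi$. Since $\pi$ and $\pi_{(i,j)}$ differ only by swapping $i$ and $j$, the density ratio $d\mu_\pi/d\mu_{\pi_{(i,j)}}$ lies pointwise in $[2^{-c_{ij}},2^{c_{ij}}]$, because only the $c_{ij}$ Bernoulli$(2/3)$ coordinates comparing $i$ with $j$ have their biases flipped to $1/3$. The outputs $E_\pi$ and $\{E_{\pi_{(i,j)}}\}_{(i,j)\in B(\pi)}$ are pairwise distinct, so for every $\pi$,
\begin{align*}
\mathrm{succ}(\pi) + 2^{-c}\sum_{(i,j)\in B(\pi)} \mathrm{succ}(\pi_{(i,j)}) \leq 1.
\end{align*}
Averaging over $\pi$ and using the involutive symmetry $(i,j)\in B(\pi) \Leftrightarrow (i,j) \in B(\pi_{(i,j)})$, the second sum double-counts to $\mathbb{E}_\pi[|B(\pi)|\,\mathrm{succ}(\pi)]$; combined with the slack $\mathbb{E}[|B|\,\mathrm{succ}] \geq \mathbb{E}[|B|] - k\,\mathbb{E}[\mathrm{err}]$ (from $|B|\leq k$), rearranging yields
\begin{align*}
\mathbb{E}_\pi[\mathrm{err}(\pi)] \geq \frac{2^{-c}\,\mathbb{E}_\pi[|B(\pi)|]}{1 + 2^{-c}\,k}.
\end{align*}
Plugging in $2^{-c} \geq k^{-1/2}$ and $\mathbb{E}_\pi[|B(\pi)|]\geq 0.9\,k$ makes this right-hand side strictly larger than $1/3$ for every $k$ above a fixed absolute constant, contradicting $\mathbb{E}_\pi[\mathrm{err}] \leq 1/3$. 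The remaining small values of $k$ are handled by the noiseless 1-round $\Omega(n^2)$ lower bound at the start of Section~\ref{sec:stopk}, which yields $\Omega(n^2 \log k) = \Omega(n^2)$ in that regime. The delicate step is the double counting together with the slack $-k\,\mathbb{E}[\mathrm{err}]$: this is what lets us close the argument using only the \emph{averaged} success of $A$ rather than the (unavailable) worst-case success at each single-swap neighbor, which would otherwise render the argument circular.
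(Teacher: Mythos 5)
Your proof is correct and follows essentially the same route as the paper's: both reduce to a deterministic algorithm against a uniform prior, show that a budget of $o(n^2\log k)$ forces $\Omega(k)$ adjacent top-$k$ pairs to be compared at most $\log_2(k)/2$ times, and use the resulting $\sqrt{k}$ likelihood-ratio bound under adjacent swaps to convert success probability into error probability. The differences are only in bookkeeping (you average a per-$\pi$ disjoint-events inequality and double-count via the swap involution, whereas the paper conditions on $|S(\Pi)|\ge k/3$ via Markov and redistributes mass term by term), plus the minor caveat that for the residual small-$k$ regime the $\Omega(n^2)$ bound you invoke should be the noisy-case version rather than the always-correct noiseless argument --- the paper likewise asserts this case without proof.
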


\begin{proof}
For $k \leq 36$, the lemma is directly implied by 1-round lower bound of top-$k$ in the noisy case. In the rest of the proof, we assume $k >36$.

Consider any algorithm $A$ with fewer than $w =\frac{1}{100}\cdot n^2 \log(k)$ comparisons. We are going to show $A$ outputs sorted top-$k$ incorrectly with probability $>1/3$. 

Let's assume $A$ labels items as $1,...,n$. Let $\Pi(i)$ be the random variable of the actual rank of item with label $i$, $\forall i \in [n]$. Before the algorithm makes any comparisons, $\Pi$ distributed as a uniform distribution over all permutations of $[n]$. Wlog we can assume $A$ is deterministic. We use $H$ to denote the random variable of the comparison results and $h$ to denote the realized value of the comparison results. We use $A(h)$ to denote the sorted top-$k$ outputted by algorithm $A$ given comparison results $h$.

For any ranking $\pi$, define $\pi^{i,j}$ as the following:
\begin{itemize}
\item $\pi^{i,j} (l) = \pi(l)$ if $l \neq i,j$.
\item $\pi^{i,j}(i) = \pi(j)$.
\item $\pi^{i,j} (j) = \pi(i)$. 
\end{itemize}

For $l = 1,...,k-1$ and $\pi$ to be any permutation of $[n]$, define $s(l, \pi)$ to be the number of comparisons between items with label $\pi^{-1}(l)$ and $\pi^{-1}(l+1)$. Define $S(\pi) = \{l | s(l,\pi) \leq \log(k)/2, l \in [k-1]\}$. We know that for any $l \in [k-1]$, $\E[s(l,\Pi)] \leq w / \binom{n}{2} \leq \log(k)/ 8$.  Therefore, by Markov's inequality, we know that with probability at least $1/2$, $\sum_{l=1}^{k-1} s(l,\Pi) \leq (k-1) \log(k)/4 $. When $\sum_{l=1}^{k-1} s(l,\Pi) \leq (k-1) \log(k)/ 4$, we know that $S(\pi) \geq (k-1)/2 \geq k/3$. Therefore $\Pr[S(\Pi) \geq k/3] \geq 1/2$.

When $|\pi(i) - \pi(j) | = 1$ and there are $s$ comparisons between items with labels $i$ and $j$ in algorithm $A$, we have
\[
\frac{\Pr[H = h| \Pi = \pi]}{\Pr[H = h|\Pi = \pi^{i,j}]} \leq \left(\frac{2/3}{1/3} \right)^s = 2^s.
\]
To sum up, we have
\begin{align*}
&\Pr[A \text{ outputs correctly}]\\
 = &\sum_{\pi} \sum_{h: A(h) =\text{sorted top-}k} \Pr[\Pi = \pi] \cdot \Pr[ H = h| \Pi = \pi].  \\
= &\Pr[|S(\Pi)| < k/3] + \sum_{\pi:|S(\pi)| \geq k/3}~~ \sum_{h: A(h) =\text{sorted top-}k} \Pr[\Pi = \pi] \cdot \Pr[ H = h| \Pi = \pi].  \\
\leq &1/2 +\sum_{\pi: |S(\pi)| \geq k/3}~~ \sum_{h: A(h) =\text{sorted top-}k(\pi)} \frac{1}{|S(\pi)|} \sum_{l \in S(\pi), \pi' = \pi^{\pi^{-1}(l), \pi^{-1}(l+1)}} \\
&~~~~~~\Pr[\Pi = \pi'] \cdot \Pr[ H = h| \Pi = \pi']  \cdot\frac{\Pr[H = h| \Pi = \pi]}{\Pr[H = h|\Pi = \pi']}   \\
\leq &1/2+\sum_{\pi'}  \sum_{h: A(h) =\text{sorted top-}k(\pi')} \Pr[\Pi = \pi'] \cdot \Pr[ H = h| \Pi = \pi']\cdot 2^{\log(k)/2} \cdot \frac{3}{k} \\
<& 1/2+ \Pr[A \text{ outputs incorrectly}]/2.
\end{align*}
Therefore $\Pr[A \text{ outputs incorrectly}] > 1/3$. 
\end{proof}

\subsection{2-Round Sorted Top-$k$ in the Noisy Case}
\label{sec:stopk2_n}

In this sub-section, we show a 2-round algorithm in Lemma \ref{lem:2stopk_noisy_ub} and a matching lower bound in Lemma \ref{lem:2stopk_noisy_lb}.

\begin{algorithm}[H]
    \caption{$2$-round algorithm for sorted top-$k$ with noisy comparisons (for $k<n^{1/10}$)}
    	\label{alg:2stopk_noisy}
    \begin{algorithmic}[1]
	\STATE Randomly partition $N$ into $n^{2/3}$ sets of size $n^{1/3}$: $S_1,...,S_{n^{2/3}}$. 
	\STATE Round 1: For each $S_i$, we run $200\log(k)$ copies of the 1-round top-1 algorithm (Algorithm \ref{alg:1topk}) to find the top-1 of $S_i$ in parallel and take the majority answer to be item $t_i$.
	\STATE Round 2: Let $T$ be the set of all $t_i$'s. We have $|T| = n^{2/3}$. Run the 1-round sorted top-$k$ algorithm (Algorithm \ref{alg:1stopk_noisy}) to find the sorted top-$k$ of $T$ and output it.  
     \end{algorithmic}
\end{algorithm}

\begin{lemma}
\label{lem:2stopk_noisy_ub}
We have a 2-round algorithm which solves sorted top-$k$ in the noisy case with $O((n^{4/3} + n \sqrt{k})\log(k))$ comparisons.
\end{lemma}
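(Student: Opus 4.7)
The plan is to split on the size of $k$. If $k \ge n^{1/10}$, then $\log k = \Theta(\log n)$, so I would invoke the noiseless 2-round algorithm of Corollary \ref{cor:stopk_2r} and simulate each of its $O(n^{4/3}+n\sqrt{k})$ noiseless comparisons by the majority of $\Theta(\log n)$ noisy comparisons performed in parallel in the same round. A Chernoff bound plus a union bound over all simulated comparisons makes every majority agree with the true order except on an event of probability $1/\mathrm{poly}(n)$, so the overall algorithm is correct with probability $\ge 2/3$ and uses $O((n^{4/3}+n\sqrt{k})\log n)=O((n^{4/3}+n\sqrt{k})\log k)$ comparisons.

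For $k<n^{1/10}$, I would analyse Algorithm~\ref{alg:2stopk_noisy} directly. The sample-complexity bookkeeping is straightforward: Lemma~\ref{lem:1topk} applied to a set of size $n^{1/3}$ uses $O(n^{2/3})$ comparisons, so running $200\log k$ parallel copies over each of the $n^{2/3}$ groups costs $O(n^{2/3}\cdot n^{2/3}\log k)=O(n^{4/3}\log k)$ comparisons in Round 1. Round 2 applies Algorithm~\ref{alg:1stopk_noisy} to the set $T$ of size $n^{2/3}$, which by Lemma~\ref{lem:1stopk_noisy_ub} costs $O((n^{2/3})^2\log k)=O(n^{4/3}\log k)$. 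Since $k<n^{1/10}$ implies $n\sqrt{k}\le n^{4/3}$, the total of $O(n^{4/3}\log k)$ fits the stated bound.

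The main obstacle is the correctness analysis in the small-$k$ case, which I would decompose into three events. (i) The top-$k$ items fall into $k$ distinct groups $S_{i_1},\dots,S_{i_k}$: a birthday-style argument gives failure probability at most $\binom{k}{2}/n^{2/3}\le n^{1/5}/n^{2/3}=o(1)$ since $k<n^{1/10}$. Conditioned on (i), each such top-$k$ item is the unique top-1 of its group. (ii) For each $j\in[k]$, the majority of the $200\log k$ independent runs of Algorithm~\ref{alg:1topk} on $S_{i_j}$ outputs that top-1 correctly: each run succeeds with probability $\ge 2/3$ by Lemma~\ref{lem:1topk}, so by a Chernoff bound the majority fails with probability at most $k^{-10}$, and a union bound over the $k$ relevant groups yields failure probability $\le k^{-9}$. (iii) Conditioned on (i) and (ii), $T$ contains the top-$k$ items of $N$ as its $k$ highest-ranked elements, so invoking Algorithm~\ref{alg:1stopk_noisy} on $T$ recovers sorted top-$k$ of $N$ correctly with probability $\ge 2/3$ by Lemma~\ref{lem:1stopk_noisy_ub}.

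Chaining the three events with a union bound gives total success probability $\ge 2/3-o(1)-k^{-9}$; to hit the $2/3$ threshold exactly I would tighten constants (e.g.\ strengthen each sub-algorithm's success probability to $5/6$ by a constant-factor blow-up in the comparison counts, which does not affect the asymptotic bound, or run a constant number of independent copies of the whole procedure and take the majority of their outputs, which also preserves the $O((n^{4/3}+n\sqrt{k})\log k)$ rate). The delicate point in this plan is really event (ii): one must afford enough repetitions to amplify each group's top-1 call from $2/3$ to $1-k^{-\Omega(1)}$, yet the budget is tight because we need $n^{2/3}\cdot n^{2/3}\log k$ comparisons in Round 1, which is exactly $O(n^{4/3}\log k)$ and cannot be replaced by $\Theta(\log n)$-amplification without exceeding the stated bound when $k$ is much smaller than $n$.
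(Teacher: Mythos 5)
Your proposal is correct and follows essentially the same route as the paper: split at $k=n^{1/10}$, handle large $k$ by $\Theta(\log n)$-fold repetition of the noiseless 2-round algorithm, and for small $k$ analyze Algorithm~\ref{alg:2stopk_noisy} via the same three events (distinct groups, correct group champions, correct sorted top-$k$ of $T$). The only difference is cosmetic: where you propose amplifying constants to clear the $2/3$ threshold, the paper simply checks that the three failure probabilities ($n^{1/5-2/3}$, $1/40$, and $8/27$) already sum to less than $1/3$.
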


\begin{proof}
When $k \geq n^{1/10}$, we have $\log(k) = \Theta(\log(n))$. We just simply use the 2-round sorted top-$k$ algorithm (Algorithm \ref{alg:rsorted1}) in the noiseless case and turn it into a 2-round sorted top-$k$ algorithm in the noisy case by repeating each comparison $\Theta(\log(n))$ times. The algorithm uses  $O((n^{4/3} + n \sqrt{k})\log(n)) = O((n^{4/3} + n \sqrt{k})\log(k))$ comparisons.

When $k < n^{1/10}$, we use Algorithm \ref{alg:2stopk_noisy}. It's clear that this algorithm uses $O(n^{4/3} \log(k)) = O((n^{4/3} + n \sqrt{k})\log(k))$ comparisons. We are going to show that this algorithm succeeds with probability $2/3$. Consider the following events:

\begin{itemize}
\item No two items in top-$k$ are placed in the same $S_i$. This event happens with probability at least $1-\frac{k^2}{n^{2/3}} \geq 1 - n^{1/5-2/3}$. 
\item For each $S_i$ that contains a top-$k$ item, $t_i$ is actually the top-1 of $S_i$. This event happens with probability $1-  k \cdot \frac{1}{40k} > 1-1/40$. 
\item In the second round, the output is the correct sorted top-$k$ of $T$. This happens with probability at least $1-8/27$. 
\end{itemize}

When all these events happen, it's easy to check that Algorithm \ref{alg:2stopk_noisy} outputs correctly. By union bound, all of these events happen with probability at least $1-n^{1/5-2/3}-1/40-8/27 > 2/3$. 
\end{proof}

\begin{lemma}
\label{lem:2stopk_noisy_lb}
Any 2-round algorithm needs $\Omega((n^{4/3} + n \sqrt{k})log(k))$ comparisons to output sorted top-$k$ correctly with probability at least $2/3$ in the noisy case. 
\end{lemma}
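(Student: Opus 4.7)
The plan is to adapt the swapping / likelihood-ratio argument of Lemma~\ref{lem:1stopk_noisy_lb} to two rounds, splitting into two regimes based on which term of $n^{4/3}+n\sqrt{k}$ dominates.

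\textbf{Regime 1: $k \geq n^{2/3}$, target $\Omega(n\sqrt{k}\log k)$.} This is the case sketched in the overview. Suppose toward contradiction some algorithm $A$ uses at most $c_0\, n\sqrt{k}\log k$ total comparisons for a sufficiently small constant $c_0$, and (WLOG) is deterministic. Divide the $n$ labels into $\sqrt{k}$ "chunks" of size $n/\sqrt{k}$ in rank order; in a uniformly random labeling this partitions the top-$k$ items into $\sqrt{k}$ chunks. First I would show that the first round leaves an $\Omega(1)$-fraction of candidate "same-chunk consecutive" pairs compared fewer than $0.1\log k$ times: the first-round comparison schedule is non-adaptive and involves at most $c_0\, n\sqrt{k}\log k$ pairs, so by an averaging argument over the random labeling the expected number of first-round comparisons any particular pair of items receives is $O(\log k /\sqrt k)$, and in a typical chunk (of $\binom{n/\sqrt k}{2}$ internal pairs) a constant fraction are touched $<0.1\log k$ times. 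Then I would bound the "posterior mass": conditional on the first-round transcript, there remain $\Omega(n\sqrt{k})$ label-pairs that could be a genuine consecutive same-chunk top-$k$ pair with $\Omega(1)$ conditional probability, because the first round does not reveal enough to pin the chunk identities of most items.

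Once this structural claim is established, the rest of the argument follows the 1-round proof. Let $g$ denote the (random) number of round-2 comparisons; since the total budget is $c_0 n\sqrt k \log k$, at most $O(c_0\, n\sqrt k)$ of the $\Omega(n\sqrt k)$ candidate consecutive pairs receive more than $c\log k$ total comparisons, for a suitable small constant $c$. Hence $\Omega(n\sqrt{k})$ candidate pairs remain "undersampled" with probability $\ge 1/2$. For any fixed permutation $\pi$ that achieves this, pick such an undersampled candidate $(i,j)$ with $|\pi(i)-\pi(j)|=1$ and consider $\pi^{i,j}$; as in Lemma~\ref{lem:1stopk_noisy_lb} the ratio $\Pr[H=h\mid\Pi=\pi]/\Pr[H=h\mid\Pi=\pi^{i,j}]$ is at most $2^{c\log k}=k^c$. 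Averaging over the $\Omega(k)$ distinct swaps that change the sorted top-$k$ and taking $c$ small enough that $k^c \ll k$ drives the correctness probability below $2/3$.

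\textbf{Regime 2: $k < n^{2/3}$, target $\Omega(n^{4/3}\log k)$.} For $k\le O(1)$ the bound reduces to the noisy 2-round top-$1$ lower bound $\Omega(n^{4/3})$ (inherited from the noiseless bound of \cite{AlonA88b} that we already invoked in Lemma~\ref{lem:stopk_lb2}). For general $k<n^{2/3}$ I would run the same chunk-and-swap scheme but inside a universe of $n^{2/3}$ "candidate top" items: partition those $n^{2/3}$ candidates into $\sqrt{k}$ chunks of size $n^{2/3}/\sqrt{k}$, and apply the averaging argument to the $\Theta(n^{4/3})$ candidate within-chunk pairs. A first-round budget of $c_0\, n^{4/3}\log k$ gives only $O(c_0\log k)$ comparisons per candidate pair on average, so a constant fraction of the $n^{2/3}\cdot\sqrt k$ candidate "same-chunk consecutive" pairs receive $<0.1\log k$ first-round comparisons. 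The posterior-mass and round-2 accounting are identical in structure to Regime 1, producing $\Omega(\sqrt k)$ undersampled consecutive pairs inside the top-$k$; the swap argument then yields error probability $>1/3$.

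\textbf{Main obstacle.} The technical heart, and the step I would spend the most care on, is the posterior-mass claim: arguing that after a round-1 transcript of only $c_0\cdot (\text{target budget})$ comparisons, the algorithm still faces $\Omega(\text{budget}/\log k)$ pairs that it cannot rule out as a same-chunk consecutive top-$k$ pair with constant probability. This is precisely where round-2 adaptivity bites, and the cleanest way I see is to condition on the round-1 transcript and control the posterior over permutations via a symmetry/coupling argument: any pair that received $<0.1\log k$ first-round comparisons has its two orientations within a constant likelihood ratio under the posterior, so the adversary cannot focus round 2 on the "right" $O(\text{budget}/\log k)$ pairs. Once this is in place, the rest is a faithful two-round replay of Lemma~\ref{lem:1stopk_noisy_lb}.
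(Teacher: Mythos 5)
Your overall architecture matches the paper's proof exactly: the same two regimes ($k\ge n^{2/3}$ with target $n\sqrt k\log k$, and $k<n^{2/3}$ with the argument run inside a candidate universe of $n^{2/3}$ items), the same chunking of the top-$k$ into blocks of size $n/\sqrt k$, a round-1 averaging argument plus a round-2 budget argument to produce undersampled consecutive pairs, and the swap/likelihood-ratio finish from Lemma \ref{lem:1stopk_noisy_lb}. However, the step you yourself flag as the technical heart is sketched with two quantitative premises that are wrong as stated, and fixing them changes the bookkeeping of the whole middle of the proof. First, your round-1 averaging is per pair (``the expected number of first-round comparisons any particular pair receives is $O(\log k/\sqrt k)$''), but the swap argument needs control of each item's \emph{total} number of round-1 comparisons to items in its own chunk: swapping $u$ with another item $u'$ of the same chunk changes the correct answers of \emph{all} comparisons of $u$ or $u'$ with same-chunk items, so the relevant likelihood ratio involves all of those, not just the single pair. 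This is why the paper defines $G_i$ as the even-ranked items compared to at most $0.1\log k$ items of $S_i$ in round 1, and bounds the total number of within-chunk round-1 comparisons ($\le 6c\,k\log k$ with probability $5/6$) before applying Markov per item.

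Second, a pair (or item) touched $0.1\log k$ times does \emph{not} have its two orientations within a constant likelihood ratio: the ratio is $2^{0.2\log k}=k^{0.2}$. Consequently your claims that each candidate pair ``could be a genuine consecutive pair with $\Omega(1)$ conditional probability'' and that ``$\Omega(k)$ distinct swaps'' survive do not follow from the sketch; conditioned on $(Q,H_1)$ each candidate survives only with probability $\Omega(k^{-0.2})$ relative to the base rate. The paper's resolution is to accept this $k^{0.2}$ distortion, prove the conditional bound $\Pr[\bigwedge_{u\in S}Y_u=0\mid Q,H_1]\le(1-\tfrac{1}{4k^{0.2}})^{|S|}$ by a counting/swapping argument over permutations, and invoke the generalized Chernoff bound (Theorem \ref{thm:gcb}) to obtain only $k^{1/4}$ undersampled genuinely-consecutive top-$k$ pairs --- which suffices because the final averaging only needs this count to exceed the likelihood-ratio bound $2^{0.2\log k}=k^{0.2}$. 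So the skeleton is right, but as written the posterior-mass step would fail; the correct version requires tracking the $k^{\Theta(1)}$ losses against each other rather than treating them as constants.
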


\begin{proof}
For notation convenience, we will wlog assume $k$ is even. For odd $k$'s, because sorted top-$(k-1)$ is an easier task than sorted top-$k$, we will just apply the lower bound of sorted top-$(k-1)$. 

Consider some algorithm $A$ that uses fewer than $c \cdot \max(n^{4/3}, n \sqrt{k})log(k)$ comparisons for $c = \frac{1}{10^4}$. We are going to show that $A$ outputs sorted top-$k$ incorrectly with probability $>1/3$. Let's assume $A$ labels items as $1,...,n$. Let $\Pi(i)$ be the actual ranking of item with label $i$, $\forall i \in [n]$. Before the algorithm makes any comparisons, $\Pi$ distributed as a uniform distribution over all permutations of $[n]$. Wlog we can assume $A$ is deterministic. We use $H_1$ to denote the random variable of the comparison results in the first round and $H_2$ to denote the random variable of the comparison results in the second round. We use $H = (H_1, H_2)$ to denote the full history.

Define $W$ to be the event that there exists $k^{1/4}$ pairs of items $(u,v)$ such that $|\Pi(u) - \Pi(v)| = 1$, $\Pi(u) \leq k$ , $\Pi(v)\leq k$ and $(u,v)$ are compared fewer than $0.2 \log(k)$ times in $H$. We will first prove $\Pr[W] \geq 2/3$, and then we will show that this implies $A$ outputs incorrectly with probability at least $>1/3$. 

We first prove $\Pr[W] \geq 2/3$. We consider two different cases depending on how large $k$ is. 

\begin{itemize}
\item Case 1: 
$k \geq n^{2/3}$. Set $m = \frac{k^{3/2}}{n}$. Define $S_i$ to be the set of items ranked in $\{(i-1) \cdot \frac{n}{\sqrt{k}} + 1,..., i \cdot \frac{n}{\sqrt{k}} \}$ for $i = 1,...,m$. $S_i$'s are random variables depending on $\Pi$. Let $Q$ be the collection of $S_1,...,S_m$ and $\Pi^{-1}(1),\Pi^{-1}(3),...,\Pi^{-1}(k-1)$ (these are labels of top-$k$ items with odd rankings). 

For $i \in [m]$, define $G_i$ to be the set of items in $S_i$ which satisfy the following:
\begin{itemize}
\item Have even rankings.
\item Are compared to at most $0.1\log(k)$ items in $S_i$ in the first round of $A$.
\end{itemize}
Notice that once $Q$ is fixed, all $G_i$'s are fixed. 

For $i \in [m]$, define $U_i$ to be the set of pairs of items $u,v$ such that
\begin{itemize}
\item $(u,v)$ are compared more than $0.1 \log(k)$ times in the second round of $A$. 
\item $u, v \in S_i$.
\item $u$ has an even ranking and $v$ has an odd ranking.
\end{itemize}
Notice that once $Q$ and $H_1$ are fixed, $U_i$'s are fixed.

Finally for $i \in [m]$, define $X_i$ be the number of pairs $(u,v)$ such that
\begin{itemize}
\item $u,v \in S_i$.
\item $u,v$ are compared at most $0.2 \log(k)$ times in the two rounds of $A$. 
\item There exists $l$ such that $\Pi(u) = 2l-1$ and $\Pi(v) = 2l$. 
\end{itemize}

Clearly, $(\exists i, X_i \geq k^{1/4})$ implies $W$, in other words, $\Pr[W] \geq \Pr[\sum_{i=1}^m X_i \geq k^{1/4}]\geq \Pr[\exists i, X_i \geq k^{1/4}]$. So it suffices to prove that $\Pr[\exists i, X_i \geq k^{1/4}] \geq 2/3$. Also notice that, once $Q$ is fixed, $X_i$'s are independent. 

Define $W_G$ to be the event in which there are at least $2m/3$ $i$'s with $|G_i| \geq \frac{3n}{8\sqrt{k}}$. Notice that once $Q$ is fixed, $W_G$ is also fixed. We want to show that $\Pr[W_G] \geq 5/6$. First of all, each comparison in the first round is a comparison between two items in the same $S_i$ with probability at most $\frac{k}{n} \cdot \frac{1}{\sqrt{k}}$. So with probability at least $5/6$, the number of such comparisons is at most $6c \cdot k \log k$, as there are at most $c \cdot  n \sqrt{k}log(k)$ comparisons in the first round. In this case, consider the top-$k$ items with even rankings. Since $6c \cdot 2 \leq \frac{1}{240}$, we know that at most $ k/24$ of them are compared to items in the same $S_i$ more than $0.1\log(k)$ times. Therefore, at most $m/3$ $G_i$'s can have $|G_i| \leq \frac{n}{2\sqrt{k}} - \frac{n}{8\sqrt{k}} = \frac{3n}{8\sqrt{k}}$. To sum up, we have $\Pr[W_G] \geq 5/6$.

Now consider $U_i$'s. We know that there are at most $c \cdot n \sqrt{k} \log(k)$ comparisons in the second round, therefore we have $\sum_{i=1}^m |U_i| \leq \frac{c \cdot n \sqrt{k} \log(k)}{0.1\log(k)} =10 c \cdot n \sqrt{k}$. Since $c \leq \frac{1}{10 \cdot 3 \cdot 32}$, at most $m/3$ $U_i$'s can have $|U_i| \geq \frac{n^2}{32k}$.  

Now we fix $Q$ and $H_1$ such that $W_G$ happens. We know that there are at least $m/3$ $i$'s satisfying both $|G_i| \geq \frac{n}{4\sqrt{k}}$ and $|U_i| \leq \frac{n^2}{32k}$. Consider any such $i$. Define $B_i$ to be the set of items such that for each $u \in B_i$, 
\begin{itemize}
\item $u \in G_i$.
\item There are at most $\frac{n}{8\sqrt{k}}$ $v$'s such that $(u,v) \in U_i$. 
\end{itemize}
By averaging argument, we have $|B_i| \geq |G_i| - |U_i| \cdot \frac{8\sqrt{k}}{n} \geq \frac{n}{8\sqrt{k}}$. Now we throw away arbitrary items in $B_i$ to make it has size exactly $\frac{n}{8\sqrt{k}}$. 

For each item $u \in B_i$, define $Y_u$ to be 1 if $(u, \Pi^{-1}( \Pi(u) - 1)) \not \in U_i$, otherwise $Y_u$ is 0. Here $\Pi^{-1}( \Pi(u) - 1))$ is just the item ranks right above $u$ in $\Pi$. We have $X_i \geq \sum_{u \in B_i} Y_u$. Since we have already fix $Q$, so the items with odd rankings in $S_i$ have fixed rankings. Therefore $Y_u$ only depends on $\Pi(u)$. 

Now consider any subset $S \subseteq B_i$ and any $u \in S$. Define $S' = S \backslash \{u\}$. We fix $\Pi(v)$ for all $v$ in $S'$. Now consider the probability such that $Y_u = 1$. For any $\pi$ such that $Y_u = 0$, consider $u'$ such that $u' \in B_i$, $u' \not \in S'$ and $(u, \pi^{-1}(\pi(u)-1)) \not \in U_i$. We have at least $ \frac{3n}{8\sqrt{k}}- \frac{n}{8\sqrt{k}} - \frac{n}{8\sqrt{k}} =\frac{n}{8\sqrt{k}}$ many such $u'$. Define a different ranking $\pi^{(u,u')}$ such that it is the same as $\pi$ except the rankings of $u$ and $u'$ are swapped. Notice that with the same $H_1$, switching from $\Pi= \pi$ to $\Pi = \pi^{u,u'}$ does not change $Q$ and $\Pi(v)$ for all $v$ in $S'$, but changes $Y_u$ from 0 to 1.  We know that each of $u$ and $u'$ is compared to items in $S_i$ at most $0.1 \log(k)$ times in the first round of $A$, we have
\[
\frac{\Pr[H_1 =h_1 | \Pi = \pi]}{\Pr[H_1=h_1|\Pi = \pi^{u,u'}]} \leq \left(\frac{\frac{2}{3}}{\frac{1}{3}} \right) ^{0.2\log(k)} \leq k^{0.2}.
\]
As $\Pr[\Pi = \pi ] = \Pr[\Pi = \pi^{u,u'}]$, we have 
\[
\frac{\Pr[\Pi = \pi|H_1 = h_1]}{\Pr[\Pi = \pi^{u,u'}|H_1 = h_1]} = \frac{\Pr[H_1 =h_1 | \Pi = \pi]}{\Pr[H_1=h_1|\Pi = \pi^{u,u'}]} \leq k^{0.2}.
\]
We know that each $\pi$ corresponds to at least $ \frac{n}{8\sqrt{k}}$ such $\pi^{u,u'}$ and each $\pi^{u,u'}$ corresponds to at most $\frac{n}{2\sqrt{k}}$ such $\pi$. Therefore, 
\begin{align*}
\Pr[Y_u = 0| Q,H_1, \Pi(v) ~~\forall v \in S']  \leq \frac{ \frac{n}{8\sqrt{k}}}{\frac{n}{2\sqrt{k}} \cdot k^{0.2}}  \cdot \Pr[Y_u = 1| Q,H_1, \Pi(v) ~~\forall v \in S']
\end{align*}
So we have
\[
\Pr[Y_u = 0| Q,H_1, \Pi(v) ~~\forall v \in S']  \leq 1- \frac{1}{4k^{0.2}}.
\]
And this implies
\[
\Pr[Y_u = 0| Q,H_1, Y_v = 0~~ \forall v \in S']  \leq 1- \frac{1}{4k^{0.2}}.
\]
For any $S \subseteq B_i$, 
\[
\Pr\left[\bigwedge_{u \in S} Y_u = 0|Q,H_1\right] \leq \left(1-\frac{1}{4k^{0.2}}\right)^{|S|}.
\]
Then we have 
\begin{align*}
&\Pr\left[\sum_{u \in B_i} Y_u \geq k^{1/4} \right] \\
\geq &\Pr\left[ \sum_{u \in B_i} Y_u \geq \frac{n}{8\sqrt{k}} \cdot \frac{1}{4k^{0.2}} \cdot \frac{1}{2}\right] ~~~\text{(for large enough } n\text{)}\\
\geq &\exp\left(-\frac{n}{4\sqrt{k}} \cdot \frac{1}{4k^{0.2}} \cdot \frac{1}{2} \cdot \left(\frac{1}{2}\right)^2\right) ~~~\text{(by Theorem \ref{thm:gcb} and Fact \ref{fact:kl})}\\
\geq & 5/6 ~~~\text{(for large enough } n\text{)}\\
\end{align*}

To sum up, we have
\begin{align*}
\Pr[W] &\geq  \Pr\left[\exists i,  X_i \geq k^{1/4}\right] \\
&\geq \sum_{q,h_1: W_G\text{ happens}} \Pr[Q= q,H_1 =h_1] \cdot \Pr\left[\exists i, X_i \geq k^{1/4}|Q=q,H_1 =h_1\right]\\
&\geq \Pr[W_G] \cdot \frac{5}{6}\\
&\geq \frac{5}{6} \cdot \frac{5}{6} > \frac{2}{3}.
\end{align*}
\item  Case 2: $k < n^{2/3}$. We are going to use a similar proof strategy as Case 1 (with slight changes). 

Define $Q$ to be the set of top-$n^{2/3}$ items and define $G$ to be the set of items in $Q$ such that for each $u \in G$, $u$ is compared to at most $0.1\log(k)$ items in $Q$ in the first round of $A$. Notice that once $Q$ is fixed, $G$ is also fixed. 

Define $U$ to be the set of pair $(u,v)$ such that
\begin{itemize}
\item $(u,v)$ are compared more than $0.1 \log(k)$ times in the second round of $A$.
\item $u,v \in Q$. 
\end{itemize}
$U$ is fixed once $Q$ and $H_1$ are fixed.
Finally for $l = 1,..., k/2$, define $Y_l$ to be 1 if the following is true (otherwise $Y_l = 0$):
\begin{itemize}
\item Let $u = \Pi^{-1} (2l-1)$ and $v = \Pi^{-1}(2l)$. 
\item $u,v \in G$.
\item $(u,v) \not \in U$. 
\end{itemize}
Clearly, $\sum_{l=1}^{k/4} Y_l \geq k^{1/4}$ implies $W$. In other words, $\Pr[W]  \geq \Pr[ \sum_{l=1}^{k/4} Y_l \geq k^{1/4}]$. It suffices to prove $ \Pr[ \sum_{l=1}^{k/2} Y_l \geq k^{1/4}] \geq 2/3$. 

Define $W_G$ to be the event that $|G| \geq \frac{3}{4} n^{2/3}$. First of all, each comparison in the first round is a comparison between two items in $Q$ with probability at most $\frac{1}{n^{2/3}}$. So with probability at most $5/6$. the number of such comparisons is at most $6c \cdot  n^{2/3} \log(k)$ as there are at most $c n^{4/3}\log(k)$ comparisons in the first round of $A$. In this case, consider items in $Q$. Since $6c \cdot 2 \leq \frac{1}{40}$, at most $\frac{1}{4} n^{2/3}$ are compared to items in $Q$ more than $0.1\log(k)$ times. Therefore $|G| \geq \frac{3}{4} n^{2/3}$. To sum up, we have $\Pr[W_G] \geq 5/6$. 

Now consider $U$. We know that there are at most $c n^{4/3} \log(k)$ comparisons in the second round of $A$. We have $|U| \leq \frac{c n^{4/3} \log(k)}{0.1 \log(k)} = 10c n^{4/3} \leq \frac{1}{32} n^{4/3}$. 

Now we fix $Q$ and $H_1$ such that $W_G$ happens. Consider any sets $S \subseteq \{1,...,k/4\}$. Let $l$ be the largest item in $S$ and let $S' = S \backslash \{l\}$. We also fix $\Pi^{-1}(2r-1), \Pi^{-1}(2r)$ for all $r \in S'$. Now consider the probability such that $Y_l = 1$. For any $\pi$ such that $Y_l =0$, consider $u$ and $v$ such that the followings are satisfied
\begin{itemize}
\item $u \neq v$
\item $u,v \in G$. 
\item $\pi(u) > 2l$, $\pi(v) > 2l$. 
\item $(u,v) \not \in U$. 
\end{itemize}
We have at least $\left(\frac{3}{4} n^{2/3} - k/2\right)^2 - \frac{1}{32} n^{4/3} \geq \frac{1}{32} n^{4/3}$  such $(u,v)$ pairs.

Define a different ranking $\pi^{u,v}$ such that $\pi^{u,v}(u) = 2l-1$, $\pi^{u,v}(v) = 2l$ and the relative positions of other items in $\pi^{u,v}$ is the same as in $\pi$. Notice that with the same $H_1$, switching from $\Pi = \pi$ to $\Pi = \pi^{u,v}$ does not change $Q$ and $\Pi^{-1}(2r-1), \Pi^{-1}(2r), \forall  r\in S'$, but changes $Y_l$ from 0 to 1. We know that each of $u$ and $v$ is compared to items in $Q$ at most $0.1\log(k)$ times in the first round of $A$, we have
\[
\frac{\Pr[H_1 =h_1 | \Pi = \pi]}{\Pr[H_1=h_1|\Pi = \pi^{u,u'}]} \leq \left(\frac{\frac{2}{3}}{\frac{1}{3}} \right) ^{0.2\log(k)} \leq k^{0.2}.
\]
As $\Pr[\Pi = \pi ] = \Pr[\Pi = \pi^{u,v}]$, we have 
\[
\frac{\Pr[\Pi = \pi|H_1 = h_1]}{\Pr[\Pi = \pi^{u,v}|H_1 = h_1]} = \frac{\Pr[H_1 =h_1 | \Pi = \pi]}{\Pr[H_1=h_1|\Pi = \pi^{u,v}]} \leq k^{0.2}.
\]
We know that each $\pi$ corresponds to at least $ \frac{1}{32} n^{4/3}$ such $\pi^{u,v}$ and each $\pi^{u,v}$ corresponds to at most $n^{4/3}$ such $\pi$. Therefore, 
\begin{align*}
&\Pr[Y_u = 0| Q,H_1,\Pi^{-1}(2r-1), \Pi^{-1}(2r),\forall  r\in S']  \\
\leq &\frac{  \frac{1}{32} n^{4/3}}{n^{4/3} \cdot k^{0.2}}  \cdot \Pr[Y_u = 1| Q,H_1, \Pi^{-1}(2r-1), \Pi^{-1}(2r), \forall  r\in S']
\end{align*}
So we have
\[
\Pr[Y_u = 0| Q,H_1,\Pi^{-1}(2r-1), \Pi^{-1}(2r), \forall  r\in S']  \leq 1- \frac{1}{32k^{0.2}}.
\]
And this implies
\[
\Pr[Y_u = 0| Q,H_1, Y_r = 0, \forall r\in S']  \leq 1- \frac{1}{32k^{0.2}}.
\]
For any $S \subseteq \{1,...,k/4\}$, 
\[
\Pr\left[\bigwedge_{l \in S} Y_l = 0|Q,H_1\right] \leq \left(1-\frac{1}{32k^{0.2}}\right)^{|S|}.
\]

Then we have 
\begin{align*}
&\Pr\left[\sum_{l \in [k/4]} Y_r \geq k^{1/4} \right] \\
\geq &\Pr\left[ \sum_{l \in [k/4]} Y_l \geq \frac{k}{4} \cdot \frac{1}{32k^{0.2}} \cdot \frac{1}{2}\right] ~~~\text{(for large enough } k\text{)}\\
\geq &\exp\left(- \frac{k}{4} \cdot \frac{1}{32k^{0.2}} \cdot \frac{1}{2} \cdot \left(\frac{1}{2}\right)^2\right) ~~~\text{(by Theorem \ref{thm:gcb} and Fact \ref{fact:kl})}\\
\geq & 5/6 ~~~\text{(for large enough } k\text{)}\\
\end{align*}

To sum up, we have
\begin{align*}
\Pr[W] &\geq  \Pr\left[\sum_{l \in [k/4]} Y_r \geq k^{1/4} \right] \\
&\geq \sum_{q,h_1: W_G\text{ happens}} \Pr[Q= q,H_1 =h_1] \cdot \Pr\left[\sum_{l \in [k/4]} Y_r \geq k^{1/4} |Q=q,H_1 =h_1\right]\\
&\geq \Pr[W_G] \cdot \frac{5}{6}\\
&\geq \frac{5}{6} \cdot \frac{5}{6} > \frac{2}{3}.
\end{align*}
\end{itemize}

Now we have $\Pr[W] \geq 2/3$. Define $A(H)$ to be the output of the algorithm given history $H$. For a ranking $\pi$ and two labels $(u,v)$, define $\pi^{u,v}$ as the ranking which is the same as $\pi$ except the positions of $u$ and $v$ are swapped. We know that if $\pi(u) \leq k$ and $\pi(v) \leq k$, the sorted top-$k$ of $\pi$ is different from the sorted top-$k$ of $\pi^{u,v}$. For some $\pi$ and $h$, if $W$ happens, define $P(\pi, h)$ as the set of $\sqrt{k}$ pairs of items $(u,v)$ such that $|\pi(u) - \pi(v)| = 1$, $\pi(u) \leq k$ , $\pi(v) < k$ and $(u,v)$ are compared fewer than $0.2 \log(k)$ times in $h$.

Now we have
\begin{align*}
& \Pr[W, A(H) = \text{sorted top-}k \text{ of } \Pi] \\
=& \sum_{\pi, h \text{ s.t. } W \text{ happens}}\Pr[ \Pi = \pi, H = h] \cdot \1_{A(h) = \text{sorted top-}k \text{ of } \pi}  \\ 
\leq &\frac{1}{k^{1/4}} \sum_{\pi, h \text{ s.t. } W \text{ happens}} \sum_{(u,v) \in P(\pi,h)}2^{0.2\log(k)} \cdot \Pr[ \Pi = \pi^{u,v}, H = h] \cdot \1_{A(h) = \text{sorted top-}k \text{ of } \pi} \\
\leq &\frac{2^{0.2 \log(k)}}{k^{1/4}}\sum_{\pi, h \text{ s.t. } W \text{ happens}}\Pr[ \Pi = \pi, H = h] \cdot \1_{A(h) \neq \text{sorted top-}k \text{ of } \pi} \\
<&  \Pr[W, A(H) \neq \text{sorted top-}k \text{ of } \Pi] \\
\end{align*}

Therefore, we have
\begin{align*}
\Pr[W] &= \Pr[W, A(H) = \text{sorted top-}k \text{ of } \Pi] + \Pr[W, A(H) \neq \text{sorted top-}k \text{ of } \Pi] \\
&< 2 \Pr[W, A(H) \neq \text{sorted top-}k \text{ of } \Pi].
\end{align*}

Finally we get
\[
 \Pr[A(H) \neq \text{sorted top-}k \text{ of } \Pi] \geq  \Pr[W, A(H) \neq \text{sorted top-}k \text{ of } \Pi] > \frac{1}{2} \cdot \Pr[W] \geq 1/3.
\]

\end{proof}



\section{Generalized Chernoff Bound}
In the proofs, we use the generalized Chernoff bound of \cite{PS97} as stated below. 

\begin{theorem}[\cite{PS97}]
\label{thm:gcb}
Let $X_1,...,X_n$ be Boolean random variables such that, for some $0\leq \delta \leq 1$, we have that, for every subset $S \subseteq [n]$, $\Pr[\bigwedge_{i \in S} X_i = 1] \leq \delta^{|S|}$, then for any $0\leq \delta \leq \gamma \leq 1$, $\Pr[\sum_{i=1}^n X_i \geq \gamma n] \leq \exp(-n \cdot \D_e(\gamma \| \delta))$. Here $\D_e(\cdot \| \cdot )$ is the Kullback-Leibler divergence defined below.  
\end{theorem}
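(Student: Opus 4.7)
The plan is to follow the classical moment generating function (MGF) route for Chernoff-type bounds, feeding in the hypothesis on conjunctions at the one place where it is needed. The starting point is Markov's inequality: for every $t > 0$,
$$\Pr\!\left[\sum_{i=1}^n X_i \geq \gamma n\right] \;\leq\; e^{-t\gamma n}\cdot \E\!\left[e^{t\sum_{i=1}^n X_i}\right],$$
so it suffices to bound the MGF $\E[e^{t \sum_i X_i}]$ using only the assumption that $\Pr[\bigwedge_{i \in S} X_i = 1] \leq \delta^{|S|}$ for every $S \subseteq [n]$.

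The key algebraic step is the Boolean linearization $e^{t X_i} = 1 + (e^t - 1)X_i$, valid because $X_i \in \{0,1\}$. Expanding the product and using linearity of expectation yields
$$\E\!\left[\prod_{i=1}^n e^{tX_i}\right] \;=\; \sum_{S \subseteq [n]} (e^t-1)^{|S|}\, \E\!\left[\prod_{i \in S} X_i\right] \;=\; \sum_{S \subseteq [n]} (e^t-1)^{|S|}\Pr\!\left[\bigwedge_{i \in S} X_i = 1\right],$$
where the second equality uses that the product of Boolean variables equals the indicator of the corresponding conjunction. Applying the hypothesis term-by-term (which requires $t \geq 0$ so that $(e^t-1)^{|S|} \geq 0$) and then summing by the binomial theorem gives
$$\E\!\left[e^{t\sum_{i} X_i}\right] \;\leq\; \sum_{s=0}^n \binom{n}{s} \bigl((e^t-1)\delta\bigr)^s \;=\; \bigl(1 + (e^t-1)\delta\bigr)^n,$$
which is exactly the MGF upper bound one would get for $n$ i.i.d.\ Bernoulli($\delta$) variables.

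The last step is to optimize $t$. Combining the two displays gives $\tfrac1n \log \Pr[\sum X_i \geq \gamma n] \leq -t\gamma + \log(1 + (e^t-1)\delta)$, and the standard choice $e^t = \tfrac{\gamma(1-\delta)}{\delta(1-\gamma)}$ (nonnegative since $\gamma \geq \delta$) reduces the right-hand side to $-\D_e(\gamma \| \delta)$ after rearrangement; the degenerate cases $\delta = 0$ or $\gamma \in \{0,1\}$ are handled by taking limits, with the convention $0 \log 0 = 0$. I expect the main subtlety to be conceptual rather than computational: the hypothesis constrains only the \emph{upper} tail of each conjunction, not lower tails or general correlations, so one has to arrange the MGF expansion to feature only positive coefficients multiplying $\E[\prod_{i \in S} X_i]$. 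The identity $e^{tX_i} = 1 + (e^t-1)X_i$ with $t \geq 0$ accomplishes this precisely, and it is what makes the proof go through without any independence assumption on the $X_i$.
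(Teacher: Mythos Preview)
Your proof is correct, but note that the paper does not actually prove this theorem: it is quoted from \cite{PS97} (Panconesi--Srinivasan) as a black-box tool, so there is no ``paper's own proof'' to compare against. That said, your MGF argument---linearizing $e^{tX_i} = 1 + (e^t-1)X_i$, expanding the product over subsets, applying the conjunction hypothesis termwise (using $t \geq 0$ so the coefficients are nonnegative), and then optimizing $t$---is precisely the standard proof of this generalized Chernoff bound, and is essentially how the cited reference establishes it.
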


\begin{definition}[Kullback-Leibler Divergence]
The Kullback-Leibler divergence $\D_e(p\| q) = p \ln\left(\frac{p}{q}\right) + (1-p)\ln\left(\frac{1-p}{1-q}\right)$ for $0\leq p,q \leq 1$. 
\end{definition}

We have the following fact which is used to approximate the Kullback-Leibler divergence.

\begin{fact}
\label{fact:kl}
For $0 \leq p,q\leq 1$, 
\[
\D_e(p\| q) \geq \frac{(p-q)^2}{2\max(p,q)} + \frac{(p-q)^2}{2\max(1-p,1-q)}.
\]
\end{fact}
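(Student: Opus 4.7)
The target inequality has the form $D_e(p\|q) \geq R_1 + R_2$ where $R_1$ depends on $\max(p,q)$ and $R_2$ depends on $\max(1-p,1-q)$, which strongly suggests splitting the KL divergence into two pieces that can be bounded separately. The natural split is to rewrite
\[
D_e(p\|q) = \bigl[p\ln(p/q) - (p-q)\bigr] + \bigl[(1-p)\ln((1-p)/(1-q)) + (p-q)\bigr] = g(p,q) + g(1-p,1-q),
\]
where $g(x,y) := x\ln(x/y) - (x-y)$. The $\pm(p-q)$ cancel in the sum, so the identity is immediate; and each summand has exactly the ``Bregman divergence of $x\ln x$'' shape, so I would expect each to give one of the two terms on the right-hand side of the fact.

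The plan is then to prove the single-term inequality $g(x,y) \geq (x-y)^2/(2\max(x,y))$ for all $0 \leq y,x \leq 1$ and apply it twice (once to $(p,q)$ and once to $(1-p,1-q)$). For fixed $y$, I would treat $g$ as a function of $x$ and note that $g(y,y)=0$, $\partial_x g(x,y) = \ln(x/y)$ vanishes at $x=y$, and $\partial_x^2 g(x,y) = 1/x$. The Taylor expansion with integral remainder then gives the clean representation
\[
g(x,y) \;=\; \int_y^x \frac{x - s}{s}\, ds.
\]

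The final step is to bound the integrand. If $x > y$, then $s \leq x = \max(x,y)$ on the integration interval, so $(x-s)/s \geq (x-s)/\max(x,y)$, and integrating gives $g(x,y) \geq (x-y)^2/(2\max(x,y))$. If $x < y$, I would first flip the integral to $\int_x^y (s-x)/s\, ds$ and use $s \leq y = \max(x,y)$ to get the same bound $(x-y)^2/(2\max(x,y))$. The case $x = y$ is trivial (both sides are $0$), and the boundary cases $y \in \{0,1\}$ and $x \in \{0,1\}$ are handled by the usual conventions ($0\ln 0 = 0$ and $D_e = +\infty$ when $q \in \{0,1\}$ while $p$ is not).

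I do not anticipate any real obstacle: the only minor subtlety is tracking which of $x,y$ realizes the maximum so the one-sided bound $s \leq \max(x,y)$ is valid throughout the integration interval, and handling the boundary/degenerate cases where one argument is $0$ or $1$. After proving the scalar lemma for $g$, applying it to $(p,q)$ and then to $(1-p,1-q)$ and adding yields exactly the stated fact.
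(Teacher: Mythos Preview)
Your proof is correct. The decomposition $D_e(p\|q) = g(p,q) + g(1-p,1-q)$ with $g(x,y) = x\ln(x/y) - (x-y)$ is valid, the integral representation $g(x,y) = \int_y^x \frac{x-s}{s}\,ds$ checks out, and the bound $s \leq \max(x,y)$ on the integration interval gives exactly $(x-y)^2/(2\max(x,y))$ in both the $x>y$ and $x<y$ cases. The boundary cases are handled by the standard conventions as you note.

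The paper itself states Fact~\ref{fact:kl} without proof, so there is no ``paper's approach'' to compare against; your argument is a clean and complete justification of the fact.
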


\end{document}